\newcommand{\blind}{1}
\titleformat{\section}{\large\normalfont\centering\bfseries}{\thesection.}{1em}{\uppercase}
\titleformat{\subsection}{\normalfont\normalsize\bfseries}{\thesubsection}{1em}{}
\renewcommand{\thesection}{\arabic{section}}
\renewcommand{\thesubsection}{\arabic{section}.\arabic{subsection}}
\newtheorem{lemma}{Lemma}
\newtheorem{theorem}{Theorem}
\newtheorem{proposition}{Proposition}
\newtheorem{corollary}{Corollary}
\theoremstyle{definition}
\newtheorem{definition}{Definition}
\theoremstyle{remark}
\newtheorem{remark}{Remark}
\DeclareMathOperator*{\argmin}{arg~min~}
\newcommand{\vertiii}[1]{{\left\vert\kern-0.25ex\left\vert\kern-0.25ex\left\vert #1 
		\right\vert\kern-0.25ex\right\vert\kern-0.25ex\right\vert}}
\newcommand{\vertii}[1]{{\left\vert\kern-0.25ex\left\vert #1 
		\right\vert\kern-0.25ex\right\vert}}
\newcommand{\smalliii}[1]{{\vert\kern-0.25ex\vert\kern-0.25ex\vert #1 \vert\kern-0.25ex\vert\kern-0.25ex\vert}}
\newcommand{\smallii}[1]{{\vert\kern-0.25ex\vert #1\vert\kern-0.25ex\vert}}
\newcommand{\VAR}{\mathrm{VAR}}
\newcommand{\R}{\mathbb{R}}
\newcommand{\I}{\mathrm{I}}
\newcommand{\E}{\mathbb{E}}
\newcommand{\Cov}{\text{Cov}}
\newcommand{\tr}{\text{trace}}
\newcommand{\rank}{\text{rank}}
\newcommand{\op}{\text{op}}
\newcommand{\F}{\text{F}}
\newcommand{\PP}{\mathbb{P}}
\newcommand{\RERR}{\mathrm{RErr}}
\newcommand{\bunderline}[1]{\underline{#1\mkern-2mu}\mkern2mu }
\newsavebox{\@brx}
\newcommand{\llangle}[1][]{\savebox{\@brx}{\(\m@th{#1\langle}\)}%
	\mathopen{\copy\@brx\mkern2mu\kern-0.9\wd\@brx\usebox{\@brx}}}
\newcommand{\rrangle}[1][]{\savebox{\@brx}{\(\m@th{#1\rangle}\)}%
	\mathclose{\copy\@brx\mkern2mu\kern-0.9\wd\@brx\usebox{\@brx}}}
\begin{document}

\def\spacingset#1{\renewcommand{\baselinestretch}%
{#1}\small\normalsize} \spacingset{1}

\if1\blind
{
	\title{\bf\large Approximate Factor Models with Strongly Correlated Idiosyncratic Errors}
	\author{Jiahe Lin \thanks{Jiahe Lin, Department of Statistics, University of Michigan, Ann Arbor, MI 48109 (e-mail: jiahelin@umich.edu)}\\
		Department of Statistics, University of Michigan \vspace*{2mm}\\ 
		and  \vspace*{2mm} \\
		George Michailidis \thanks{George Michailidis, Department of Statistics and the Informatics Institute, University of Florida, Gainesville, FL 32611 (e-mail: gmichail@ufl.edu)} \\
		Department of Statistics and the Informatics Institute, University of Florida}
	\date{\vspace{-5ex}}
	\maketitle
} \fi

\if0\blind
{
	\bigskip
	\bigskip
	\bigskip
	\title{\bf\large Approximate Factor Models with Strongly Correlated Idiosyncratic Errors}
	\date{\vspace{-5ex}}
	\maketitle
	\medskip
} \fi
\thispagestyle{empty}
\bigskip
\begin{abstract}
	We consider the estimation of approximate factor models for time series data, where strong serial and cross-sectional correlations amongst the idiosyncratic component are present. This setting comes up naturally in many applications, but existing approaches in the literature rely on the assumption that such correlations are weak, leading to mis-specification of the number of factors selected and consequently inaccurate inference. In this paper, we explicitly incorporate the dependent structure present in the idiosyncratic component through lagged values of the observed multivariate time series. We formulate a constrained optimization problem to estimate the factor space and the transition matrices of the lagged values {\em simultaneously}, wherein the constraints reflect the low rank nature of the common factors and the sparsity of the transition matrices. We establish theoretical properties of the obtained estimates, and introduce an easy-to-implement computational procedure for empirical work. The performance of the model and the implementation procedure is evaluated on synthetic data and compared with competing approaches, and further illustrated on a data set involving weekly log-returns of 75 US large financial institutions for the 2001--2016 period.
\end{abstract}

\noindent%
{\it Keywords:} convex optimization; alternating minimization; convergence; high-probability error bounds.
\vfill

\clearpage
\setcounter{page}{1}
\spacingset{1} 

\section{Introduction.}\label{sec:intro}

Factor models are widely used in a number of scientific fields for reducing the dimension of data sets comprising of a large number of variables \citep{anderson2003introduction}. A factor model assumes that each variable under consideration can be expressed as a linear combination of a small number of {\em latent} factors plus an idiosyncratic component 
(error term). Co-movements among the variables can be accounted for by these few factors thus aiding interpretation. When {\em exact} factor models are used in the analysis of cross-sectional data, it is assumed that the idiosyncratic components are mutually uncorrelated \citep{anderson2003introduction}. However, for time series data such assumptions are often too restrictive, especially if a large panel of time series is considered where the common factors may not fully capture all relationships among the observed time series; in this case, it is of interest to examine {\em approximate} factor models that also allow for correlations amongst the idiosyncratic components.

Such an approximate factor model was introduced in \citet{chamberlain1983arbitrage} for the analysis of portfolios comprising of a large number of assets. Since then, a number of papers have appeared in the literature investigating properties of such approximate factor models, under the assumption that the correlations between the common factors and the idiosyncratic component, as well as those amongst the idiosyncratic components are {\em weak}. Formally, the approximate factor model is defined as 
\begin{equation}\label{approximate_factor}
X_t = \Lambda F_t + u_t, \ \ t=1,\cdots,T,
\end{equation}
where $X_t$ is a vector of $p$-dimensional time series, $F_t$ a $K$-dimensional latent factor process, $\Lambda$ a $p\times K$ matrix of {\em factor loadings} and $u_t$ the vector of idiosyncratic components. It is often further assumed that the factor process exhibits Vector Autoregressive dynamics, namely $F_t = \sum_{i=1}^{q} \Phi_i F_{t-i}+\eta_t$, where $\eta_t$ is a serially uncorrelated error process that is independent across its coordinates, and $\Phi_i$ are $K\times K$ transition matrices. The model in~\eqref{approximate_factor} is typically estimated through principal component (PC) decomposition, which operates under the assumption that as the time series panel size $p\rightarrow\infty$, the leading $K$ eigenvalues of $\Sigma_X:=\mathbb{E}(X_tX_t^\top)$ diverge, whereas all eigenvalues of $\Sigma_u:=\mathbb{E}(u_tu_t^\top)$ are bounded, thus enabling the separation between the common factors and the idiosyncratic components. Some key theoretical results for this model are given in \citet{bai2002determining,bai2003inferential}, where asymptotic normality of the estimated factors and factor loadings\footnote{up to some invertible transformation} obtained from PC analysis is established, under a $\sqrt{p}/T\rightarrow 0$ scaling for the former result and a $\sqrt{T}/p\rightarrow 0$ scaling for the latter. Further, if $T/p\rightarrow 0$, then the maximum time-indexed deviation of the estimated factors relative to their true values vanishes. In later work, \citet{stock2005implications} consider the same factor model representation, but each coordinate of $u_t$ is allowed to exhibit serial correlation, and assumed to be uncorrelated with $F_t$ across all time leads and lags. By decorrelating the coordinates of the $u_t$ error process\footnote{With a slight abuse of notion, here we use $F_t$ to denote the term that collects the lags of the actual factors that enter the model after decorrelation.}, the model can written in the form of 
\begin{equation}\label{SW-formulation}
X_t = \Lambda F_t + D(L)X_{t-1} + \epsilon_t,
\end{equation}
where $D(L)=\text{diag}(\delta_1(L),\dots,\delta_p(L))$ is a {\em diagonal} matrix with each entry being the autoregressive polynomial corresponding to coordinates of $u_{t}$, while $\epsilon_t$ is a pure noise term that is neither cross-sectionally nor serially correlated. 

The presence of strongly correlated idiosyncratic components in the model can lead to distorted estimation and inference, resulting in overestimation of the number of factors \citep{greenaway2012estimating} and is detrimental for forecasting purposes \citep{anderson2007forecasting}. Within the DFM framework, a common remedy entails the inclusion of lagged terms of $X_t$ \citep[e.g.][]{anderson2007forecasting,carare2010spillovers,liu2013modelling,eichengreen2012subprime}, which however augments the number of model parameters at the rate of $p^2$. To overcome the technical issues arising from jointly estimating a large number of parameters, these methods either implicitly assume a small panel size \citep[e.g.][]{anderson2007forecasting} so that their application does not suffer from the curse of dimensionality, or resort to estimating $p\choose2$ models based on pairwise univariate series from the $X_t$ components \citep[e.g.][]{liu2013modelling}. 

In a related line of work, \citet{forni2000generalized} introduced the generalized dynamic factor model (GDFM) framework that dictates the existence of two {\em mutually orthogonal} processes that capture the common and idiosyncratic components, respectively. The dynamic factors spanning the common space can be general \mbox{$L^2$-integrable} processes and estimated through principal components in the frequency domain \citep{forni2005generalized,forni2015dynamic}.

Despite the generality of the GDFM framework, whose formulation ensures orthogonality between the independent and identically distributed noise process and the common space, its common space recovery relies on estimated spectral density matrices that can
exhibit numerical instabilities when the dimensionality of $X_t$ becomes large \citep{fiecas2014data}. On the other hand, the aforementioned shortcomings of the DFM framework can be largely mitigated, through the inclusion of lagged terms and a formulation that {\em jointly} estimates model parameters via a computationally stable procedure.
To this end, following \citet{stock2005implications}, we write the approximate factor model in the form given in~\eqref{SW-formulation}, but allow for $D(L)$ to exhibit cross-correlation structure; i.e. $D(L)$ is not restricted to be diagonal, but merely {\em sparse}. Hence, the dynamics of the $p$ time series in $X_t$ can be written in the form of a lag-adjusted static factor model, with the lag term impacting the current values through {\em sparse} transition matrices. 
Through cross-sectional de-correlation, $\epsilon_t$ becomes a strictly exogenous noise process comprising of independent and identically distributed shocks, and the model representation aligns with that under the GDFM framework, with the lagged term(s) and the factors collectively capturing the common space and accounting for all pervasive shocks.
In the proposed model specification, two key quantities are the space spanned by the common factors (the ``factor hyperplane" henceforth), and the sparse transition matrices of the time lags of the observable process. To obtain their estimates, we formulate a {\em penalized maximum likelihood} objective function, introduce a block-coordinate descent algorithm to solve the posited optimization problem and establish finite sample high-probability error bounds for the convergent solution estimates. Finally, note that the transition matrices of the lagged values can provide useful and interpretable information, as shown in our application study and noted in \citet{eichengreen2012subprime,liu2013modelling}.

Together with the proposed model specification that allows for strongly correlated idiosyncratic components in approximate factor models, key contributions of this work entail the convex formulation that leads to joint estimation of the model parameters, as well as the technical developments that provide insights on appropriately handling the interaction between the {\em latent} factor space and the lagged space spanned by the past history of the observed process. In particular, the strategy used to establish error bounds is applicable to other high-dimensional statistical models involving simultaneously observed and latent components. 

The remainder of this paper is organized as follows. In Section~\ref{sec:problem}, we introduce our model setup, estimation procedure for model parameters, as well as steps for performing forecast. Theoretical properties of the proposed estimators are established in Section~\ref{sec:theory}, including their high-probability statistical error bound, and convergence property.
In Section~\ref{sec:simulation}, we introduce an empirical implementation procedure and present the performance evaluation of the estimates based on synthetic data. In Section~\ref{sec:realdata}, an application of our model to weekly stock return data of large US financial institutions for the period from 2001 to 2017 period is considered. Finally, Section~\ref{sec:discussion} concludes the paper. 

\noindent 
{\em Notation.} Throughout this paper, for some generic matrix $A$ of dimension $m\times n$, we use $\smalliii{\cdot}$ to denote its matrix norms, including the operator norm $\smalliii{A}_{\op}$, the Frobenius norm $\smalliii{A}_{\F}$, the nuclear norm $\smalliii{A}_{*}$, $\smalliii{A}_1 = \max_{1\leq j\leq n}\sum_{i=1}^m |a_{ij}|$, and $\smalliii{A}_\infty = \max_{1\leq i\leq m}\sum_{j=1}^n|a_{ij}|$. We use $\|A\|_1=\sum_{i,j}|a_{ij}|$ and $\|A\|_\infty = \max_{i,j}|a_{ij}|$ to denote the elementwise $1$-norm and infinity norm. Additionally, we use $\varrho(A)$ to denote its spectral radius $(\max|\lambda(A)|)$. For two matrices $A$ and $B$ of commensurate dimensions, denote their inner product by $\llangle A, B\rrangle = \tr(A^\top B)$. Finally, we write $A\succsim B$ if there exists some absolute constant $c$ that is independent of the model parameters such that $A\geq cB$.

\section{Problem Formulation, Estimation and Forecast.}\label{sec:problem}

We start by introducing the model assuming that the idiosyncratic component follows the aforementioned sparse $\VAR(d)$ model, which simultaneously incorporates the cross-sectional and serial structure among its coordinates. To convey the main arguments, we assume without loss of generality that $d=1$ for the ease of exposition, and present the extension to the general lag case in the Supplement.

The starting point is the dynamic factor representation of the observable process $X_t = \tilde{\lambda}(L)f_t + u_t$, where $f_t$ is the common latent factor; $u_t$ is the idiosyncratic component whose dynamics satisfy $\mathcal{B}(L)u_t = \epsilon_t$ with $\mathcal{B}(L)=\I_p-BL$ being the lagged matrix polynomial for some {\em weakly sparse} $B$. Multiplying $\mathcal{B}(L)$ on both sides leads to the dynamic factor model consisting of~\eqref{model:ours} and~\eqref{model:FVAR}, where \mbox{$F_t\in\mathbb{R}^K (K\ll p)$} collects the lags of $f_t$ so that it only enters the dynamics of $X_t$ contemporaneously, and is additionally assumed to follow some VAR model with lagged polynomial $\Phi(L)$:
\begin{align}
X_t & = \Lambda F_t + BX_{t-1} + \epsilon_t,\label{model:ours} \\
\Phi(L)F_t & = \eta_t.  \label{model:FVAR}
\end{align}
$\epsilon_t$ is a mean zero noise process that is both serially and cross-sectionally uncorrelated. Moreover, it is strictly exogenous satisfying $\Cov(X_{t-1},\epsilon_{t+h})=0$ and $\Cov(F_t,\epsilon_{t+h})=0$, $\forall\,h\geq 0$. The parameters of interest are the factor hyperplane (to be specified later) and the sparse transition matrix $B$. Note that here we only require $B$ to be {\em weakly sparse}, the notion of which can be formalized through the definition of an $\ell_q$ ball with radius $R_q$ \citep[c.f.][]{negahban2012unified}: 
\begin{equation}\label{eqn:Lqball}
\mathbb{B}_q(R_q) := \Big\{ B\in\mathbb{R}^{p\times p}\,\big|\, \sum_{i,j}^p |B_{ij}|^q \leq R_q \Big\} \qquad \text{for some fixed $q\in[0,1]$}.
\end{equation}
The case of exact sparsity corresponds to $q=0$ where $B\in\mathbb{B}_q(R_0)$ has at most $R_0$ nonzero entries; whereas for $q\in(0,1]$, the $R_q$ ball imposes constraints on the decay rate of $|B_{ij}|$'s.

To ensure that $X_t$ is covariance stationary, we require that the spectral radius of $B$ satisfies $\varrho(B)<1$ without further restricting $\Lambda$. Additionally, note that under the assumption that the spectral density of $X_t$ exists, the spectral density of the filtered process $Z_t := \mathcal{B}(L) X_t = \Lambda F_t + \epsilon$ satisfies
\begin{equation*}
g_Z(\omega) = \Lambda g_F(\omega) \lambda^\top + g_\epsilon(\omega) + g_{\epsilon,F}(\omega)\Lambda^\top + \Lambda g_{F,\epsilon}(\omega).
\end{equation*}
Correspondingly, the spectral density of $X_t$ is given by 
\begin{equation*}
g_X(\omega) = \big[ \mathcal{B}^{-1}(e^{i\omega})\big]\Big(\Lambda g_F(\omega)\Lambda^\top + g_\epsilon(\omega) + g_{\epsilon,F}(\omega)\Lambda^\top + \Lambda g_{F,\epsilon}(\omega)\Big)\big[ \mathcal{B}^{-1}(e^{i\omega})\big]^*,
\end{equation*}
where $g_X(\omega)$ and $g_{X,Y}(\omega)$ respectively denote the spectrum and cross-spectrum of some generic process $\{X_t\}$ and $\{Y_t\}$:
\begin{equation*}
g_X(\omega) : = \frac{1}{2\pi}\sum_{h=-\infty}^\infty \Gamma_X(h)e^{-i\omega h} \quad \text{and} \quad 
g_{X,Y}(\omega) := \frac{1}{2\pi} \sum_{h=-\infty}^\infty \Gamma_{X,Y}(h)e^{-i\omega h}
\end{equation*}
with $\Gamma_X(h) :=\E(X_tX_{t-h}^\top)$ and $\Gamma_{X,Y}(h) :=\E(X_tY_{t-h}^\top)$. 

\subsection{Estimation through a convex program.} \label{sec:estimation}

Given a sample of the $p$-dimensional observable process $X_t$, denoted by $\{x_0,x_1,\dots,x_T\}$, let 
\begin{equation*}
\mathbf{X}_{T} := \Big[x_1 ~ x_2 ~\dots ~x_{T}\Big]^\top, \quad \mathbf{X}_{T-1} := \Big[x_0 ~ x_1 ~~ \dots ~ x_{T-1}\Big]^\top, \quad \mathbf{F} :=\Big[ F_1 ~ F_2~ \dots ~ F_T \Big]^\top,
\end{equation*}
where $\mathbf{X}_{T}\in\R^{T\times p}$ and $\mathbf{X}_{T-1}\in\R^{T\times p}$ respectively denote the contemporaneous response matrix and the lagged predictor matrix, and $\mathbf{F}\in\R^{T\times K}$ denotes the latent factor matrix with the latent factor $F_t$ at time point $t$ stacked in its rows. The noise matrix $\mathbf{E}$ is analogously defined. We additionally define the {\em factor hyperplane} associated with the latent factor $\mathbf{F}$ as $\Theta:=\mathbf{F}\Lambda^\top \in \R^{T\times p}$, and note that $\Theta$ has rank at most $K$. With the above notations, the model in~\eqref{model:ours} for the observed samples can be written as $\mathbf{X}_{T} = \Theta + \mathbf{X}_{T-1}B^\top + \mathbf{E}$. With the transition matrix $B\in\R^{p\times p}$ assumed sparse and the factor hyperplane $\Theta=\mathbf{F}\Lambda^\top\in\R^{T\times p}$ being low rank, we formulate the following constrained optimization problem:
\begin{equation}\label{eqn:optimization_v0}
\min\limits_{B,\Theta}~\Big\{ \tfrac{1}{2T}\smalliii{ \mathbf{X}_T-\Theta-\mathbf{X}_{T-1}B^\top}_{\F}^2\Big\}, ~~~\text{subject to}~\text{rank}(\Theta)\leq r,~\|B\|_1\leq L,~\text{for some $r$ and $L$},
\end{equation}
with the feasible region determined through a rank constraint imposed on $\Theta$ and a sparsity-inducing norm constraint imposed on $B$. 

The rank constraint in~\eqref{eqn:optimization_v0} leads to a {\em non-convex} feasible region, making it particularly hard to characterize the obtained solution analytically as it depends on the initial values provided to the algorithm. Thus, as commonly undertaken in the literature \citep[e.g.][]{agarwal2012noisy}, we consider a tight convex relaxation of the rank constraint, and the solution to the convexified program has convergence guarantees independent of the initializer. Formally, we consider obtaining the estimator through the convex program in~\eqref{eqn:optimization0_hetero}, which can be obtained from~\eqref{eqn:optimization_v0} by alternatively considering the nuclear norm constraint for the factor hyperplane and the $\ell_1$ norm constraint for the sparse transition matrix $B$ in Lagrangian form:
{\small
	\begin{equation}\label{eqn:optimization0_hetero}
	(\widehat{B},\widehat{\Theta})  =  \argmin\limits_{B,\Theta} \Big\{ \frac{1}{2T}\smalliii{ \mathbf{X}_T-\Theta-\mathbf{X}_{T-1}B^\top}_{\F}^2
	+ \lambda_B \vertii{B}_1 + \lambda_\Theta \smalliii{\Theta/\sqrt{T}}_* \Big\}, 
	\end{equation}}%
where $\lambda_B$ and $\lambda_\Theta$ are tuning parameters. The solution $(\widehat{B},\widehat{\Theta})$ can be obtained by a block-coordinate descent algorithm which alternately minimizes with respect to $B$ and $\Theta$, as outlined in Algorithm~\ref{algo:1}. 

\begin{algorithm}[!ht]
	\scriptsize
	\setstretch{0.5}
	\caption{\small Algorithm for estimating $B$ and $\Theta$ by solving~\eqref{eqn:optimization0_hetero}.}\label{algo:1}
	\KwIn{Time series data $\{x_t\}_{t=0}^T$, tuning parameter $\lambda_B$, $\lambda_\Theta$}
	\BlankLine
	\textbf{Initialization: set $B^{(0)}=O$}\;
	\textbf{Iterate until convergence}: \vspace*{2mm}\\
	\hspace*{3mm}$\bullet$~~\begin{minipage}[t]{14.4cm}
		For fixed $\widehat{B}^{(m-1)}$, update $\widehat{\Theta}^{(m)}$ by solving$^*$:
		\begin{equation*}
		\widehat{\Theta}^{(m)} =\argmin\nolimits_{\Theta} \big\{ \tfrac{1}{2T}\smalliii{\mathbf{X}_T - \mathbf{X}_{T-1}(\widehat{B}^{(m-1)})^\top -\Theta}_{\F}^2 + \lambda_\Theta \smalliii{\Theta/\sqrt{T}}_* \big\}.
		\end{equation*}
	\end{minipage}\vspace*{5mm}\\
	\hspace*{3mm}$\bullet$~~\begin{minipage}[t]{14.4cm}
		For fixed $\widehat{\Theta}^{(m)}$, update $\widehat{B}^{(m)}$ with each row $j$ (in parallel) solving a Lasso regression 
		\begin{equation*}
		\widehat{B}^{(m)}_{j\cdot} = \argmin\nolimits_{\beta\in\R^{p}} \big\{ \tfrac{1}{2T}\smallii{\big[\mathbf{X}_{T} - \widehat{\Theta}^{(m)}\big]_{\cdot j} - \mathbf{X}_{T-1}\beta}^2  + \lambda_B \|\beta\|_1  \big\}.
		\end{equation*}
	\end{minipage}\vspace*{2mm}\\
	\KwOut{Estimated sparse transition matrix $\widehat{B}$ and the low rank hyperplane $\widehat{\Theta}$.}\vspace*{3mm}
	\scriptsize
	\noindent *Note: $\Theta$-update can be obtained by a proximal gradient descent algorithm involving a soft singular value thresholding (SVT) step, with each inner iteration indexed by $t$ solving the following: $$\widehat{\Theta}^{(m,[t+1])} = \argmin\big\{ \llangle \Theta, \nabla \mathcal{G}^{(m)}(\widehat{\Theta}^{(m,[t])})\rrangle + \frac{\eta}{2} \smalliii{\Theta-\Theta^{(m,[t])}}_{\F}^2 + \lambda_\Theta\smalliii{\Theta/\sqrt{T}}_* \big\}~~\text{for some stepsize $\eta$},$$
	where $\mathcal{G}^{(m)}(\Theta) := \tfrac{1}{2T}\smalliii{\mathbf{X}_T - \mathbf{X}_{T-1}(\widehat{B}^{(m-1)})^\top -\Theta}_{\F}^2$.
\end{algorithm}

\normalsize
\paragraph{Reconstruction of the factors.} The solution to~\eqref{eqn:optimization0} provides an estimate of the factor hyperplane, based on which realizations of the $K$-dimensional latent factors process can be reconstructed under certain identifiability restrictions. As mentioned in Section~\ref{sec:intro}, for any invertible matrix $R\in\R^{K\times K}$, the following equality holds
\begin{equation*}
\Theta = \mathbf{F}\Lambda^\top = \big[\mathbf{F}R^\top \big] \big[ \Lambda R^{-1} \big] ^\top :=\check{\mathbf{F}}\check{\Lambda}^\top,
\end{equation*}
hence, given a factor hyperplane and the latency of the factors, to fully identify the factors and the corresponding loading matrix $(\mathbf{F},\Lambda)$ from their observationally equivalent counterpart $(\check{\mathbf{F}},\check{\Lambda})$, a total number of $K^2$ restrictions is required to address their indeterminacy. Various choices for the identification restrictions have been discussed in the literature \citep[e.g.,][and references therein]{bai2008large}, including the most popular PC estimator \citep{stock2002forecasting} which assumes orthogonality for both the factors and the loadings, as well as the ones that implicitly assume certain ordering of the factors and impose specific structural restrictions on the loading matrix \citep[see PC2 and PC3 identification restrictions in][]{bai2013principal}. Under these restrictions, the factors and the loading matrix can always be uniquely identified\footnote{For the PC estimator or under the PC2 restriction, where $\mathbf{F}'\mathbf{F}/n=\I_K$ and $\Lambda$ is assumed lower-triangular, the identification is up to sign rotation; under the PC3 one, where the upper $K\times K$ upper sub-matrix of $\Lambda$ is assumed an identity matrix and $\mathbf{F}$ is left unrestricted, the identification is exact \citep[see][]{bai2013principal}.} and obtained based on the SVD of the estimated hyperplane $\widehat{\Theta}$. It is worth noting that regardless of the identification restrictions that lead to different versions of the estimated factors, the space spanned by the estimated factors is invariant once $\widehat{\Theta}$ is obtained. Specifically, forecasting future values of $x_t$ does not require an exact recovery of $F_t$, as discussed next.

\subsection{Forecasting.}
Given estimates $\widehat{B}$ of the transition matrix and $\widehat{\Theta}$ of the hyperplane, we consider the following procedure that first obtains forecasts of the filtered process $Z_t:=X_t - BX_{t-1}$ through projection onto the factor space, followed by a lag adjustment to obtain those of the $X_t$. 

To this end, according to the model in~\eqref{model:ours}, the filtered process $Z_t$ can be represented as $Z_t = \Lambda F_t + \epsilon_t$, whose $h$-step-ahead best linear predictor based on $F_{T-k},k\geq 0$ is given by the projection $\text{Proj}(Z_{T+h}\,|\,\text{Span}(\mathbf{F},T))$, 
where $\text{Span}(\mathbf{F},T)$ denotes the linear space spanned by $\{F_t\}_{t=1}^T$ \citep{stock2002forecasting,forni2005generalized}. In particular, based on estimate $\widehat{B}$, the filtered process $Z_t$ can be estimated through $\widehat{z}_t:= x_t - \widehat{B}x_{t-1}$, whose common space estimate corresponds to $\widehat{\Theta}$.
Using the surrogate process $\{\widehat{z}_t\}$, let the sample covariance be $\widehat{\Gamma}_Z(h):=\tfrac{1}{T-h}\sum_{t=h+1}^T \widehat{z}_t \widehat{z}^\top_{t-h}$; the $h$-step-ahead forecast of $\{z_{t}\}$ is then given by
\begin{equation}
\widehat{z}_{T+h} = \big[\widehat{\Gamma}_Z(h) \widehat{V}(\widehat{V}^\top\widehat{\Gamma}_Z(0)\widehat{V})^{-1}\big](\widehat{V}^\top\widehat{z}_{T}),
\end{equation}
where columns of $\widehat{V}$ are the right singular vectors of $\widehat{\Theta}$ corresponding to nonzero singular values, and are effectively an orthonormal basis for the factor space. In the case where $h=1$, $\widehat{x}_{T+1} = Bx_T + \widehat{z}_{T+1|T}$; in the case where $h>1$, $\widehat{x}_{T+h}$ can be obtained inductively by sequentially estimating $x_{T+i}$, for all $0<i\leq h$. Algorithm~\ref{algo:forecast} outlines the forecasting procedure. 

\begin{algorithm}[!ht]
	\scriptsize
	\setstretch{0.5}
	\caption{\small Algorithm for obtaining $h$-step-ahead forecast given $\widehat{B}$ and $\widehat{\Theta}$.}\label{algo:forecast}
	\KwIn{Time series data $\{x_t\}_{t=0}^T$, estimates $\widehat{B}$ and $\widehat{\Theta}$}
	\BlankLine
	$\bullet$ Obtain the filtered process $\widehat{z}_t:= \widehat{x}_t - \widehat{B}x_{t-1}$ and its sample cross-covariance estimates $\widehat{\Gamma}_z(i)$, for all $0\leq i\leq h$.\vspace*{2mm}\;
	$\bullet$ For $i=1,2,\cdots,h$: \vspace*{2mm}\\
	\hspace*{3mm}---~~\begin{minipage}[t]{14.4cm}
		Obtain $\widehat{z}_{T+i}$ through $\widehat{z}_{T+i} := \big[\widehat{\Gamma}_Z(i) \widehat{V}(\widehat{V}^\top\widehat{\Gamma}_Z(0)\widehat{V})^{-1}\big](\widehat{V}^\top\widehat{z}_{T})$;
	\end{minipage}\vspace*{2mm}\\
	\hspace*{3mm}---~~\begin{minipage}[t]{14.4cm}
		Obtain $\widehat{x}_{T+i}$ through $\widehat{x}_{T+i}:= \widehat{z}_{T+i} + B\widehat{x}_{T+i-1}$.
	\end{minipage}\vspace*{2mm}\\
	\KwOut{Forecasted values $\widehat{x}_{T+i},0<i\leq h$.}
\end{algorithm}

\paragraph{Connections to GDFM.} To conclude this section, we discuss similarities of the proposed formulation to the GDFM \citep[e.g.,][]{forni2000generalized,forni2005generalized}. GDFM encompasses a broader class of factor models wherein the observed process admits a decomposition into two mutually orthogonal processes that respectively capture the common and the idiosyncratic component \citep{forni2000generalized}, with the former not limited to a VAR representation. From a modeling perspective, as pointed out in \citet{lutkepohl2014structural}, the distinction between GDFM and the state-space form of DFM \citep[e.g.,][which is also the factor model specification adopted in this paper]{stock2005implications} is not that substantial, since stationary processes can be approximated arbitrarily well by VAR processes with unrestricted order of lags.  From the estimation perspective, however, to accommodate the potentially more complex dynamics of the factor processes, GDFM recovers the common space leveraging the spectral domain features of the processes, and the dynamic factors are obtained by solving a generalized eigen-equation w.r.t. estimated covariance matrices of the common and the idiosyncratic components \citep{forni2005generalized}. In our formulation, we posit an optimization that jointly estimates the factor space and the lagged space, which accounts for the lag information explicitly, under mild sparsity assumptions.

The empirical performance of the two procedures is considered and compared in Section~\ref{sec:simulation} under various data generating mechanisms.

\section{Theoretical Properties.}\label{sec:theory}

To establish statistical properties of the estimators, a ball constraint on the feasible region of $\Theta$ is required to incur additional compactness on the low rank component that limits the spikiness of its entries, and this enables identification of the sparse component $B$. To this end, throughout this section, we consider estimators that are solutions to the following convex program:
\begin{align}\label{eqn:optimization0}
\begin{split}
(\widehat{B},\widehat{\Theta})  = & \argmin\limits_{B,\Theta} \Big\{ \tfrac{1}{2T}\smalliii{\mathbf{X}_T - \Theta - \mathbf{X}_{T-1}B^\top}_{\F}^2  + \lambda_B \vertii{B}_1 + \lambda_\Theta \smalliii{\Theta/\sqrt{T}}_* \Big\}, \\
& \text{subject to}\quad \Theta\in\mathbb{B}_\infty(\phi,\mathbf{X}_{T-1}),
\end{split}
\end{align}
where $\mathbb{B}_\infty(\phi,\mathbf{X}_{T-1})$ is a box constraint given by
\begin{equation*}
\mathbb{B}_\infty(\phi,\mathbf{X}_{T-1}) := \big\{ \Theta: \|\Theta\|_\infty \leq \frac{\phi}{\sqrt{Tp} \cdot \smalliii{\mathbf{X}_{T-1}/\sqrt{T} }}_{\text{op}} \big\}.
\end{equation*}
$\phi$ is chosen such that the true value of the parameters $\Theta^\star$ is always feasible. We will provide further illustration on the interpretation of such a box constraint in Section~\ref{sec:theory-fix} and Remark~\ref{remark:boxconstraint}. $(\widehat{B},\widehat{\Theta})$ falls into the class of {\em regularized $M$-estimators}, whose theoretical properties have been extensively studied in the statistical literature for diverse settings \citep[e.g.,][]{agarwal2012fast,loh2012high}. 

A road map to establish properties of the estimators for $(\Theta, B)$ is given next: first in Section~\ref{sec:theory-fix} we derive non-asymptotic statistical error bounds of $\widehat{\Theta}$ and $\widehat{B}$ under certain regularity conditions, when the proposed estimation procedure is based on a {\em deterministic realization} of the observable process $\{X_t\}$. In particular, the required regularity conditions primarily entail the {\em restricted strong convexity} (RSC) condition \citep{agarwal2012noisy} and that the choice of $\lambda_B$ and $\lambda_\Theta$ is in accordance with some {\em deviation condition} \citep{loh2012high}. Subsequently, in Section~\ref{sec:theory-random}, we establish that the required conditions are satisfied with high probability, and provide probabilistic analogues of key model parameters' error bounds for {\em random realizations} drawn from the underlying observable Gaussian process $\{X_t\}$ and the latent process $\{F_t\}$. We also briefly discuss how the model identifiability issue is tackled through the constrained formulation adopted in~\eqref{eqn:optimization0}. Finally in Section~\ref{sec:convergence}, from a numerical perspective, we establish the convergence of the proposed iterative algorithm to a stationary point. All proofs are deferred to Appendices~\ref{appendix:proof-of-theory-convex} and~\ref{appendix:auxLemma}. Throughout our exposition, we use superscript $\star$ to denote the true value of the parameters of interest, and denote the errors of the estimators by $\Delta_{\Theta}:=\widehat{\Theta}-\Theta^\star$ and $\Delta_B:=\widehat{B}-B^\star$, respectively. 


\subsection{Statistical Error Bounds with Deterministic Realizations.}\label{sec:theory-fix}

We start by introducing some additional notation needed in the ensuing technical developments. Let $\ell_T(B,\Theta;\mathbf{X})$ denote the loss function, given by
\begin{equation*}
\ell_T(B,\Theta;\mathbf{X}) := \frac{1}{2T}\smalliii{\mathbf{X}_T - \Theta - \mathbf{X}_{T-1}B^\top}_{\F}^2.
\end{equation*}
The true number of latent factors is given by $K$ and thus $\rank(\Theta^\star)=K$. Further, given some $\eta>0$ (to be specified later), let $S^\star_\eta$ denote the thresholded support set of $B^\star$, and we use $s_\eta$ to denote its cardinality, that is, $S^\star_\eta := \{(i,j)\,|\, |B^\star_{ij}|>\eta \}$ and $s_\eta := \|S^\star_\eta\|_0$. Finally, let $S_\mathbf{E} : = \mathbf{E}^\top_{T-1}\mathbf{E}_{n-1}/T$ denote the sample covariance matrix of the error process and let $\Lambda_{\max}(S_\mathbf{E})$ be its maximum eigenvalue. 
Formally, the {\em RSC condition} \citep[c.f.][]{agarwal2012noisy,negahban2012unified} is defined as follows. 
\begin{definition}[Restricted Strong Convexity (RSC)]\label{defn:RSC} For some generic data matrix $\mathbf{X}\in\R^{T\times p}$, it satisfies the RSC condition with respect to norm $\Phi$ with curvature $\alpha_{\text{RSC}}>0$ and tolerance $\tau_T\geq 0$~if
	\begin{equation*}
	\frac{1}{2T} \smalliii{\mathbf{X}\Delta}_{\F}^2 \geq \frac{\alpha_{\text{RSC}}}{2}\smalliii{\Delta}_{\F}^2 - \tau_T \Phi^2(\Delta), \qquad \forall~\Delta\in\R^{p\times p}.
	\end{equation*}	
	In our context, we consider the element-wise $\ell_1$ norm $\Phi(\Delta)=\|\Delta\|_1$. 
\end{definition}

Further, for high dimensional sparse VAR models ($\Theta=0$ in the current setup), the tuning parameter $\lambda_B$ needs to satisfy a {\em deviation condition} \citep{loh2012high,basu2015estimation}, namely,
\begin{equation*}
\lambda_B\geq c_0\|\nabla_B\ell_T - \nabla^2_B\ell_T\cdot (B^\star)^\top \|_\infty,\qquad \text{for some constant}~~c_0>0,
\end{equation*}
which can be simplified to $\lambda_B\geq \|\mathbf{X}_{T-1}^\top\mathbf{E}/T\|_\infty$. Under the current model setup, however, the deviation condition is significantly more involved and requires proper modifications to incorporate quantities associated with the factor hyperplane, as seen in Theorem~\ref{thm:fix-bound-convex}.

Before stating the main results, we provide a brief discussion on the box constraint on $\Theta$, which aims to ``limit" the spikiness of the low rank component, and hence the interaction between the latent factor space and the observable lag space spanned by $X_{t-1}$ --- in particular, for $\Theta$ and $B$ to be properly recovered, such interaction can not be too large. Due to the basis vectors of the factor space being latent, a direct restriction on the interaction is impractical and conceptually unsatisfying, whereas the box constraint adopted effectively restricts the product of the signals from the two spaces and serves our objective, as shown in the proof of Theorem~\ref{thm:fix-bound-convex} and Remark~\ref{remark:id}. Note that this constraint is in the same spirit to similar ones in the literature \citep[e.g.,][]{agarwal2012noisy,negahban2012restricted}, and the norm of $\mathbf{X}$ is necessary since the two spaces have distinct bases. 

\begin{theorem}[Error bound for $(\widehat{B},\widehat{\Theta})$ under fixed realizations]\label{thm:fix-bound-convex} Suppose fixed realizations $\mathbf{X}_{T-1}\in\R^{T\times p}$ of process $X_t\in\R^p$ satisfy the RSC condition with curvature $\alpha_{\text{RSC}}>0$ and a tolerance $\tau_T$ such that 
	\begin{equation}\label{eqn:sparsity}
	64\tau_T\Big(s_\eta + (2K)\big( \frac{\lambda_{\Theta}}{\lambda_B}\big)^2 \Big) < \min\{\alpha_{\text{RSC}},1\}.
	\end{equation}
	Then, for any matrix pair $(B^\star,\Theta^\star)$ that generates the evolution of the $X_t$ process, for estimators $(\widehat{B},\widehat{\Theta})$ obtained by solving the optimization~\eqref{eqn:optimization0} with regularization parameters $\lambda_B$ and $\lambda_{\Theta}$ satisfying
	\begin{equation}\label{eqn:choice_of_lambda}
	\lambda_B \geq 2\|\mathbf{X}_{T-1}^\top \mathbf{E}/T\|_\infty + 4\phi/\sqrt{Tp} \qquad \text{and} \qquad \lambda_\Theta \geq \Lambda^{1/2}_{\max}(S_{\mathbf{E}}),
	\end{equation}
	the following error bound holds for some positive constants $C_1$, $C_2$ and $C_3$:
	\begin{equation}\label{eqn:eb1}
	\smalliii{\Delta_B}_{\F}^2 + \smalliii{\Delta_{\Theta}/\sqrt{T}}_{\F}^2 \leq C_1\cdot \mathcal{E}_{B} + C_2 \cdot \mathcal{E}_{\Theta} + C_3 \cdot \mathcal{E}_{\tau_T},
	\end{equation}
	where $\alpha^\prime := \min\{\alpha_{\text{RSC}},1\}$,
	\begin{equation*}
	\mathcal{E}_{B}:=\Big(\frac{\lambda_B}{\alpha^\prime}\Big)^2 \Big\{ s_\eta + \frac{\alpha^\prime}{\lambda_B} \sum_{(i,j)\notin S^\star_\eta}  |B^\star_{ij}| \Big\},
	\quad \mathcal{E}_{\Theta}:= \Big(\frac{\lambda_\Theta}{\alpha^\prime}\Big)^2 K, 
	\quad \mathcal{E}_{\tau_T}:= \Big(\frac{\tau_T}{\alpha^\prime}\Big)(\sum_{(i,j)\notin S^\star_\eta}  |B^\star_{ij}|)^2.
	\end{equation*}
\end{theorem}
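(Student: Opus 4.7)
My approach follows the regularized $M$-estimator template (à la Negahban et al., Agarwal--Negahban--Wainwright), extended to the sparse-plus-low-rank parametrization $(B,\Theta)$ with the additional wrinkle that the latent factor space and the observable lag space interact through $\mathbf{X}_{T-1}$. The box constraint on $\Theta$ exists precisely to tame that interaction, and it will be the crux of the argument.

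First, I would invoke the optimality of $(\widehat B,\widehat\Theta)$ against the feasible $(B^\star,\Theta^\star)$ and substitute the generative identity $\mathbf{X}_T=\Theta^\star+\mathbf{X}_{T-1}B^{\star\top}+\mathbf{E}$ to obtain the basic inequality
\[
\tfrac{1}{2T}\smalliii{\Delta_\Theta+\mathbf{X}_{T-1}\Delta_B^\top}_{\F}^2 \leq \llangle \mathbf{X}_{T-1}^\top\mathbf{E}/T,\Delta_B^\top\rrangle + \llangle \mathbf{E}/T,\Delta_\Theta\rrangle + \lambda_B\big(\vertii{B^\star}_1-\vertii{\widehat B}_1\big) + \lambda_\Theta\big(\smalliii{\Theta^\star/\sqrt{T}}_* - \smalliii{\widehat\Theta/\sqrt{T}}_*\big).
\]
The two gradient pairings are dominated by $\|\mathbf{X}_{T-1}^\top\mathbf{E}/T\|_\infty\vertii{\Delta_B}_1$ and $\Lambda_{\max}^{1/2}(S_\mathbf{E})\smalliii{\Delta_\Theta/\sqrt{T}}_*$ via $\ell_1/\ell_\infty$ and nuclear/operator duality. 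Decomposability of $\vertii{\cdot}_1$ relative to the thresholded support $S^\star_\eta$ gives $\vertii{B^\star}_1-\vertii{\widehat B}_1 \leq \vertii{(\Delta_B)_{S^\star_\eta}}_1 - \vertii{(\Delta_B)_{(S^\star_\eta)^c}}_1 + 2\sum_{(i,j)\notin S^\star_\eta}|B^\star_{ij}|$, and decomposability of $\smalliii{\cdot}_*$ relative to the row/column subspaces of $\Theta^\star$ yields an analogous bound involving a rank-$\leq 2K$ projection of $\Delta_\Theta/\sqrt T$. Together with the deviation conditions, this produces a restricted cone condition of the standard form.

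The novel technical step is controlling the cross-term $\tfrac{1}{T}\llangle\Delta_\Theta,\mathbf{X}_{T-1}\Delta_B^\top\rrangle$ arising from the expansion $\tfrac{1}{2T}\smalliii{\Delta_\Theta+\mathbf{X}_{T-1}\Delta_B^\top}_{\F}^2 = \tfrac12\smalliii{\Delta_\Theta/\sqrt{T}}_{\F}^2 + \tfrac{1}{2T}\smalliii{\mathbf{X}_{T-1}\Delta_B^\top}_{\F}^2 + \tfrac{1}{T}\llangle\Delta_\Theta,\mathbf{X}_{T-1}\Delta_B^\top\rrangle$. Because both $\Theta^\star$ and $\widehat\Theta$ lie in $\mathbb{B}_\infty(\phi,\mathbf{X}_{T-1})$, the triangle inequality gives $\|\Delta_\Theta\|_\infty \leq 2\phi/(\sqrt{Tp}\,\smalliii{\mathbf{X}_{T-1}/\sqrt{T}}_{\op})$. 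Column-wise Cauchy--Schwarz then yields $\|\mathbf{X}_{T-1}^\top \Delta_\Theta\|_\infty \leq \smalliii{\mathbf{X}_{T-1}}_{\op}\sqrt{T}\,\|\Delta_\Theta\|_\infty$, so $\|\mathbf{X}_{T-1}^\top\Delta_\Theta/T\|_\infty\leq 2\phi/\sqrt{Tp}$ and hence $|\tfrac{1}{T}\llangle\Delta_\Theta,\mathbf{X}_{T-1}\Delta_B^\top\rrangle|\leq (2\phi/\sqrt{Tp})\vertii{\Delta_B}_1$. Collapsing this contribution with the noise pairing is exactly what forces the extra $4\phi/\sqrt{Tp}$ correction inside the deviation condition~\eqref{eqn:choice_of_lambda}, giving the clean consolidated factor $\tfrac12\lambda_B\vertii{\Delta_B}_1$.

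Finally, I would invoke RSC on $\tfrac{1}{2T}\smalliii{\mathbf{X}_{T-1}\Delta_B^\top}_{\F}^2 \geq \tfrac{\alpha_{\text{RSC}}}{2}\smalliii{\Delta_B}_{\F}^2 - \tau_T\vertii{\Delta_B}_1^2$, and combine it with the trivial lower bound $\tfrac12\smalliii{\Delta_\Theta/\sqrt T}_{\F}^2$; the floor $\alpha^\prime=\min\{\alpha_{\text{RSC}},1\}$ lets the two curvature pieces be merged uniformly. The cone condition then converts $\vertii{\Delta_B}_1$ and $\smalliii{\Delta_\Theta/\sqrt T}_*$ into their Frobenius counterparts at the cost of $\sqrt{s_\eta}$ and $\sqrt{2K}$ factors, plus a residual from the soft tail $\sum_{(i,j)\notin S^\star_\eta}|B^\star_{ij}|$; the sparsity hypothesis~\eqref{eqn:sparsity}, which is precisely what buffers $\tau_T\cdot(s_\eta+(2K)(\lambda_\Theta/\lambda_B)^2)$ against $\alpha^\prime$, guarantees the tolerance term cannot swallow the curvature. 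Rearranging produces $\mathcal{E}_B$, $\mathcal{E}_\Theta$, and $\mathcal{E}_{\tau_T}$. The principal bookkeeping obstacle that I expect is carefully propagating the soft-tail mass through the squared cone bound so that it produces a quadratic dependence inside $\mathcal{E}_{\tau_T}$ while contributing only linearly inside $\mathcal{E}_B$; this is also the mechanism that injects the ratio $\lambda_\Theta/\lambda_B$ into condition~\eqref{eqn:sparsity}, since the nuclear mass of $\Delta_\Theta/\sqrt T$ re-enters the $\ell_1$ side of the cone after the curvature step.
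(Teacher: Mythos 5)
Your proposal is correct and follows essentially the same route as the paper: the basic inequality, Hölder/duality bounds tied to the deviation conditions, decomposability over $S^\star_\eta$ and the rank-$2K$ subspace to get the cone bound, the box constraint to control the cross-term $\tfrac{1}{T}\llangle\Delta_\Theta,\mathbf{X}_{T-1}\Delta_B^\top\rrangle$ by $(2\phi/\sqrt{Tp})\vertii{\Delta_B}_1$, and RSC plus condition~\eqref{eqn:sparsity} to absorb the $\tau_T\vertii{\Delta_B}_1^2$ tolerance. The only (immaterial) difference is that you bound the cross-term via column-wise Cauchy--Schwarz on $\Delta_\Theta$ directly, whereas the paper splits $\widehat\Theta-\Theta^\star$ and passes through $\smalliii{\mathbf{X}_{T-1}/T}_1\|\Theta\|_\infty$; both yield the same $2\phi/\sqrt{Tp}$ factor.
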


Next, we comment on the error bound in~\eqref{eqn:eb1} and the required conditions in~\eqref{eqn:sparsity}. The error bound encompasses three terms that are respectively associated with the transition matrix $B$, the low rank factor space $\Theta$, and the tolerance $\tau_T$ which measures the extent to which the log-likelihood function deviates from strong convexity (see Definition~\ref{defn:RSC}). Both $\mathcal{E}_B$ and $\mathcal{E}_\Theta$ depend on three components: (1) the overall curvature of the log-likelihood function as captured by $\alpha_{\text{RSC}}$, (2) the interaction structure between various components of the underlying process, as captured by the tuning parameters $\lambda_B$ and $\lambda_\Theta$, and (3) the inherent structure of the parameters as captured by $s_\eta$, $\sum_{(i,j)\notin S^{\star}_{\eta}}|B_{ij}|$ and $K$ --- in particular, due to the approximately sparse structure of $B^\star$, both the density level $s_\eta$ of its strong support set and the magnitude of its ``weak" entries play a role, with the two respectively reflecting the {\em estimation error} and the {\em approximation error} \citep[c.f.][]{agarwal2012noisy} after proper scaling. 
The curvature as measured by $\alpha_{\text{RSC}}$ dictates the constraint to which the tolerance $\tau_T$ needs to conform (see Equation~\eqref{eqn:sparsity}), and such a constraint is also interrelated to $s_\eta$ and $K$: for~\eqref{eqn:sparsity} to be satisfied, neither $K$ nor $s_\eta$ can be too large. Moving to the tuning parameters, $\lambda_B$ can be sub-divided into two terms: the cross-product term $\|\mathbf{X}_{T-1}^\top \mathbf{E}/T\|_\infty$ measures the maximum interaction between the design matrix $\mathbf{X}_{T-1}$ and the noise $\mathbf{E}$, which according to model assumption (population level) should center around 0; where the term $\phi/\sqrt{Tp}$ corresponds to an upper bound on the interaction between the latent (factor) space and the observed one ($\mathbf{X}_{T-1}$). For $\lambda_\Theta$, we require that it dominates the maximum signal coming from the error process in the form of $\Lambda^{1/2}_{\max}(S_{\mathbf{E}})$. Thus, a smaller $\lambda_B$ is needed when interactions between associated terms are weaker and similarly a smaller $\lambda_\Theta$ is needed if the magnitude of the noise is weaker, thus leading to a tighter error bound for the estimates.
Finally, it is worth noting that $\mathcal{E}_{\tau_T}$ is a result of the approximately sparse structure of $B$; in the special case where $B$ is exactly sparse, this term would be 0.

Corollary~\ref{cor:eb} gives the bound of $\Delta_B$ and $\Delta_\Theta$ with specific choice of the thresholded level $\eta$ , when the true value $B^\star$ lies in the $\ell_q$ ball of radius $R_q$ (see definition in Equation~\eqref{eqn:Lqball}):
\begin{corollary}\label{cor:eb} Under the same set of conditions as in Theorem~\ref{thm:fix-bound-convex}, with $B^\star\in \mathbb{B}_q(R_q)$, by choosing the thresholded level according to $\eta = \lambda_B/\alpha^\prime$ where $\alpha^\prime:=\min\{\alpha_{\text{RSC}},1\}$, the following error bound holds for some positive constants $C_1,C_2$ and $C_3$:
	\begin{equation*}
	\smalliii{\Delta_B}_{\F}^2 + \smalliii{\Delta_{\Theta}/\sqrt{T}}_{\F}^2 \leq C_1\cdot (\frac{\lambda_B}{\alpha^\prime})^{2-q}R_q + C_2\cdot(\frac{\lambda_\Theta}{\alpha^\prime})^2 K + C_3\cdot\frac{\tau_T}{\alpha^\prime}(\frac{\lambda_B}{\alpha^\prime})^{2-q}R_q^2.
	\end{equation*}
\end{corollary}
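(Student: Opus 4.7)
The plan is to specialize Theorem~\ref{thm:fix-bound-convex} to the weakly sparse regime $B^\star\in\mathbb{B}_q(R_q)$ by picking the thresholding level $\eta=\lambda_B/\alpha^\prime$ and then replacing the two data-dependent quantities appearing inside $\mathcal{E}_B$ and $\mathcal{E}_{\tau_T}$ --- namely the strong-support cardinality $s_\eta$ and the weak-tail $\ell_1$-mass $\sum_{(i,j)\notin S^\star_\eta}|B^\star_{ij}|$ --- by explicit functions of $R_q$ and $\eta$. The RSC curvature $\alpha_{\text{RSC}}$, the tuning parameters $\lambda_B,\lambda_\Theta$, the rank $K$, and the compatibility condition~\eqref{eqn:sparsity} are inherited verbatim from the theorem since the corollary invokes ``the same set of conditions.'' Thus the corollary is really a deterministic bookkeeping exercise on top of Theorem~\ref{thm:fix-bound-convex}, and no new probabilistic or RSC argument is needed.

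The first step is to bound $s_\eta$. Since every $(i,j)\in S^\star_\eta$ contributes at least $\eta^q$ to $\sum_{i,j}|B^\star_{ij}|^q$, and the latter is at most $R_q$,
\begin{equation*}
s_\eta\;\leq\;R_q\,\eta^{-q}\;=\;R_q\,(\lambda_B/\alpha^\prime)^{-q}.
\end{equation*}
The second step is to bound the tail mass by writing $|B^\star_{ij}|=|B^\star_{ij}|^q\cdot|B^\star_{ij}|^{1-q}$ and using $|B^\star_{ij}|\leq\eta$ on $(S^\star_\eta)^c$ together with $q\in[0,1]$:
\begin{equation*}
\sum_{(i,j)\notin S^\star_\eta}|B^\star_{ij}|\;\leq\;\eta^{1-q}\sum_{i,j}|B^\star_{ij}|^q\;\leq\;R_q\,\eta^{1-q}\;=\;R_q\,(\lambda_B/\alpha^\prime)^{1-q}.
\end{equation*}
These are the standard ``$\ell_q$-radius versus threshold'' inequalities driving every weak-sparsity argument.

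Plugging these into $\mathcal{E}_B$ gives
\begin{equation*}
\mathcal{E}_B\;\leq\;\Big(\tfrac{\lambda_B}{\alpha^\prime}\Big)^{\!2}\Big\{R_q\,\eta^{-q}+\tfrac{\alpha^\prime}{\lambda_B}\,R_q\,\eta^{1-q}\Big\}\;=\;2R_q\,(\lambda_B/\alpha^\prime)^{2-q},
\end{equation*}
which after absorbing the factor $2$ into $C_1$ yields the first term of the stated bound. The term $\mathcal{E}_\Theta=(\lambda_\Theta/\alpha^\prime)^2 K$ is completely untouched and delivers the second term. Finally, squaring the tail-mass bound and multiplying by $\tau_T/\alpha^\prime$ yields the third term, modulo an absolute constant absorbed into $C_3$. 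Summing the three contributions reproduces the displayed error bound.

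I expect essentially no obstacle: the entire argument is algebraic bookkeeping on top of Theorem~\ref{thm:fix-bound-convex}. The only points that require care are (i) the arithmetic of exponents on $\lambda_B/\alpha^\prime$ in each of the three contributions, and (ii) ensuring that the inherited hypothesis~\eqref{eqn:sparsity} remains meaningful after the substitution $s_\eta\leq R_q(\lambda_B/\alpha^\prime)^{-q}$, which amounts to the implicit requirement that $R_q\,(\lambda_B/\alpha^\prime)^{-q}$ together with $K(\lambda_\Theta/\lambda_B)^2$ stay small relative to $\min\{\alpha_{\text{RSC}},1\}/\tau_T$. Since this is imposed by assumption, no separate verification is required and the corollary follows.
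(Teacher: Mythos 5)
Your proposal is correct and follows essentially the same route as the paper: both proofs use the two standard $\ell_q$-ball inequalities $s_\eta\leq R_q\eta^{-q}$ and $\sum_{(i,j)\notin S^\star_\eta}|B^\star_{ij}|\leq R_q\eta^{1-q}$ with $\eta=\lambda_B/\alpha^\prime$, then substitute into the bound of Theorem~\ref{thm:fix-bound-convex}. One small remark: squaring the tail-mass bound actually produces $(\lambda_B/\alpha^\prime)^{2-2q}R_q^2$ in the third term (consistent with the exponent $2-2q$ appearing in Theorem~\ref{thm:combine}), so the exponent $2-q$ in the corollary's displayed statement appears to be a typo rather than a defect of either argument.
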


\subsection{High Probability Bounds under Random Realizations.}\label{sec:theory-random}

Next, we provide high probability bounds/concentrations for key quantities associated with the derived error bound in Section~\ref{sec:theory-fix}, for random Gaussian realizations of the underlying factor and error processes. Specifically, this involves the verification of the RSC condition, as well as the examination of quantities associated with the deviation condition to which the choice of $(\lambda_B,\lambda_\Theta)$ needs to conform, as shown in~\eqref{eqn:choice_of_lambda}.

We introduce additional notation for the subsequent technical developments. For some generic process $\{X_t\}$, in addition to the auto-covariance function $\Gamma_X(h)$ and its spectral density $g_X(\omega)$, we define its maximum and minimum eigenvalue associated with the spectral density $g_X(\omega)$ introduced in Section \ref{sec:problem} as follows \citep{basu2015estimation}:
\begin{equation*}
\mathcal{M}(g_X) : = \mathop{\text{ess sup }}\limits_{\omega\in[-\pi,\pi]} \Lambda_{\max}(g_X(\omega)), \qquad \mathfrak{m}(g_X) : = \mathop{\text{ess inf }}\limits_{\omega\in[-\pi,\pi]} \Lambda_{\min}(g_X(\omega)).
\end{equation*}
For two generic centered processes $\{X_t\}$ and $\{Y_t\}$ that are assumed jointly covariance stationary, whose spectral density is given by $g_{X,Y}(\omega):=\tfrac{1}{2\pi}\sum_{h=-\infty}^{\infty} \Gamma_{X,Y}(h)e^{i\omega h}$ where $\Gamma_{X,Y}(h)=\mathbb{E}(X_tY_{t+h}^\top)$, the upper extreme for $g_{X,Y}(\omega)$ is analogously defined as
\begin{equation*}
\mathcal{M}(g_{X,Y}) : = \mathop{\text{ess sup }}\limits_{\omega\in[-\pi,\pi]} \sqrt{ \Lambda_{\max}\big( g^*_{X,Y}(\omega) g_{X,Y}(\omega)\big) }.
\end{equation*}
In general $g_{X,Y}(\omega)\neq g_{Y,X}(\omega)$, but $\mathcal{M}(g_{X,Y})=\mathcal{M}(g_{Y,X})$. 

For the processes involved in our proposed model, recall that $\{X_t\}$, $\{\epsilon_t\}$ and $\{F_t\}$ are mean zero Gaussian processes. In particular, $\{\epsilon_t\}$  is a noise process that does not exhibit temporal nor cross-sectional dependence, hence it is effectively a Gaussian random vector with covariance $\Sigma_\epsilon=\sigma_\epsilon^2 \I_p$, and its spectral density simplifies to $g_\epsilon(\omega)=\frac{\Sigma_\epsilon}{2\pi}$. Further, we define the shifted process $\{\widetilde{\epsilon}_t:=\epsilon_{t+1}\}$ for notation convenience. 

The following lemma verifies that with high probability, for random realizations of the process $\{X_t\}$, the RSC condition is satisfied provided that the sample size is sufficiently large: 
\begin{lemma}[verification of the RSC condition]\label{lemma:RSC} Consider $\mathbf{X}\in\R^{T\times p}$ whose rows are some random realization $\{x_0,\dots,x_{T-1}\}$ of the stable $\{X_t\}$ process with dynamic given in~\eqref{model:ours}. Then there exist positive constants $c_i~(i=0,1,2,3)$ such that with probability at least $1-c_1\exp(-c_2T)$, the RSC condition holds for $\mathbf{X}$ with curvature $\alpha_{\text{RSC}}$ and tolerance~$\tau_T$ satisfying
	\begin{equation*}
	\alpha_{\text{RSC}} = \pi\mathfrak{m}(g_X),~~~\text{and}~~~\tau_T = \gamma^2\Big(\frac{\alpha_{\text{RSC}}}{2}\Big)\Big( \frac{\log p}{T}\Big)~~~\text{where $\gamma:= 54 \mathcal{M}(g_X)/\mathfrak{m}(g_X)$},
	\end{equation*}
	provided that $T\succsim s_\eta\log p$. 
\end{lemma}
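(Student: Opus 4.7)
The strategy is to reduce the matrix-valued RSC inequality to a vector-valued one column by column, and then invoke the existing RSC machinery for stable Gaussian linear processes developed in \citet{basu2015estimation}. Writing $\widehat{\Gamma}_X := \mathbf{X}^\top \mathbf{X}/T$, observe that $\tfrac{1}{T}\smalliii{\mathbf{X}\Delta}_{\F}^2 = \sum_{j=1}^p \Delta_{\cdot j}^\top \widehat{\Gamma}_X \Delta_{\cdot j}$, so it suffices to establish, uniformly in $v\in\R^p$ and with probability $1-c_1\exp(-c_2T)$, that
\begin{equation*}
\tfrac{1}{T}\|\mathbf{X}v\|_2^2 \;\geq\; \pi\mathfrak{m}(g_X)\,\|v\|_2^2 \;-\; \tau_T\,\|v\|_1^2,
\end{equation*}
with $\tau_T = \gamma^2(\alpha_{\text{RSC}}/2)(\log p/T)$. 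Once the vector bound is in hand, summing it over $j=1,\dots,p$ yields $\tfrac{1}{T}\smalliii{\mathbf{X}\Delta}_{\F}^2 \geq \pi\mathfrak{m}(g_X)\smalliii{\Delta}_{\F}^2 - \tau_T\sum_j \|\Delta_{\cdot j}\|_1^2$, and the claimed matrix conclusion follows from the elementary inequality $\sum_j a_j^2 \leq (\sum_j a_j)^2$ for non-negative $a_j$, applied to $a_j := \|\Delta_{\cdot j}\|_1$, together with $\sum_j \|\Delta_{\cdot j}\|_1 = \|\Delta\|_1$.

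\paragraph{Main ingredients for the vector bound.} At the population level, the inverse Fourier identity $\Gamma_X(0) = \int_{-\pi}^{\pi} g_X(\omega)\,d\omega$ combined with the spectral lower bound gives $v^\top \Gamma_X(0) v \geq 2\pi\mathfrak{m}(g_X)\|v\|_2^2$, which accounts for the leading $\pi\mathfrak{m}(g_X)$ factor (a factor of two is absorbed into the deviation term). The deviation $v^\top(\widehat{\Gamma}_X - \Gamma_X(0))v$ is a quadratic form in jointly Gaussian variables; to obtain the sharp $\log p/T$ tolerance I invoke Proposition~4.2 (and the underlying Hanson--Wright-type Proposition~2.4) of \citet{basu2015estimation}. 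Those results apply here because, under the stationarity assumption $\varrho(B)<1$, the stability of the factor VAR in~\eqref{model:FVAR}, and positive-definiteness of $\Sigma_\epsilon$, the observable process $\{X_t\}$ is a stable Gaussian linear process with $\mathcal{M}(g_X) < \infty$ and $\mathfrak{m}(g_X) > 0$. The sample-size requirement $T\succsim s_\eta\log p$ is inherited from needing $\tau_T s_\eta$ to be dominated by $\alpha_{\text{RSC}}$, so that the RSC inequality remains informative on the sparse cone relevant for Theorem~\ref{thm:fix-bound-convex}.

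\paragraph{Main obstacle.} The crux is the sharp rate $\tau_T \asymp (\mathcal{M}(g_X)^2/\mathfrak{m}(g_X))\cdot \log p/T$. A naive elementwise control of $\|\widehat{\Gamma}_X - \Gamma_X(0)\|_\infty$ (via Bernstein-type tails on individual entries $\widehat{\Gamma}_{X,ij}$ followed by a union bound over the $p^2$ entries) yields only $|\Delta^\top(\widehat{\Gamma}_X - \Gamma_X(0))\Delta| \lesssim \mathcal{M}(g_X)\sqrt{\log p/T}\,\|\Delta\|_1^2$, which produces a $\sqrt{\log p/T}$ tolerance and is insufficient to preserve the consistency rates of Theorem~\ref{thm:fix-bound-convex}. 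The sharp $\log p/T$ rate instead requires a peeling/discretization argument over dyadic sparsity levels, combining (a) Hanson--Wright tail bounds for quadratic forms of stable Gaussian processes, and (b) an $\varepsilon$-net on $\{v\in\R^p : \|v\|_2 = 1,\ \|v\|_0 \leq k\}$ for each dyadic $k$, with a union bound across scales. This is precisely the argument carried out in the proof of Proposition~4.2 of \citet{basu2015estimation}; the present proof therefore reduces to verifying its spectral-density hypotheses for the process generated by~\eqref{model:ours}--\eqref{model:FVAR} and transcribing its conclusion.
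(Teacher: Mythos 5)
Your proposal is correct and follows essentially the same route as the paper: the paper likewise reduces the matrix RSC inequality to the vector statement $\tfrac{1}{2}v^\top S_{\mathbf{X}} v \geq \tfrac{\alpha_{\text{RSC}}}{2}\|v\|_2^2 - \tau_T\|v\|_1^2$ by summing over columns of $\Delta$ and using $\sum_j\|\Delta_{\cdot j}\|_1^2 \leq \|\Delta\|_1^2$, and then establishes the vector bound via the Hanson--Wright-type concentration of \citet[Proposition~2.4(a)]{basu2015estimation}, the sparse-set discretization of their Lemma~F.2, and the extension lemma of \citet[Lemma~12]{loh2012high} (your ``peeling over dyadic sparsity levels'' is exactly this last step), noting as you do that the argument of their Proposition~4.2 for VAR processes carries over once $\mathcal{M}(g_X)<\infty$ and $\mathfrak{m}(g_X)>0$ are verified for the process in~\eqref{model:ours}. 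The constants $\gamma = 54\mathcal{M}(g_X)/\mathfrak{m}(g_X)$ and $\tau_T = \alpha_{\text{RSC}}\gamma^2\log p/(2T)$ emerge in the paper exactly as you describe.
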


The next lemma establishes a high probability bound for the interaction term $\mathbf{X}_{T-1}^\top \mathbf{E}/T$ that influences the choice of $\lambda_B$ through its elementwise $\ell_\infty$ norm. 

\begin{lemma}[High probability bound for $\|\mathbf{X}^\top_{T-1}\mathbf{E}/T\|_\infty$]\label{lemma:boundXE} There exist positive constants $c_i~(i=0,1,2)$ such that for sample size $T\succsim \log p$, with probability at least $1-c_1\exp(-c_2\log p)$, the following bound holds:
	\begin{equation}\label{eqn:boundXE}
	\|\mathbf{X}^\top_{T-1}\mathbf{E}/T\|_\infty \leq c_0\Big(\mathcal{M}(g_X) + \mathcal{M}(g_\epsilon) + \mathcal{M}(g_{X,\widetilde{\epsilon}}) \Big)\sqrt{\frac{\log p}{T}}.
	\end{equation}
\end{lemma}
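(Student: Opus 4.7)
The $(i,j)$-th entry of $\mathbf{X}_{T-1}^\top \mathbf{E}/T$ equals $T^{-1}\sum_{t=1}^T X_{t-1,i}\epsilon_{t,j}$, which is mean zero by strict exogeneity of $\epsilon_t$. The plan is to recast each such scalar cross-product as a linear combination of contemporaneous quadratic forms in a jointly stationary Gaussian augmented process, apply the Gaussian quadratic-form concentration of \citet{basu2015estimation} to each piece, and union-bound over the $p^2$ entries.

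Concretely, introduce the augmented process $Z_t := (X_t^\top, \widetilde{\epsilon}_t^\top)^\top \in \R^{2p}$, which is mean-zero jointly stationary Gaussian with block spectral density
\begin{equation*}
g_Z(\omega)\;=\;\begin{pmatrix} g_X(\omega) & g_{X,\widetilde{\epsilon}}(\omega) \\ g_{\widetilde{\epsilon},X}(\omega) & g_\epsilon(\omega) \end{pmatrix},
\end{equation*}
using that a unit time-shift preserves the spectrum of the white noise $\epsilon_t$. Writing $X_{t-1,i}\epsilon_{t,j} = (e_i^\top Z_{t-1})(e_{p+j}^\top Z_{t-1})$ and polarizing,
\begin{equation*}
X_{t-1,i}\epsilon_{t,j}\;=\;\tfrac{1}{2}\big[(e_i + e_{p+j})^\top Z_{t-1}\big]^2 - \tfrac{1}{2}(e_i^\top Z_{t-1})^2 - \tfrac{1}{2}(e_{p+j}^\top Z_{t-1})^2,
\end{equation*}
exhibits each summand as a quadratic form in $Z_{t-1}$ along a direction of bounded Euclidean norm. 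Invoking the Basu--Michailidis quadratic-form concentration on each of the three pieces then yields, for any $v \in (0,1]$ and any fixed $(i,j)$,
\begin{equation*}
\PP\Big(\big|T^{-1}\textstyle\sum_t X_{t-1,i}\epsilon_{t,j}\big| > C\,\mathcal{M}(g_Z)\,v \Big) \;\leq\; c_1 e^{-c_2 T v^2},
\end{equation*}
since the population cross-moments vanish under strict exogeneity. Choosing $v \asymp \sqrt{\log p/T}$ (admissible under $T \succsim \log p$, so that $v\leq 1$) and union-bounding over the $p^2$ entry pairs gives the stated probability. A block-matrix operator-norm bound then provides $\mathcal{M}(g_Z) \lesssim \mathcal{M}(g_X) + \mathcal{M}(g_\epsilon) + \mathcal{M}(g_{X,\widetilde{\epsilon}})$, matching the three terms on the right-hand side of~\eqref{eqn:boundXE}.

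The principal obstacle is transferring the single-process Basu--Michailidis bound to our augmented process in a way that cleanly separates the contributions of $g_X$, $g_\epsilon$, and the cross-spectrum $g_{X,\widetilde{\epsilon}}$. Because $X_t$ depends on all past $\epsilon$'s but is uncorrelated with future ones, the cross-covariances $\Gamma_{X,\widetilde{\epsilon}}(h)=\E X_t\epsilon_{t+h+1}^\top$ are nonzero only for $h \geq 0$, leading to an asymmetric off-diagonal block in $g_Z(\omega)$; tracking this asymmetry is necessary to obtain exactly the third summand in~\eqref{eqn:boundXE}. Beyond this, verifying that polarization does not introduce spurious bias terms (which relies precisely on $\E X_{t-1,i}\epsilon_{t,j}=0$), and the block-matrix norm bound on $\mathcal{M}(g_Z)$, are routine.
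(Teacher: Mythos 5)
Your proposal is correct and follows essentially the same route as the paper: entrywise concentration of the bilinear forms $e_i^\top(\mathbf{X}_{T-1}^\top\mathbf{E}/T)e_j$, a union bound over the $p^2$ entries, and the choice $\eta\asymp\sqrt{\log p/T}$ under $T\succsim\log p$. The only difference is that the paper invokes Proposition 2.4(b) of \citet{basu2015estimation} directly for jointly stationary processes, whereas you re-derive that bilinear-form bound from the single-process quadratic-form version via the augmented process and polarization --- which is precisely how that proposition is proved, so nothing is gained or lost.
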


Note that with the definition of the shifted processes $\{\widetilde{\epsilon}_t\}$, we have 
$g_{X,\widetilde{\epsilon}}(\omega) = e^{-ih\omega} g_{X,\epsilon}(\omega)$, which implies $\mathcal{M}(g_{X,\widetilde{\epsilon}})=\mathcal{M}(g_{X,\epsilon})$. Hence, the term that measures the upper extreme of the cross-spectrum between $X_t$ and the shifted process in~\eqref{eqn:boundXE} can be replaced by its unshifted counterpart. Moreover, since $g_\epsilon(\omega)=\tfrac{\sigma_\epsilon}{2\pi}$, its upper extreme is given by $\mathcal{M}(g_\epsilon) = \Lambda_{\max}(\Sigma_\epsilon)/(2\pi)$. 

The next lemma provides an upper bound for the maximum eigenvalue of the sample covariance matrix.
\begin{lemma}[High probability concentration for $\Lambda_{\max}(S_{\mathbf{E}})$]\label{lemma:Emax} Consider $\mathbf{E}\in\R^{T\times p}$ whose rows are independent realizations of the mean zero Gaussian random vector $\epsilon_t$ with covariance $\Sigma_\epsilon$. Then, for sample size $T\succsim p$, with probability at least $1-\exp(-T/2)$, the following bound holds:
	\begin{equation*}
	\Lambda_{\max}(S_{\mathbf{E}}) \leq 9\Lambda_{\max}(\Sigma_\epsilon). 
	\end{equation*}
\end{lemma}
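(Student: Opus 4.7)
The plan is to reduce the problem to the isotropic Gaussian case via a whitening transformation, and then to control the operator norm of a standard Wishart matrix by an $\varepsilon$-net argument combined with chi-squared concentration. First I would write $\epsilon_t = \Sigma_\epsilon^{1/2} z_t$ where $\{z_t\}_{t=1}^T$ are i.i.d.\ $\Normal(0, I_p)$, and collect the $z_t$'s into a matrix $\mathbf{Z}\in\R^{T\times p}$. Then
$S_{\mathbf{E}} = \Sigma_\epsilon^{1/2} \big(\mathbf{Z}^\top \mathbf{Z}/T\big) \Sigma_\epsilon^{1/2}$, so by submultiplicativity of the operator norm,
$\Lambda_{\max}(S_{\mathbf{E}}) \leq \Lambda_{\max}(\Sigma_\epsilon) \cdot \Lambda_{\max}(\mathbf{Z}^\top\mathbf{Z}/T)$. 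It therefore suffices to show $\Lambda_{\max}(\mathbf{Z}^\top\mathbf{Z}/T) \leq 9$ with probability at least $1-\exp(-T/2)$ whenever $T \succsim p$.

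For the latter, I would use the standard discretization argument. Fix a $1/4$-net $\mathcal{N}$ of the unit sphere $S^{p-1}$ with $|\mathcal{N}| \leq 9^p$, and recall the elementary fact that $\Lambda_{\max}(\mathbf{Z}^\top\mathbf{Z}/T) \leq 2 \max_{v\in\mathcal{N}} v^\top (\mathbf{Z}^\top\mathbf{Z}/T) v$. For each fixed $v \in S^{p-1}$, the random variable $T \cdot v^\top (\mathbf{Z}^\top\mathbf{Z}/T) v = \|\mathbf{Z}v\|_2^2$ is distributed as $\chi^2_T$, and standard chi-squared tail bounds (Laurent--Massart) give
\begin{equation*}
\PP\Big( \|\mathbf{Z}v\|_2^2/T \geq 1 + 2\sqrt{t/T} + 2t/T \Big) \leq \exp(-t).
\end{equation*}
Choosing $t = cT$ for an appropriate constant $c$ yields a uniform bound $v^\top(\mathbf{Z}^\top\mathbf{Z}/T)v \leq 9/2$ at each $v\in\mathcal{N}$ with failure probability $\exp(-c T)$. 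A union bound over $\mathcal{N}$ gives overall failure probability at most $9^p \exp(-cT) = \exp\big(p\log 9 - cT\big)$, which is bounded by $\exp(-T/2)$ once $T \geq C p$ for a sufficiently large absolute constant $C$ (this is what the scaling $T \succsim p$ delivers). Combined with the discretization inequality, this yields $\Lambda_{\max}(\mathbf{Z}^\top\mathbf{Z}/T) \leq 9$ on the good event, and hence the claimed bound after re-introducing the factor $\Lambda_{\max}(\Sigma_\epsilon)$.

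There is no substantive obstacle here since the statement is a textbook concentration result; the only care needed is in bookkeeping the constants so that the net factor $9^p$, the chi-squared tail, and the factor $2$ from the discretization combine to deliver precisely the stated constants $9$ and $\exp(-T/2)$. If the constants are tight enough, a slightly finer net (say $1/8$) together with a sharper chi-squared deviation can be used instead; alternatively, one may invoke the standard Gaussian matrix operator norm estimate $\smalliii{\mathbf{Z}/\sqrt{T}}_{\op} \leq 1 + \sqrt{p/T} + \delta$ with probability $1 - 2\exp(-T\delta^2/2)$, setting $\delta$ appropriately under $T \succsim p$ to control $(1+\sqrt{p/T}+\delta)^2 \leq 9$.
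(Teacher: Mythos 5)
Your proposal is correct, and your primary route is genuinely different from the paper's. The paper proves this in two lines by citing a ready-made deviation inequality (Lemma~9 of \citet{wainwright2009sharp}), which states that $\smalliii{S_{\mathbf{E}}-\Sigma_\epsilon}_{\op}\leq \Lambda_{\max}(\Sigma_\epsilon)\,\delta(T,p,\eta)$ with $\delta(T,p,\eta)=2(\sqrt{p/T}+\eta)+(\sqrt{p/T}+\eta)^2$ holding with probability $1-2\exp(-T\eta^2/2)$; taking $\eta=1$ and $T\geq p$ gives $\delta\leq 8$, and the triangle inequality yields the constant $9$ directly, with no whitening step needed. Your main argument instead rebuilds the concentration from scratch: whitening plus submultiplicativity to reduce to the isotropic Wishart case, a $1/4$-net of cardinality $9^p$, the Laurent--Massart $\chi^2_T$ tail at each net point, and a union bound absorbed by $T\succsim p$. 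The constants do close (e.g.\ $t=0.8T$ gives a per-point bound of $1+2\sqrt{0.8}+1.6<4.5$, and $9^p e^{-0.8T}\leq e^{-T/2}$ once $T\geq 8p$), so the argument is sound; what it buys is a self-contained proof that makes explicit why $T\succsim p$ is needed (to dominate the $p\log 9$ metric entropy of the sphere), at the cost of more bookkeeping and of routing the covariance through $\Lambda_{\max}(\Sigma_\epsilon)\cdot\Lambda_{\max}(\mathbf{Z}^\top\mathbf{Z}/T)$ rather than bounding $\smalliii{S_{\mathbf{E}}-\Sigma_\epsilon}_{\op}$ directly. Note that the closing alternative you mention --- invoking $\smalliii{\mathbf{Z}/\sqrt{T}}_{\op}\leq 1+\sqrt{p/T}+\delta$ with probability $1-2\exp(-T\delta^2/2)$ --- is essentially the Davidson--Szarek estimate underlying the Wainwright lemma, so that variant coincides with the paper's actual proof up to the whitening step.
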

Proofs for Lemmas~\ref{lemma:RSC} to~\ref{lemma:Emax} can be found in Appendix~\ref{appendix:proof-of-theory-lemma}.

Up to this stage, we have verified the RSC condition and obtained the high probability bounds for quantities that are associated with the choice of $(\lambda_B,\lambda_\Theta)$, for random realizations from the underlying processes. Theorem~\ref{thm:combine} combines the results in Corollary~\ref{cor:eb} and Lemmas~\ref{lemma:RSC} to~\ref{lemma:Emax}, and provides a high probability error bound of the estimates when the data are random realizations from the underlying processes, as stated next. 

\begin{theorem}[High probability error bound with random realizations]\label{thm:combine} Suppose we are given a snapshot of length $(T+1)$ $\{x_0,\dots,x_{T}\}$ from the $p$-dimensional observable process $\{X_t\}$, whose dynamics are described in~\eqref{model:ours} with $B^\star\in \mathbb{B}_q(R_q)$. Then, there exist universal positive constants $c_i~(i=1,2)$ and $c_i'~(i=1,2)$ such that for sample size $T\succsim p$, by solving convex problem~\eqref{eqn:optimization0} with regularization parameters
	\begin{equation*}
	\lambda_B = c_1\big(\mathcal{M}(g_X) + \mathcal{M}(g_\epsilon) + \mathcal{M}(g_{X,\epsilon})\big)\sqrt{\tfrac{\log p}{T}} + 4\phi/\sqrt{Tp} \quad \text{and} \quad \lambda_\Theta = c_2\Lambda^{1/2}_{\max}(\Sigma_\epsilon),
	\end{equation*}
	the solution $(\widehat{B},\widehat{\Theta})$ has the following bound with probability at least $1-c_1'\exp\big(-c_2'\log p\big)$, by choosing the thresholded level at $\kappa\lambda_B$ with $\kappa:=\max\big\{ \mathfrak{m}^{-1}(g_X),\pi \big\}$:
	\begin{equation}\label{eqn:hpbound}
	\smalliii{\Delta_B}^2_{\F} + \smalliii{\Delta_{\Theta}/\sqrt{T}}^2_{\F} \leq C_1\cdot \kappa^{2-q} \lambda_B^{2-q}R_q + C_2\cdot \kappa^2 K + C_3 \Big( \frac{\log p}{T}\Big)( \kappa \lambda )^{2-2q}R_q^2,
	\end{equation}
	where $C_i~(i=1,2)$ are positive constants that are independent of $T$ and $p$.
\end{theorem}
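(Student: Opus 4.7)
The plan is to assemble Theorem~\ref{thm:combine} by combining the deterministic error bound of Corollary~\ref{cor:eb} with the three probabilistic facts established in Lemmas~\ref{lemma:RSC}, \ref{lemma:boundXE}, and~\ref{lemma:Emax}. First I would apply a union bound over the events ``the RSC condition holds with curvature $\pi\mathfrak{m}(g_X)$ and tolerance $\tau_T = \gamma^2(\alpha_{\text{RSC}}/2)(\log p/T)$'' (Lemma~\ref{lemma:RSC}), ``$\|\mathbf{X}_{T-1}^\top\mathbf{E}/T\|_\infty \lesssim (\mathcal{M}(g_X)+\mathcal{M}(g_\epsilon)+\mathcal{M}(g_{X,\epsilon}))\sqrt{\log p/T}$'' (Lemma~\ref{lemma:boundXE}), and ``$\Lambda_{\max}(S_\mathbf{E}) \leq 9\Lambda_{\max}(\Sigma_\epsilon)$'' (Lemma~\ref{lemma:Emax}). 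Each holds with probability at least $1-c_1\exp(-c_2 \min\{T,\log p\})$, and since $T \succsim p \geq \log p$, the intersection carries probability at least $1-c_1'\exp(-c_2'\log p)$ after absorbing constants. On this good event, the stated choices
\[
\lambda_B = c_1(\mathcal{M}(g_X)+\mathcal{M}(g_\epsilon)+\mathcal{M}(g_{X,\epsilon}))\sqrt{\tfrac{\log p}{T}} + 4\phi/\sqrt{Tp}, \qquad \lambda_\Theta = c_2\Lambda_{\max}^{1/2}(\Sigma_\epsilon),
\]
automatically satisfy the deviation conditions~\eqref{eqn:choice_of_lambda} of Theorem~\ref{thm:fix-bound-convex} for suitable $c_1,c_2$.

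Second, I would verify that the sparsity hypothesis~\eqref{eqn:sparsity} is met. Setting $\alpha' := \min\{\alpha_{\text{RSC}},1\}=\min\{\pi\mathfrak{m}(g_X),1\}$, we have $1/\alpha' \asymp \kappa := \max\{\mathfrak{m}^{-1}(g_X),\pi\}$ up to the constant $\pi$, so the Corollary~\ref{cor:eb} thresholding level $\eta = \lambda_B/\alpha'$ can be written as $\eta = \kappa\lambda_B$ after adjusting constants. Because $B^\star\in\mathbb{B}_q(R_q)$, the cardinality of the thresholded support satisfies $s_\eta \leq \eta^{-q}R_q \asymp (\kappa\lambda_B)^{-q}R_q$, while $\tau_T \asymp (\log p/T)$ up to spectral-density constants. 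Then condition~\eqref{eqn:sparsity} becomes, up to constants, $(\log p/T)\bigl(s_\eta + K(\lambda_\Theta/\lambda_B)^2\bigr) \lesssim 1$, which in turn requires $T \succsim (s_\eta + K(\lambda_\Theta/\lambda_B)^2)\log p$; since $\lambda_\Theta/\lambda_B \asymp \sqrt{T/\log p}$ by the prescribed choices, the $K$-term contributes $K\cdot T/\log p \cdot (\log p /T) = K$, and similarly $s_\eta \log p /T$ is small, so the requirement reduces to $T \succsim p$ (using $K,s_\eta \leq p$), matching the hypothesis of the theorem.

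Third, I would invoke Corollary~\ref{cor:eb} directly. Substituting $\alpha' \asymp \kappa^{-1}$ into its conclusion gives
\[
\smalliii{\Delta_B}_{\F}^2 + \smalliii{\Delta_\Theta/\sqrt{T}}_{\F}^2 \lesssim (\kappa\lambda_B)^{2-q}R_q + \kappa^2\lambda_\Theta^2 K + \tau_T \kappa\,(\kappa\lambda_B)^{2-q}R_q^2.
\]
Plugging $\tau_T \asymp \log p/T$ and $\lambda_\Theta^2 \asymp 1$ (absorbed into $\Lambda_{\max}(\Sigma_\epsilon)$ and $c_2$) into the three terms yields exactly the three summands in~\eqref{eqn:hpbound}, with $C_1,C_2,C_3$ collecting the dependence on $\mathfrak{m}(g_X)$, $\mathcal{M}(g_X)$, $\mathcal{M}(g_\epsilon)$, $\mathcal{M}(g_{X,\epsilon})$, and $\Lambda_{\max}(\Sigma_\epsilon)$ but independent of $T$ and $p$.

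The main obstacle I anticipate is the bookkeeping in step two: checking that~\eqref{eqn:sparsity} holds simultaneously under the random realization, especially because $s_\eta$ itself depends on $\lambda_B$ (hence on $T$ and $p$) through the weakly-sparse reduction, and because the ratio $\lambda_\Theta/\lambda_B$ is not dimensionless. One must be careful that the spectral-density quantities $\mathcal{M}(g_X)$, $\mathfrak{m}(g_X)$, $\mathcal{M}(g_\epsilon)$, $\mathcal{M}(g_{X,\epsilon})$ — which depend on $\Lambda$, $B^\star$, $\Phi(L)$ and $\Sigma_\epsilon$ — are treated as bounded model primitives when concluding that the $T \succsim p$ scaling suffices; otherwise the reduction of~\eqref{eqn:sparsity} to $T \succsim p$ breaks. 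Everything else is substitution and collection of constants.
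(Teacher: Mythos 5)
Your proposal is correct and follows essentially the same route as the paper, which proves Theorem~\ref{thm:combine} exactly by intersecting the high-probability events of Lemmas~\ref{lemma:RSC}--\ref{lemma:Emax} (the $T\succsim p$ scaling making $\exp(-cT)$ terms absorbable into $\exp(-c'\log p)$), verifying~\eqref{eqn:choice_of_lambda} on that event, and substituting into Corollary~\ref{cor:eb} with $\eta=\lambda_B/\alpha'$ and $\kappa=\pi/\alpha'$. Two small bookkeeping points: the $K(\lambda_\Theta/\lambda_B)^2$ term in~\eqref{eqn:sparsity} contributes a quantity of order $K$ times spectral constants, so that condition does not ``reduce to $T\succsim p$'' as you claim but instead requires $K$ (and the spectral ratios) to be treated as bounded model primitives, which is how the paper implicitly handles it; and the exponent on the third term should be $2-2q$ (since $\bigl(\sum_{(i,j)\notin S^\star_\eta}|B^\star_{ij}|\bigr)^2\leq R_q^2\eta^{2-2q}$), consistent with~\eqref{eqn:hpbound} --- the $2-q$ you carried over from the displayed statement of Corollary~\ref{cor:eb} is a typo there.
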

\begin{remark}\label{remark:id} Note that Theorem~\ref{thm:combine} requires that $T\succsim p$ for relevant quantities to properly concentrate; as a consequence, the estimation errors for $\Delta_B$ and $\Delta_{\Theta}$ are jointly bounded. The sample size requirement is of the same order as in classical factor analysis\footnote{In classical factor analysis, for both the factors and its loadings to be consistently estimated, both $\sqrt{p}/T\rightarrow 0$ and $\sqrt{T}/p\rightarrow 0$ are required to hold simultaneously.} literature \citep[e.g.][]{bai2008large}, and is standard under the context of recovering a low-rank component based on noisy data in high-dimensional statistics \citep[e.g.,][]{agarwal2012noisy}. The nature of the upper bound provided is a consequence of $F_t$ being latent, and hence the low rank factor hyperplane and the error term become not perfectly distinguishable; in particular, the structure of the underlying optimization resembles a noisy matrix completion problem in which the restricted isometry property is violated \citep[see also][]{candes2010matrix}. Further details on model identifiability issues are given in Appendix B.
\end{remark}

\subsection{Convergence Analysis of Algorithm \ref{algo:1}.}\label{sec:convergence}

The convergence property of Algorithm~\ref{algo:1} can be established 
using familiar arguments and exploiting its convex nature. Specifically, define the 
objective function is given by
\begin{equation*}
f(B,\Theta):= \ell_T(\mathbf{X}; B,\Theta) + \lambda_B\|B\|_1 + \lambda_\Theta\smalliii{\Theta}_*
\end{equation*}
and is {\em jointly convex} in $(B,\Theta)$, with a convex feasible region $\mathbb{B}_\infty(\phi,\mathbf{X}_{T-1})$. 
Thus, it directly follows from \citet{tseng2001convergence} that the alternating minimization that generates the sequence $\{(\widehat{B}^{(k)},\widehat{\Theta}^{(k)})\}$ converges to a stationary point which is also a global optimum, though the global optimum is not necessarily unique.

To conclude this section, we remark that the theoretical formulation in~\eqref{eqn:optimization0} can be solved in an analogous way to Algorithm~\ref{algo:1}. Specifically, the update of $\Theta$ requires modification to satisfy the constraint on the feasible region of $\Theta$, and the partial minimization can be solved by employing the composite gradient descent algorithm of \citet{nesterov2007gradient} that involves singular value thresholding steps. Nevertheless, the modified algorithm is also convergent, as the one in Algorithm~\ref{algo:1}.

\section{Implementation and Performance Evaluation.}\label{sec:simulation}

In this section, we present results for simulation studies under various settings to demonstrate the performance of our proposed model. As comparison, we also present the common space (to be defined later) recovery error and the one-step-ahead forecast error across our proposed method, the vanilla factor analysis using PC method a la \citet{stock2005implications}, and the proposed method in \citet{forni2005generalized}. 

\paragraph{An empirical algorithmic relaxation.} The actual implementation of Algorithm~\ref{algo:1} requires $\lambda_B$, $\lambda_\Theta$ as inputs, which in practice are challenging to select. On the other hand, the computation procedure designed for solving the convex program in~\eqref{eqn:optimization0_hetero} suggests that to obtain the estimates boils down to alternating between the following two steps: 
(1) a regularized regression (lasso) update on the rows of $B$; and (2) an SVT update on $\Theta$.  This naturally motivates the following steps in the implemented version of the algorithm, outlined next in Algorithm~\ref{algo:2}. 

\begin{algorithm}[!ht]
	\scriptsize
	\setstretch{0.5}
	\caption{\small An alternating minimization algorithm to obtain $\widehat{B}_{\text{emp}}$ and $\widehat{\Theta}_{\text{emp}}$.}\label{algo:2}
	\KwIn{Time series data $\{x_t\}_{t=0}^T$, tuning parameter $\lambda_B$, rank constraint $r$. }
	\BlankLine
	\textbf{Initialization:} Initialize with $\bar{\Theta}^{(0)}=\text{SVT}(\mathbf{X}_T)$\;
	\textbf{Iterate until convergence:} \\
	$\bullet$ \begin{minipage}[t]{15cm}
		Update $\bar{B}^{(m)}$ with the plug-in $\bar{\Theta}^{(m-1)}$ so that each row $j$ is obtained with Lasso regression (in parallel) and solves
		\begin{equation*}
		\bar{B}_{j\cdot} = \min_\beta \Big\{ \frac{1}{2T}\vertii{\big[\mathbf{X}_{T} - \bar{\Theta}^{(m-1)}\big]_{\cdot j} - \mathbf{X}_{T-1}\beta}^2  + \lambda_B \|\beta\|_1  \Big\}.
		\end{equation*}
	\end{minipage}\\
	$\bullet$  \begin{minipage}[t]{15cm}
		Update $\bar{\Theta}^{(m)}$ by singular value thresholding (SVT): do SVD on the lagged value-adjusted hyperplane, i.e., 
		\begin{equation*}
		\mathbf{X}_T - \mathbf{X}_{T-1}\bar{B}^{(m)} = UDV, \qquad \text{where}~~~D := \text{diag}\big(d_1,\dots,d_r,d_{r+1},\dots,d_{\min(T,p)}\big), 
		\end{equation*}
		and construct $\bar{\Theta}^{(m)}$ by 
		\begin{equation*}
		\bar{\Theta}^{(m)} = UD_rV, \qquad \text{where}~~~D_r:=\text{diag}\big(d_1,\dots,d_r,0,\dots,0\big).
		\end{equation*}
	\end{minipage}
	\BlankLine
	\KwOut{Estimated sparse transition matrix $\widehat{B}_{\text{emp}}=\bar{B}^{(\infty)}$ and the low rank hyperplane $\widehat{\Theta}_{\text{emp}}=\bar{\Theta}^{(\infty)}$.}
\end{algorithm}
\normalsize
Algorithm~\ref{algo:2} outlines the algorithmic relaxation to obtaining $(\widehat{B},\widehat{\Theta})$ in~\eqref{eqn:optimization0_hetero}, and it can be viewed as an alternating minimization algorithm that solves
\small
\begin{align}\label{eqn:optimization1}
(\widehat{\Theta}_{\text{emp}},\widehat{B}_{\text{emp}}):= \argmin~\big\{ \tfrac{1}{2T}\vertiii{\mathbf{X}_T - \Theta - \mathbf{X}_{T-1}B^\top}_{\F}^2  + \lambda_B \vertii{B}_1 \big\}, ~~~\text{subject to}~~\text{rank}(\Theta)\leq r.
\end{align}
\normalsize
For each update, the partial minimization step with respect to $\Theta$ or $B$ ensures that the value of the objective function is always non-ascending, which together with the fact that the objective function is bounded below guarantees convergence of the objective function iterates. In practice, the algorithm is terminated when the descent magnitude of the objective function between successive iterations is smaller than some pre-specified tolerance level.
This algorithm does not provide guarantees of convergence to a stationary point of the sequence of $(\bar{\Theta}^{(k)},\bar{B}^{(k)})$ iterates, which requires stronger assumptions --- either the convexity of the objective function and the constraint region, or the uniform compactness of the generated sequence of iterates. 

\paragraph{Choice of the tuning parameter $\lambda_B$ and the rank constraint $r$.} The implementation of Algorithm~\ref{algo:2} requires a specific pair of $(\lambda_B,r)$ as input. We consider choosing the optimal pair of $(\lambda_B,r)$ based on the information criterion proposed in \citet{ando2015selecting}, called the Panel Information Criterion (PIC) and defined as:
\begin{equation}\label{eqn:PIC}
\text{PIC}(\lambda_B,r) := \frac{1}{Tp}\smalliii{\mathbf{X}_T-\widehat{\Theta}_{\text{emp}} - \mathbf{X}_{T-1}\widehat{B}^\top_{\text{emp}}}_{\F}^2 + \widehat{\sigma}^2 \Big[ \frac{\log T}{T}\|\widehat{B}_{\text{emp}}\|_0 + r(\frac{T+p}{Tp})\log(Tp)\Big],
\end{equation}
where $\widehat{\sigma}^2 = \tfrac{1}{np}\smalliii{\mathbf{X}_n-\widehat{\Theta}_{\text{emp}} - \mathbf{X}_{n-1}\widehat{B}_{\text{emp}}^\top}_{\F}^2$ and $(\widehat{B}_{\text{emp}},\widehat{\Theta}_{\text{emp}})$ are solutions to~\eqref{eqn:optimization1} with the specific pair of plug-in $(\lambda_B,r)$. The optimal pair $(\lambda_B,r)$ is then selected in two steps: in step 1, we obtain $(\lambda_{B}^0,r^0)$ that gives the smallest PIC over a lattice $\mathcal{G}_{\lambda_B}\times \mathcal{G}_r := \{\lambda_B^{(1)},\dots,\lambda_B^{(j_1)}\}\times \{r^{(1)},\dots,r^{(j_2)} \}$; in step 2, we fix $r$ at $(d+1)\times r^0$ where $d$ is the number of lags corresponding to the sparse $\VAR(d)$ model, and seek for $\lambda_B^{\text{opt}}$ over a grid that minimizes PIC$(\lambda_B, (d+1)r^0)$. The optimal pair of tuning parameters is then given by $(\lambda_B^{\text{opt}},r^{\text{opt}}):=(\lambda_B^{\text{opt}},(d+1)\times r^0)$.

\paragraph{Data generating mechanism.} Synthetic data are generated according to the lag-adjusted factor model representation $X_t = \Lambda F_t + BX_{t-1} + \epsilon_t$. Starting from the standard approximate factor model representation $X_t = \widetilde{\Lambda} f_t + u_t$, $u_t$ is serially correlated and follows a $\text{sparse }\VAR(d)$ model\footnote{Throughout this section, we assume $d=1$; additional results for $d>1$ have been deferred to Supplement~\ref{appendix:VARd}.}, at each timestamp $t$, the $K$-dimensional factor is generated according to a $\VAR(q)$ model $f_t = \Phi_1 f_{t-1} +\cdots+\Phi_q f_{t-q} +  \eta_t$ where  $\eta_t\sim \mathcal{N}(0,\sigma_\eta^2\mathrm{I})$; decorrelating $u_t$ leads to following dynamic of $X_t$:
\begin{equation*}
X_t = \widetilde{\Lambda} f_t - B\widetilde{\Lambda} f_{t-1} + BX_{t-1} + \epsilon_t =: \Lambda F_t + BX_{t-1} + \epsilon_t, 
\end{equation*}
where $\Lambda = [\widetilde{\Lambda},B\widetilde{\Lambda}]$ and $F_t = (f_t^\top,f^\top_{t-1})^\top\in\R^{2K}$.

We consider several simulation settings as listed in Table~\ref{table:sim-setup} to test various facets of the model, primarily encompassing the dimensionality of the system $p$ and the number of factors $K$, as well as the sparsity structure of $B$ and its spectral radius that captures the level of autocorrelation. In addition to settings S0 to S4 where $\epsilon_t$ is Gaussian, to test the robustness of the proposed model to the presence of heavy tails, we consider also cases where $\epsilon_t$ follows some multivariate $t$ distribution (S5 to S7). Throughout all numerical experiments presented in this section, the sample size $T$ is fixed at 200 and the spectral radius of the $\VAR(q)$ system is set randomly from $\mathsf{Unif}[0.6,0.8]$. 
\begin{singlespace}
	\begin{table}[!h]
		\scriptsize
		\centering
		\captionsetup{font=scriptsize}
		\begin{tabular}{r|ccc|cc|l|c}
			\specialrule{1pt}{1pt}{1pt}
			& $p$ & sparsity structure of $B$ & $\varrho(B)$ & $K$ & $q$ & structure of $\Sigma_\epsilon$-dist. &   factor space/lag space relative strength  \\ \hline
			S0 & 100 & $2/p$, exactly sparse &  $0.7$ & 2 & 1 & diagonal - $\mathcal{N}$ & strong factor~~$\approx 3:2$ \\ 
			S1 & 100 & $5/p$, weakly sparse & 0.7 & 2 & 1 & Toeplitz(0.2) - $\mathcal{N}$ & strong factor~~$\approx 2:1$ \\
			S2 & 300 & $2/p$, weakly sparse & 0.7 & 5 & 1 & diagonal - $\mathcal{N}$ & strong factor~~$\approx 2:1$ \\
			S3 & 200 & $2/p$, exactly sparse & 0.9 & 5 & 2 & diagonal - $\mathcal{N}$ & strong lag~~$\approx 2:3$ \\
			S4 & 200 & $2/p$, weakly sparse & 0.7 & 5 & 4 & Toeplitz(0.2) - $\mathcal{N}$ & strong factor~~$\approx 3:2$\\ \hline
			S5 & 100 & $2/p$, exactly sparse & 0.7 & 5 & 1 & diagonal - $t_4$ & strong factor~~$\approx 3:2$\\
			S6 & 200 & $2/p$, weakly sparse & 0.7 & 5 & 1 & Toeplitz(0.2) - $t_8$ & strong factor~~$\approx 1:1$\\
			\specialrule{1pt}{1pt}{1pt}
		\end{tabular}
		\caption{Simulation settings for data generated according to a lag-adjusted dynamic factor model.}\label{table:sim-setup}
	\end{table}
\end{singlespace}
To generate the sparse transition matrix $B$, for each row that corresponds to the coefficients of each single time series regression, its (strong) support set is randomly generated to meet the specified density level (i.e., $2/p$ or $2/5$), and nonzero entries are then generated from $\pm\mathsf{Unif}[m_B-0.1,m_B+0.1]$. In the case of a weakly sparse $B$, entries in the weak support set are generated from $\mathsf{Unif}([-10\%m_B,10\%m_B])$. Finally, all entries are scaled to meet the specified $\varrho(B)$ level, to ensure that the system is stationary. For the dense factor loading matrix $\Lambda$, its entries are generated from $\pm\mathsf{Unif}[m_\Lambda-0.1,m_\Lambda+0.1]$. It is worth noting that the value of $m_B$ and $m_\Lambda$ are set so that the factor/lag space relative strength is satisfied, measured by the empirical relative signal-to-noise ratio for the $\Lambda F_t$ and the $BX_{t-1}$ component.

\paragraph{Performance evaluation.} To measure the accuracy of the obtained estimates and forecast, we focus on the following four components of the model:
\begin{itemize}
	\item[--] For the (weakly) sparse transition matrix $B$ we use sensitivity $\textrm{SEN} = \frac{\textrm{TP}}{\textrm{TP}+\textrm{FN}}$, specificity $\textrm{SPC} = \frac{\textrm{TN}}{\textrm{FP}+\textrm{TN}}$ and relative error in Frobenius norm ($\RERR_B$) as evaluation criteria. Note that in the case where $B$ is weakly sparse, despite the fact that entries in the weak support set are not exactly zero, they are effectively deemed as zeros for comparison purpose. 
	\item[--] For the factor hyperplane $\Theta$, since we don't separately identify the factors and in addition the factor space is invariant to identification restrictions, we measure its relative error in Frobenius norm ($\RERR_{\Theta}$), as well as its relative {\em projection error}, defined as $\text{ProjErr}_{\Theta}:= \smalliii{\Pi_{\widehat{\Theta}}-\Pi_{\Theta^\star}}_\F/\smalliii{\Pi_{\Theta^\star}}_\F$, where $\Pi_{\Theta^\star} := Q_{\Theta^\star}Q_{\Theta^\star}^\top$ with $Q_{\Theta^\star}$ being the orthonormal basis of $\Theta^\star$; $\Pi_{\widehat{\Theta}}$ can be analogously defined. Note that the following correspondence between $\sin\theta$ distance and the projection error holds: $\smalliii{\sin\theta(\widehat{\Theta},\Theta^\star)}_\F^2 = \tfrac{1}{2}\smalliii{\Pi_{\widehat{\Theta}}-\Pi_{\Theta^\star}}_\F$; moreover, this metric is not applicable in high-dimensional regimes ($p\geq T$) where it would stays at zero.
	\item[--] For the common space, in the case where it is estimated with the proposed lag-adjust DFM, at the population level it is captured by $BX_{t-1} + \Lambda F_t$ and hence its estimate is given by $\widehat{\Theta} + \mathbf{X}_{T-1}\widehat{B}^\top$; whereas in the case where the model is estimated based on the SW formulation or GDFM, the estimated factor space coincides with that of the common space. For all three models, we present the relative error in Frobenius norm of the estimates.
	\item[--] For the one-step-ahead forecast, we measure its squared $\ell_2$ norm w.r.t. the oracle $x_{T+1}^\star$, that is, $\|\widehat{x}_{T+1} - x^\star_T\|^2/\|x^\star_T\|^2$, where the oracle is given by $x^\star_{T+1} = B x_T + \Lambda F_{T+1}$ and can be viewed as the ``denoised'' version of $x_{T+1}$.
\end{itemize}

\begin{singlespace}
	\begin{table}[!h]
		\scriptsize
		\centering
		\captionsetup{font=scriptsize}
		\begin{tabular}{r|ccc|ccc|ccc|ccc}
			\specialrule{1pt}{1pt}{1pt}
			& \multicolumn{3}{c|}{$B$ recovery (lag-adj DFM)} & \multicolumn{3}{c|}{ $\Theta$ recovery (lag-adj DFM)} & \multicolumn{3}{c|}{common space recovery } & \multicolumn{3}{c}{one-step-ahead forecast}\\ 
			\cline{2-4} \cline{5-7} \cline{8-10} \cline{11-13}
			& SEN & SPC & $\RERR_B$ & $\widehat{K}$ & $\text{ProjErr}_{\Theta}$ & $\RERR_\Theta$ & lag-adj DFM & SW & GDFM & lag-adj DFM & SW & GDFM \\ \hline
			S0 &  0.99 & 0.98 & 0.28 & 2 & 0.15 & 0.20 & 0.13 & 0.32 & 0.31 & 0.51 & 0.60 & 0.53 \\
			S1 & 0.97 & 0.92 & 0.51 & 2 & 0.16 & 0.47 & 0.27 & 0.47 & 0.45 & 0.56 & 0.91 & 0.66 \\
			S2 & 0.99 & 0.95 & 0.74 & 5 & -- & 0.58 & 0.35 & 0.45 & 0.45 & 0.60 & 0.73 & 0.67\\
			S3 & 0.99 & 0.98 & 0.19 & 5 & -- & 0.26 & 0.22 & 0.72 & 0.50 & 0.36 & 0.92 & 0.90 \\
			S4 & 0.98 & 0.97 & 0.58 & 5 & -- & 0.51 & 0.32 & 0.44 & 0.44 & 0.47 & 0.58 & 0.57 \\ \hline
			S5 & 0.92 & 0.92 & 0.61 & 5 & 0.31 & 0.48 & 0.10 & 0.13 & 0.14 & 0.43 & 0.42 & 0.45 \\
			S6 & 0.98 & 0.93 & 0.47 & 5 & -- & 0.53 & 0.35 & 0.63 & 0.65 & 0.55 & 0.90 & 0.92 \\
			\specialrule{1pt}{1pt}{1pt}
		\end{tabular}
		\caption{Performance Evaluation for various Simulation Settings, median across 100 replications.}\label{table:sim-ours-results}
	\end{table}	
\end{singlespace}
As Table~\ref{table:sim-ours-results} demonstrates, for all three components, estimates obtained from Algorithm~\ref{algo:2} exhibit good performance. In particular, (i) the proposed method is robust to the sparsity structure of $B$, as both exactly-sparse and weakly-sparse settings yield very satisfactory strong support recovery (see S1, S2 and S4). (ii) A larger panel size $p$ leads to improved factor hyperplane recovery, as manifested in the form of smaller relative error in magnitude estimation although it requires the sparsity of the transition matrix to decrease accordingly (recall that it is set to $2/p$); however, the performance deteriorates as the dynamics of $f_t$ become more complex (e.g., S4). (iii) A strong signal in the lag-space leads to improved recovery of $B$, despite the presence of stronger temporal dependence which empirically incurs the algorithm to take more iterations to converge (e.g., S3). For all settings, PIC correctly selects the number of factors, which translates into the correct identification of the rank constraint. 

Next, we compare the performance of common space recovery and forecasting for the following three methods: the posited model, standard SW formulation and GDFM. For SW \citep{stock2005implications}, the reported error is based on the minimum error among estimates obtained under different rank constraints ranging between $K$ and $2K$; for GDFM \citep{forni2005generalized}, the reported error is based on the minimum error among estimates obtained under different combinations of $(\widetilde{q},\widetilde{r})$ that determines the number of common factors when loaded dynamically and contemporaneously. For all settings, the proposed method (lag-adjusted DFM) outperforms the other two by explicitly incorporating the lag space spanned by $X_{t-1}$; specifically, it outperforms its competitors by a wide margin when the lag space possesses a stronger signal, in which case SW becomes particularly susceptible (e.g., S3 and S6). However, as the dynamics of $f_t$ becomes more involved, its advantage becomes less pronounced (S4). 

Finally, the proposed model is relative robust to the presence of heavy tails, although the performance deteriorates compared to the Gaussian case. Specifically, when the distribution shows significant deviation from Gaussian (e.g., S5), the degradation manifests itself through less satisfactory recovery in the support of $B$ and larger error of the estimated factor space; whereas the forecasting performance isn't affected. On the other hand, with lighter tails (e.g., S6), the performance becomes comparable to the Gaussian case.

\subsection{Alternative DGPs.} 
To further compare the performance across all three methods, we consider settings where data generating processes deviate from the proposed model in~\eqref{model:ours}. Specifically, we adopt the data generating mechanism in \citet[Model II]{forni2017dynamic}, that is,
\begin{equation}
X_t  = \Lambda F_t + \xi_{t}~~\text{(Obsv Eqn)},~~~~F_t  = D F_{t-1} + Ku_t~~\text{(State Eqn)},
\end{equation}
where coordinates of $u_t$ and $\xi_t$ are i.i.d standard Gaussian white noises processes, with $u_t$ capturing the structural shocks. Entries of $\Lambda$ and $K$ are drawn independently from $\mathsf{Unif}[-1,1]$; entries of $D$ are first drawn independently from $\mathsf{Unif}[-1,1]$ then scaled so that the spectral norm of $D$ satisfies some pre-specified target, with the latter drawn from $\mathsf{Unif}[0.4,0.9]$. We focus on the performance of common space recovery and the one-step-ahead forecast, under various combinations of the model parameters, as listed in Table~\ref{table:sim-gdfm-results}:
\begin{singlespace}
	\begin{table}[!h]
		\scriptsize
		\centering
		\captionsetup{font=small}
		\begin{tabular}{ccc|ccc|ccc}
			\specialrule{1pt}{1pt}{1pt}
			\multirow{2}{*}{$\text{dim}(X_t)$} & \multirow{2}{*}{ $\text{dim}(u_t)$} & \multirow{2}{*}{ $\text{dim}(F_t)$} & \multicolumn{3}{c|}{common space recovery } & \multicolumn{3}{c}{one-step-ahead forecast}\\ 
			\cline{4-6} \cline{7-9} 
			&&&lag-adj DFM & SW & GDFM & lag-adj DFM & SW & GDFM \\  \hline
			100 & 2 & 4 &  0.212 (0.072) & 0.212 (0.072) & 0.208 (0.061) & 0.210 (0.812) & 0.210 (0.812) & 0.326 (4.528)\\
			100 & 4 & 4 & 0.144 (0.035) & 0.144 (0.035) & 0.143 (0.035) & 0.078 (0.324) & 0.078 (0.324) & 0.063 (0.768)\\
			200 & 4 & 6 & 0.119 (0.026) & 0.119 (0.026) & 0.121 (0.026) & 0.087 (0.302) & 0.087 (0.302) & 0.132 (1.679)\\ 
			300 & 6 & 6 & 0.100 (0.024) & 0.100 (0.024) & 0.085 (0.017) & 0.075 (0.351) & 0.075 (0.351) & 0.061 (0.263)\\
			\specialrule{1pt}{1pt}{1pt}
		\end{tabular}
		\caption{Simulation settings and performance evaluation, with DGP a la \citet{forni2017dynamic}}\label{table:sim-gdfm-results}
	\end{table}	
\end{singlespace}

As the results show, with the data generating procedure deviating from the proposed model in~\eqref{model:ours}, with properly chosen tuning parameters, the performance of the proposed methodology matches that of SW by effectively having $\widehat{B}=0$. Meanwhile, it is worth noting for all three methods, the performance of common space recovery shows significantly less variability compared with that of forecasting; in particular, the forecasting performance of GDFM exhibits the highest variance across replications, among the three methods.

\section{Application to Returns of US Financial Assets.}\label{sec:realdata}
Factor models have been widely used in financial applications. In particular, they have been employed in analyzing the dynamics of asset returns, either for the purpose of identifying risk factors, or for estimating the covariance structure amongst assets for better portfolio diversification and asset allocation \citep[e.g.,][]{fan2012vast}.  We applied the proposed modeling framework to a set of stocks return data corresponding to 75 large US financial institutions, which also exhibit strong (serial) correlation in the error terms. Specifically, we analyze the risk-free returns\footnote{The risk-free return of Stock $i$ at time $t$ is calculated as $\widetilde{r}_{i,t} = r_{i,t} - r_{\text{rf},t} = \frac{p_{i,t}-p_{i,(t-1)}}{p_{i,(t-1)}}- r_{\text{rf},t}$, where $p_{i,t}$ is its stock price at time $t$ and $r_{\text{rf},t}$ is the risk-free rate.} of 25 banks, 25 insurance companies and 25 broker/dealer firms for the period of 2001-17. Note that this time period contains a number of significant events for the financial industry, including the growth of mortgage bank securities \citep{ashcraft2010MBS} in the early 2000s, rapid changes in monetary policy in 2005-06, the great financial crisis \citep{eichengreen2010tale} in 2008-09 and the European debt crisis in 2011-12 and their aftermath. Our analysis identifies a number of interesting patterns, especially around the period 2007-09 encompassing the beginning, height and immediate aftermath of the US financial crisis, both through changes
in the factor structure and the partial autocorrelation one governed by the VAR model transition matrix of the log-returns of these financial assets.

\paragraph{Data.} The data consist of weekly stock return data corresponding to 75 large financial institutions in terms of market capitalization, for the period of January 2001 to December 2017 and were obtained from the Center for Research in Security Prices (CRSP) database. The 75 companies are categorized into three sectors: banks (SIC code 6000--6199), broker/dealers (SIC code 6200--6299) and insurance companies (SIC code 6300--6499), with 25 in each sector \citep[see also][]{billio2012econometric}. As we require that the data be available for the entire time span under consideration, 56 firms are kept for further analysis, since the remaining ones either went bankrupt or were forced to merge with financially healthier companies (e.g. Lehman Brothers and Merill Lynch in 2008, resp.). To get an overview of the correlation structure amongst the stocks after accounting for the first principal component that
captures the weighted average return of the portfolio they constitute \citep{avellaneda2010statistical}, we plot the correlation among the principal component regression residuals. Specifically, the entire time span is broken into three sub-periods that have been previously considered in the literature \citep[c.f.][]{billio2012econometric}: 2001--2006 (pre-crisis), 2007--2009 (crisis), 2010-2017 (post-crisis), and plot the correlation maps corresponding to samples in each period. As Figure~\ref{fig:corr} demonstrates, overall, we observe positive correlation within each sector and negative correlation across them. Such a structural pattern is predominant in the pre-crisis period especially within the insurance sector, and becomes significantly weaker in the post crisis one; whereas during the crisis, stronger negative correlation across blocks is present as well as scattered positive correlations. This suggests that different factor and auto-regressive structures emerge during the crisis period. Further, note that similar results hold if we examine the residuals after removing a second principal component, so as to capture a larger percentage of variance of the stock returns.
\begin{figure}[h]
	\captionsetup{font=scriptsize}
	\centering
	\begin{boxedminipage}{2in}
		\includegraphics[scale=0.4]{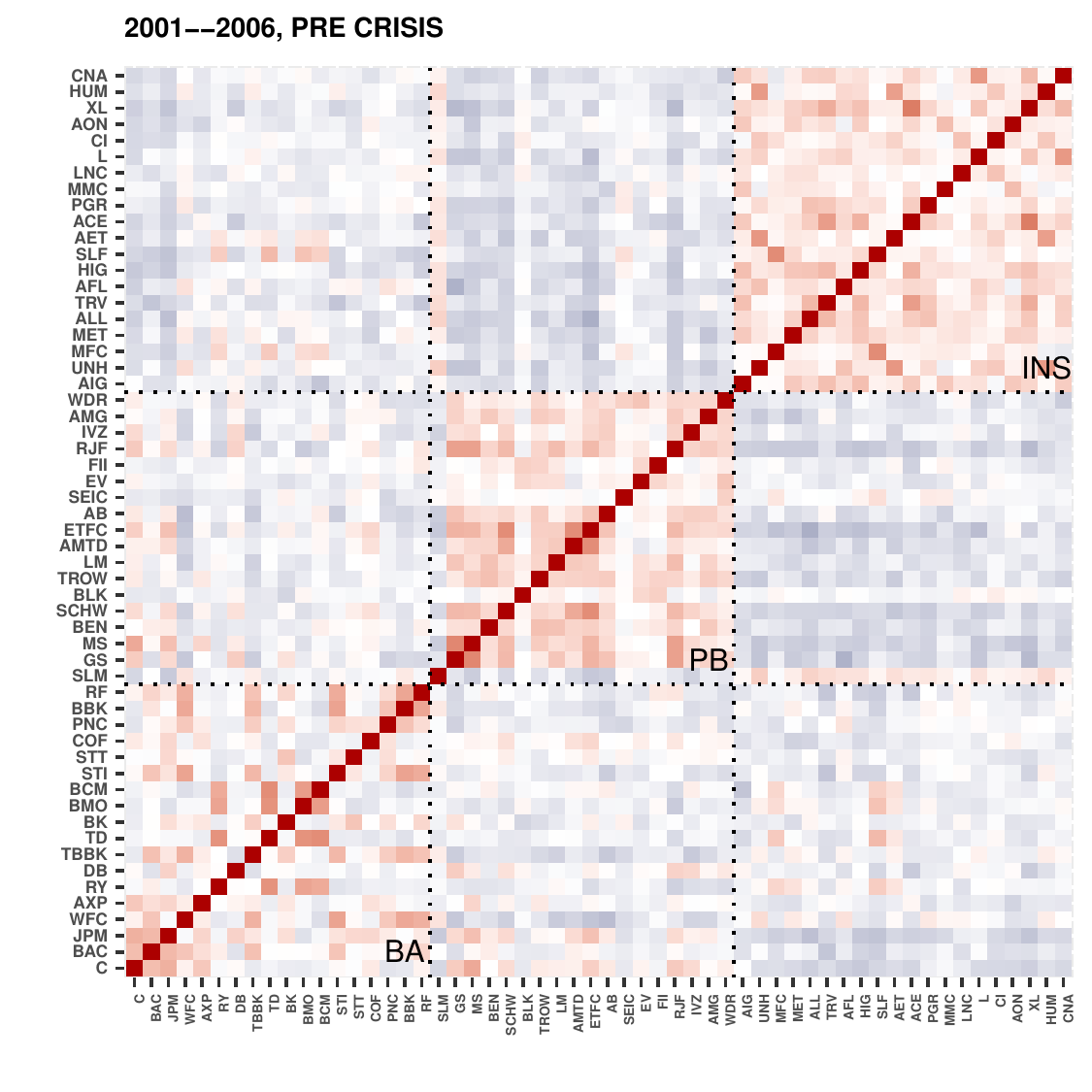}
	\end{boxedminipage}
	\begin{boxedminipage}{2in}
		\includegraphics[scale=0.4]{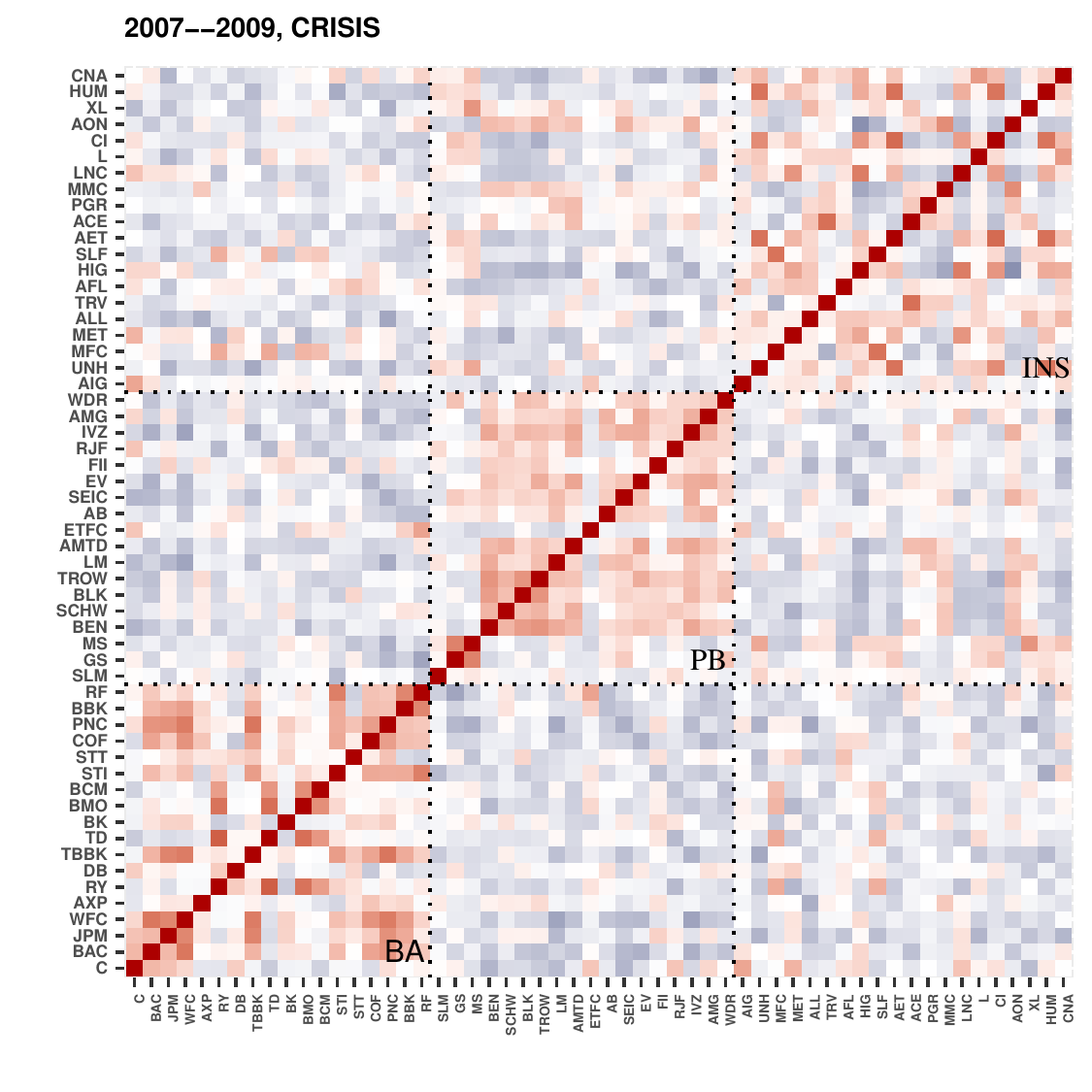}
	\end{boxedminipage}
	\begin{boxedminipage}{2in}
		\includegraphics[scale=0.4]{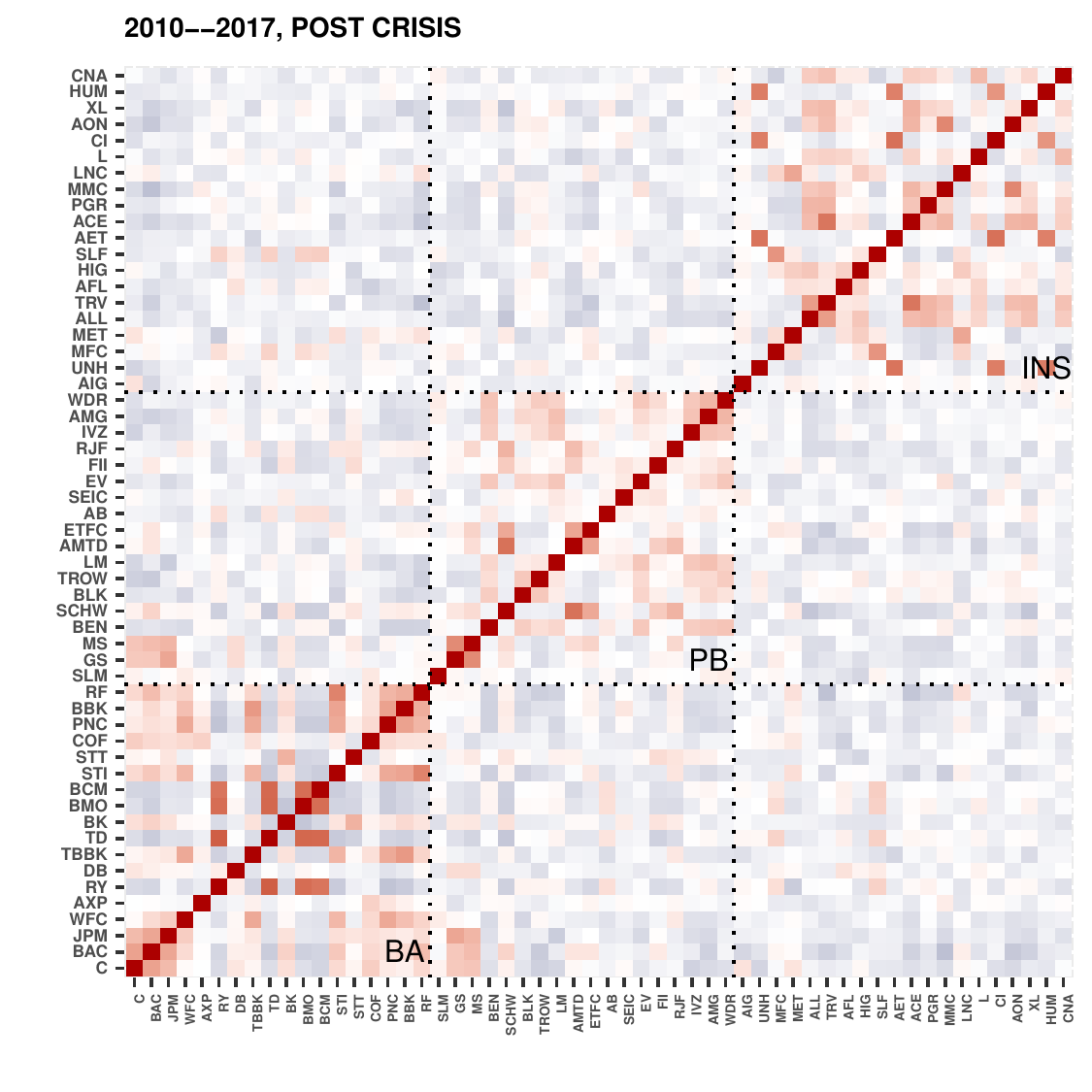}
	\end{boxedminipage}
	\caption{Correlation map for PCR residuals with the number of PC fixed at 1. From left to right: 2001--2006, pre-crisis; 2007-2009, crisis;  2010--2017, post-crisis. Red to blue corresponds to correlation from 1 to -1.}\label{fig:corr}
\end{figure}

The analysis is based on 104-week-long rolling windows to avoid issues with non-stationarity that potentially depends on length of the period under consideration. This strategy has also been used in \citep{billio2012econometric, lin2017regularized} and allows monitoring change in the number of factors over time, as well as the sparsity level of $B$ which measures the connectivity of the partial autocorrelation network across these financial institutions. Note that 48\% of the rolling samples fail to reject the null hypothesis that they are multivariate normality\footnote{we consider the Henze-Zirkler's multivariate normality test.}, with the exception of samples during the crisis period and those at the end of the sampling period post-crisis (around 2015--17). We fit the proposed lag-adjusted factor model in each time window, with tuning parameters selected according to a modified PIC criterion\footnote{The criterion is modified to $\text{PIC}^*(B,r) = \log\widehat{\sigma}^2 + \big(\frac{\log T}{T}\big)\|B\|_0 + r\cdot \big(\frac{T+p}{Tp}\big)\log(Tp)$.} that does not depend on the range within which the number of factors is being searched. 
\begin{figure}[h]
	\captionsetup{font=scriptsize}
	\centering
	\includegraphics[scale=0.7]{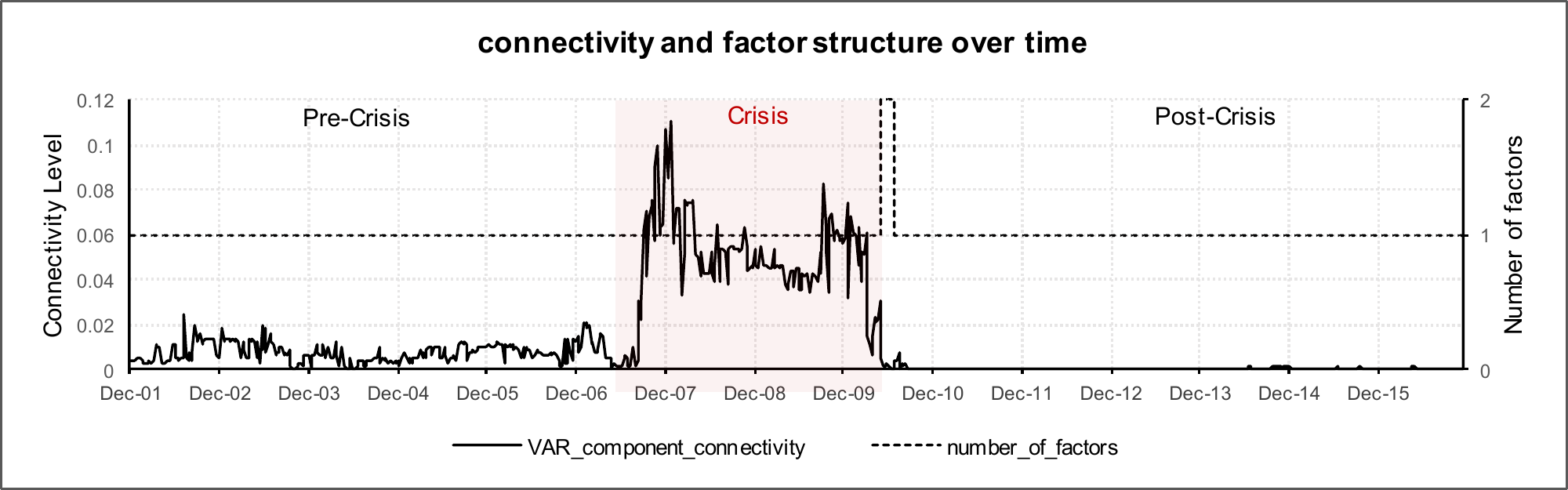}
	\caption{Results after fitting the model to the real data based on 104-week-long rolling windows over time. Left panel: number of factors (right axis, in red) and the connectivity level of $\widehat{B}$ (left axis, in blue). Right panel: overall R-squared (solid line) and the R-squared attributed to the factor (dotted line). Time stamps on the horizontal axis correspond to the mid-point of the window in question.}\label{fig:realdata1}
\end{figure}
As Figure~\ref{fig:realdata1} shows, sharp changes are observed in the temporal dependence structure of stock returns during the crisis period. In particular, two change points respectively correspond to the beginning of the 2007 sub-prime mortgage crisis and the ending of the 2008--2009 global financial crisis. Specifically, for the pre- and post-crisis periods, the density of the transition matrix stays at a level close to zero, suggesting that not much serial correlation exists in the idiosyncratic component after the common factor (proxy for the market portfolio) is accounted for. During the crisis period, however, the connectivity level of $\widehat{B}$ witnesses a sharp increase, reaching its maximum in the sampling window corresponding to Dec 2006--Dec 2008, during which period multiple major events of the financial crisis occurred. We also track the change in R-squared and the R-squared attributed to the factor over time, as a surrogate for the quality of the model fit, as shown in Table~\ref{table:Rsq}.
\begin{table}[H]
	\centering\small
	\captionsetup{font=scriptsize}
	\begin{tabular}{l|ccc}
		\hline
		& pre-crisis & crisis & post-crisis \\ \hline
		Total Rsq & 42.11\% & 58.18\% & 56.76\% \\
		Factor Rsq & 41.29\% & 	54.39\% & 56.50\% \\ \hline
	\end{tabular}
	\caption{R-squared averaged across all stock for each sub-periods.}\label{table:Rsq}
\end{table}
In accordance with the connectivity level of the VAR component that captures the temporal dependence among the stocks, under normal market conditions, the majority fit of the model (R-squared) comes from the factor hyperplane; whereas during the crisis period, the gap between the total R-squared and the factor R-squared widens with the lag term explaining a non-negligible proportion of the R-squared, which indicates the presence of significant cross autocorrelations in the returns. 

To further investigate the factor composition and the temporal dependence structure during the crisis period, we zoom in on the sampling frequency and focus on the year of 2008. Specifically, we consider daily data from January 2008 to December 2008 that cover 253 consecutive trading days and fit the proposed lag-adjusted factor model. Note that for this part of the analysis, the sample consists of 72 stocks. Using $\text{PIC}^*$, 2 factors are identified, with the dominant one capturing 55\% of the R-squared followed by 11\% from the second. For reconstruction purposes, we assume they are orthogonal so that the factor composition can be retrieved from the singular vectors of $\widehat{\Theta}$. 

As depicted in the left panel of Figure~\ref{fig:realdata3}, all financial institutions contribute positively to the first factor, with dominating contributors spread in all sectors. The composition of the second factor shows an interesting pattern: two negative contributors are FRE (Freddie Mac) and FNM (Fannie Mae), and the positive ones are primarily in the insurance sector. However, AIG---unlike its peers---shows almost zero contribution to the second factor, albeit its strong contribution to the first one. The latter is consistent with other findings that it played a prominent role during the crisis.\footnote{According to an estimate as of January 2010, AIG accounted for 38\% of the total losses incurred by insurance companies (98.2 out of 261.0 billions) since 2007. Source: Bloomberg, see also \citet{schich2010insurance}.} 
In the right panel of Figure~\ref{fig:realdata3}, we plot the partial auto-correlation network of the firms during the crisis after properly thresholding the entries that have small magnitudes, with red edges denoting positive links and with blue negative ones. Nodes that belong to the same sector are colored identically. A careful examination of the node weighted in/out-degrees shows that the top emitters are relatively uniform, in the sense that their weighted out-degrees do not differ by much; whereas the top receivers are dominant, since the weighted in-degrees for top receivers are significantly higher compared with the rest. Further top emitters heavily concentrate in the insurance sector. Meanwhile, some of the top receivers are also major contributors to the factors' composition, e.g., AIG to the 1st factor, HIG to the 2nd, etc. This finding partially aligns with the role that many insurance companies played in magnifying the impact of the crisis on the overall stability of the financial system, due to their large insurance underwriting of Credit Default Swaps and subsequent exposure to accentuated risks \citep{eichengreen2010tale}. However, this analysis points to the importance of insurance companies based on publicly available data and before their role in the crisis was fully revealed and understood. It is worth noting that with the same set of data, vanilla factor analysis using the information criterion proposed in \citet{bai2002determining} only identifies 1 factor, which further substantiates the aforementioned point that classical factor analysis may lead to skewed inference when strong correlation among the idiosyncratic component is present.
\begin{figure}[h]
	\centering
	\captionsetup{font=scriptsize}
	\begin{boxedminipage}{1in}
		\includegraphics[scale=0.715]{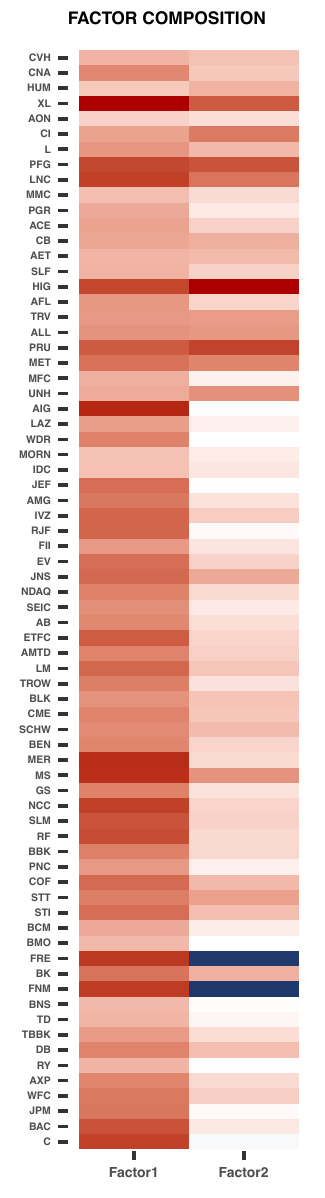}
	\end{boxedminipage}
	\begin{boxedminipage}{3.8in}
		\includegraphics[scale=0.444]{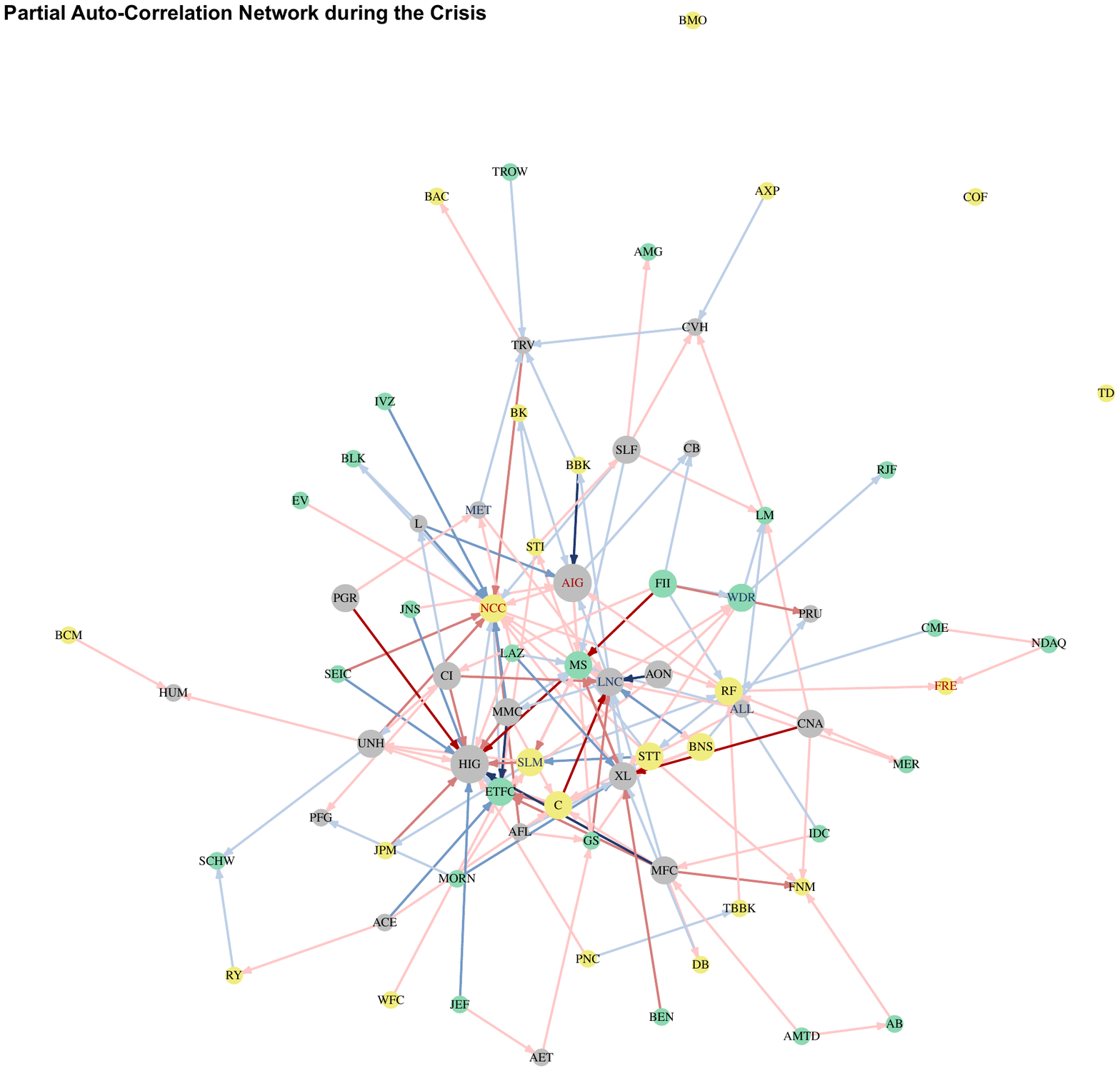}
	\end{boxedminipage}
	\caption{Left panel: factor composition. Right panel: partial autocorrelation network during the crisis, after proper thresholding of entries with small magnitudes. Top 5 emitters: MFC, MMC, MS, AON, PGR. Top receivers (in white bars): HIG, NCC, LNC, XL, AIG. Node colors indicate their respective sector (gray--INS, yellow-BA, green--PB).}\label{fig:realdata3}
\end{figure}

To conclude this section, we compare and contrast our results with those obtained in \citet{billio2012econometric}, in which the authors consider 100 financial institutions comprising of the largest 25 among each of the four categories: hedge funds, broker/dealers, banks and insurers; thus, that data set is enhanced by the inclusion of big hedge funds for which publicly available stock quotes are not accessible. From the systemic risk standpoint, the authors measure the connectedness of the system based on principal component analysis (PCA) and Granger-causality network analysis during the 1994--2008 period, and  identify increased level of interconnectedness during the crisis period and the asymmetry in the degree of connectedness amongst different sectors. Our results are qualitatively similar to these results, and the conclusions broadly match. However, we would like to highlight some key differences in both modeling and in the empirical results obtained. From the modeling perspective, \citet{billio2012econometric} consider two separate modeling 
strategies: (i) a Principal Components Analysis (akin to a static factor model) and (ii) a Granger-causality based analysis through fitting a VAR model for each pair of stocks returns. The PCA analysis examines a {\em fixed} number of principal components/factors and the authors argue that the increasing proportion of variation explained by them is an indication of the systematic response of the financial system to the crisis. Their pairwise based Granger-causal network also reveals increased connectivity during the crisis period. Our model considers latent factors and lead-lag relationships among stock returns {\em simultaneously}, thus gaining better and more informative insights. In addition, the lead-lag relationships are considered across all firms simultaneously rather than in a pairwise fashion. By incorporating the strong correlations present in the idiosyncratic component, our model is more parsimonious. Specifically, during the crisis, \citet{billio2012econometric} uses 10 principal components to account for 85\% of the returns variance, whereas only 5 suffice in our model; further, the leading factor in their analysis only accounts 37\% of the variance, compared to 50\% in our model.
Finally, extending the analysis period to 2017 shows that after 2011 the influence of banks and insurance companies on stock returns waned, as the marker slowly returned to normalcy. However, we are in broad agreement with the \citet{billio2012econometric} conclusion on the heightened role of banks and insurers up to 2009.

\section{Discussion.}\label{sec:discussion}

In this paper, we introduced a novel modeling framework that generalizes the classical approximate factor model to include lags of the observable process, so that stronger correlations among the idiosyncratic component can be accommodated. The autoregressive structure is assumed to be sparse, which enables its estimation for large time series panels. Estimation of the model parameters is based on a maximum likelihood formulation that leads to a convex optimization problem, and the resulting estimates come with high probability error bounds guarantees that can be expressed in terms of key structural parameters ($n, p, K, s$, etc.), and exhibit superior empirical performance in synthetic data.

In addition to generalizing the model in \citet{stock2005implications}, our proposed model can also be perceived as a robust treatment of endogeneity. Specifically, as noted by \citet{anderson2007forecasting}, in the presence of large values in $u_t$ and for a relatively small panel size $p$, the factor estimates will be distorted as a result of this endogeneity. In this work, by explicitly taking into consideration the lagged terms in the dynamics of $X_t$, the noise term $\epsilon_t$ becomes strictly exogenous. Our proposed model and estimation procedure has the capacity of handling much stronger correlation between $F_t$ and $u_t$, although ultimately we do require $\Cov(F_t,u_t)$ to be indirectly bounded in some appropriate way.

\clearpage
\begin{singlespace}
\bibliography{refbib}
\end{singlespace}

\clearpage
\setcounter{page}{1}
\pagenumbering{arabic}

\titleformat{\section}{\large\normalfont\centering\bfseries}{APPENDIX~\thesection.}{0.5em}{\uppercase}
\renewcommand{\thesubsection}{\Alph{section}.\arabic{subsection}}
\numberwithin{lemma}{section}
\numberwithin{remark}{section}
\appendix

\section{Proofs for Statistical Error Bounds. }\label{appendix:proof-of-theory-convex}
Before presenting the proof of Theorem~\ref{thm:fix-bound-convex}, we first define a few quantities associated with the regularizers. Given some $\eta$ (to be specified later), let $S^\star_\eta$ denote the thresholded support set of $B^\star$, i.e.,
$S^{\star}_\eta := \{(i,j):|B^\star_{ij}|>\eta\}$, and let the SVD of $\Theta^\star$ be $\Theta^\star = (U^\star)D^\star(V^\star)^\top$, with $U^\star_K$ and $V^\star_K$ respectively denoting the first $K$ columns of $U^\star$ and $V^\star$. Let $\mathbb{S}$, $\mathbb{M}$ and their complements respectively be defined as follows: 
\begin{align*}
\begin{split}
\mathbb{S}:= \big\{ \Delta\in\R^{p\times p}~|~\Delta_{ij}=0\text{ for }(i,j)\notin S^\star_\eta \big\}, \\
\mathbb{S}^c:= \big\{ \Delta\in\R^{p\times p}~|~\Delta_{ij}=0\text{ for }(i,j)\in S^\star_\eta \big\} ,
\end{split}
\end{align*}
and
\begin{align*}
\begin{split}
\mathbb{M}:= \big\{ \Delta\in\R^{T\times p}~|~\text{row}(\Delta)\subseteq V^\star_K\text{ and }\text{col}(\Delta)\subseteq U^\star_K \big\}, \\
\mathbb{M}^\perp:= \big\{ \Delta\in\R^{T\times p}~|~\text{row}(\Delta)\perp V^\star_K\text{ and }\text{col}(\Delta)\perp U^\star_K \big\}.
\end{split}
\end{align*}
Further, for some generic matrix $\Delta_1\in\R^{p\times p}$, we define its projection on $\mathbb{S}$ and $\mathbb{S}^c$ (denoted by $\Delta_{1|\mathbb{S}}$ and $\Delta_{1|\mathbb{S}^c}$, resp.) as
\begin{equation}\label{eqn:projS}
\Delta_{1|\mathbb{S},ij}:= \mathbf{1}\big\{(i,j)\in S^\star\}\Delta_{1,ij} \qquad \text{and} \qquad 
\Delta_{1|\mathbb{S}^c,ij}:= \mathbf{1}\big\{(i,j)\notin S^\star\}\Delta_{1,ij}.
\end{equation}
With the above definitions and projections, we can write
\begin{equation}\label{eqn:Delta1}
\Delta_1 = \Delta_{1|\mathbb{S}} +  \Delta_{1|\mathbb{S}^c}, \qquad \|\Delta_1\|_1 = \|\Delta_{1|\mathbb{S}}\|_1  + \|\Delta_{1|\mathbb{S}^c}\|_1, \qquad \forall~\Delta_1\in\R^{p\times p},
\end{equation} 
and note that the following inequality holds:
\begin{equation}\label{eqn:BSnorms}
\|\Delta_{1|\mathbb{S}}\|_1\leq \sqrt{s}\smalliii{\Delta_{1|\mathbb{S}}}_{\F}\leq \sqrt{s}\smalliii{\Delta_{1}}_{\F}. 
\end{equation}
as $\Delta_{1|S}$ has at most $s$ nonzero entries where $s:=|S^\star|$. In an analogous way, for some generic matrix $\Delta_2\in\R^{T\times p}$, its projections on $\mathbb{M}$ and $\mathbb{M}^\perp$ (denoted by $\Delta_{2|\mathbb{M}}$ and $\Delta_{2|\mathbb{M}^\perp}$, resp.) are defined as
\begin{equation}\label{eqn:projM}
\Delta_{2|\mathbb{M}} := U^\star \begin{bmatrix}
\widetilde{\Delta}_{2,11} & \widetilde{\Delta}_{2,12} \\ \widetilde{\Delta}_{2,21} & O
\end{bmatrix}(V^\star)^\top \qquad \text{and} \qquad \Delta_{2|\mathbb{M}^\perp} := U^\star \begin{bmatrix}
O & O \\ O & \widetilde{\Delta}_{2,22}
\end{bmatrix}(V^\star)^\top,
\end{equation}
where $\widetilde{\Delta}_2$ is defined below and partitioned as:
\begin{equation*}
\widetilde{\Delta}_2 = (U^\star)^\top \Delta_2 (V^\star) = \begin{bmatrix}
\widetilde{\Delta}_{2,11} & \widetilde{\Delta}_{2,12} \\ \widetilde{\Delta}_{2,21} & \widetilde{\Delta}_{2,22}
\end{bmatrix}, \qquad \text{with}~~\widetilde{\Delta}_{2,11}\in\R^{K\times K}.
\end{equation*}
Note that the following relationships hold
\begin{equation}\label{eqn:Delta2}
\Delta_2 = \Delta_{2|\mathbb{M}} + \Delta_{2|\mathbb{M}^\perp}, \qquad \smalliii{\Delta_2}_* = \smalliii{\Delta_{2|\mathbb{M}} + \Delta_{2|\mathbb{M}^\perp}}_* = \smalliii{\Delta_{2|\mathbb{M}}}_* + \smalliii{\Delta_{2|\mathbb{M}^\perp}}_*,\quad  \forall~\Delta_2\in\R^{T\times p}.
\end{equation}
Next, we introduce concepts and lemmas regarding {\em decomposable regularizers} \citep{negahban2012unified}. Define the weighted regularizer as
\begin{equation*}
\mathcal{R}(B,\Theta):= \|B\|_1 + \frac{\lambda_\Theta}{\lambda_B} \smalliii{\Theta/\sqrt{n}}_*,
\end{equation*}
and let $\Delta_B:=\widehat{B}-B^\star$ and $\Delta_\Theta:=\widehat{\Theta}-\Theta^\star$.

\begin{lemma}\label{auxLemma:triangle} 
	With the definitions of projections in~\eqref{eqn:projS} and~\eqref{eqn:projM}, the following inequality holds:
	\begin{equation*}
	\mathcal{R}(B^\star,\Theta^\star) - \mathcal{R}(\widehat{B},\widehat{\Theta}) \leq \mathcal{R}(\Delta_{B|\mathbb{S}},\Delta_{\Theta|\mathbb{M}}) - \mathcal{R}(\Delta_{B|\mathbb{S}^c},\Delta_{\Theta|\mathbb{M}^\perp}) + 2\mathcal{R}(B^\star_{\mathbb{S}^c}, \Theta^\star_{\mathbb{M}^\perp}).
	\end{equation*}
\end{lemma}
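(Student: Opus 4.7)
}
The plan is to invoke the standard \emph{decomposability} property of the two regularizers on the subspace pairs $(\mathbb{S},\mathbb{S}^c)$ and $(\mathbb{M},\mathbb{M}^\perp)$, combine with the reverse triangle inequality, and then assemble the two bounds through the definition of $\mathcal{R}$. Since $\mathcal{R}(B,\Theta)=\|B\|_1+(\lambda_\Theta/\lambda_B)\smalliii{\Theta/\sqrt{T}}_*$ is linear in the two regularizers, it suffices to establish the corresponding bounds separately for $\|\cdot\|_1$ and $\smalliii{\cdot}_*$.

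For the $\ell_1$ component, I would write $\widehat{B}=B^\star+\Delta_B$ and use the decompositions $B^\star=B^\star_{|\mathbb{S}}+B^\star_{|\mathbb{S}^c}$ and $\Delta_B=\Delta_{B|\mathbb{S}}+\Delta_{B|\mathbb{S}^c}$. Because $B^\star_{|\mathbb{S}}$ and $\Delta_{B|\mathbb{S}^c}$ have disjoint supports, $\|B^\star_{|\mathbb{S}}+\Delta_{B|\mathbb{S}^c}\|_1=\|B^\star_{|\mathbb{S}}\|_1+\|\Delta_{B|\mathbb{S}^c}\|_1$. Pairing the remaining two pieces and applying the (reverse) triangle inequality then yields
\begin{equation*}
\|\widehat{B}\|_1 \;\geq\; \|B^\star_{|\mathbb{S}}+\Delta_{B|\mathbb{S}^c}\|_1-\|B^\star_{|\mathbb{S}^c}+\Delta_{B|\mathbb{S}}\|_1 \;\geq\; \|B^\star_{|\mathbb{S}}\|_1+\|\Delta_{B|\mathbb{S}^c}\|_1-\|B^\star_{|\mathbb{S}^c}\|_1-\|\Delta_{B|\mathbb{S}}\|_1,
\end{equation*}
and rearranging using $\|B^\star\|_1=\|B^\star_{|\mathbb{S}}\|_1+\|B^\star_{|\mathbb{S}^c}\|_1$ gives $\|B^\star\|_1-\|\widehat{B}\|_1\le \|\Delta_{B|\mathbb{S}}\|_1-\|\Delta_{B|\mathbb{S}^c}\|_1+2\|B^\star_{|\mathbb{S}^c}\|_1$.

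For the nuclear norm component the argument is identical, with the subspace decomposability~\eqref{eqn:Delta2} playing the role of the disjoint-support equality: since any matrix in $\mathbb{M}$ has row/column space contained in those spanned by $V^\star_K,U^\star_K$ while any matrix in $\mathbb{M}^\perp$ has row/column space orthogonal to them, $\smalliii{\Theta^\star_{|\mathbb{M}}+\Delta_{\Theta|\mathbb{M}^\perp}}_*=\smalliii{\Theta^\star_{|\mathbb{M}}}_*+\smalliii{\Delta_{\Theta|\mathbb{M}^\perp}}_*$. The same reverse-triangle-inequality maneuver as above then produces
\begin{equation*}
\smalliii{\Theta^\star/\sqrt{T}}_* - \smalliii{\widehat{\Theta}/\sqrt{T}}_* \;\leq\; \smalliii{\Delta_{\Theta|\mathbb{M}}/\sqrt{T}}_* - \smalliii{\Delta_{\Theta|\mathbb{M}^\perp}/\sqrt{T}}_* + 2\smalliii{\Theta^\star_{|\mathbb{M}^\perp}/\sqrt{T}}_*.
\end{equation*}
Multiplying the nuclear-norm bound by $\lambda_\Theta/\lambda_B$ and adding it to the $\ell_1$ bound yields exactly $\mathcal{R}(B^\star,\Theta^\star)-\mathcal{R}(\widehat{B},\widehat{\Theta})\le \mathcal{R}(\Delta_{B|\mathbb{S}},\Delta_{\Theta|\mathbb{M}})-\mathcal{R}(\Delta_{B|\mathbb{S}^c},\Delta_{\Theta|\mathbb{M}^\perp})+2\mathcal{R}(B^\star_{|\mathbb{S}^c},\Theta^\star_{|\mathbb{M}^\perp})$.

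The only genuine subtlety — and the single step that requires the subspace definitions in~\eqref{eqn:projM} — is verifying the additivity of the nuclear norm across $\mathbb{M}$ and $\mathbb{M}^\perp$; this follows from the block structure in~\eqref{eqn:projM}, where $\Theta^\star_{|\mathbb{M}}$ and $\Delta_{\Theta|\mathbb{M}^\perp}$ sit in disjoint blocks of the conjugated matrix $(U^\star)^\top(\cdot)(V^\star)$, so their singular values simply concatenate. Beyond that observation, the proof is bookkeeping with (reverse) triangle inequalities, and no probabilistic or optimality input is needed.
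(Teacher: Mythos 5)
Your proof is correct and follows essentially the same route as the paper: a reverse triangle inequality applied to the four-way splits $B^\star_{|\mathbb{S}}+B^\star_{|\mathbb{S}^c}+\Delta_{B|\mathbb{S}}+\Delta_{B|\mathbb{S}^c}$ and $\Theta^\star_{|\mathbb{M}}+\Theta^\star_{|\mathbb{M}^\perp}+\Delta_{\Theta|\mathbb{M}}+\Delta_{\Theta|\mathbb{M}^\perp}$, followed by decomposability of the $\ell_1$ and nuclear norms over $(\mathbb{S},\mathbb{S}^c)$ and $(\mathbb{M},\mathbb{M}^\perp)$, and a final rearrangement. The only point worth stating explicitly is that the nuclear-norm additivity for $\Theta^\star_{|\mathbb{M}}+\Delta_{\Theta|\mathbb{M}^\perp}$ holds because $\Theta^\star_{|\mathbb{M}}$ sits in the $(1,1)$ block alone (since $\rank(\Theta^\star)=K$ and its SVD is exactly $U^\star D^\star (V^\star)^\top$), so the two summands have mutually orthogonal row \emph{and} column spaces --- disjointness of blocks by itself would not suffice for a general element of the three-block space.
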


\begin{lemma}\label{auxLemma:rank} 
	With the definition of~\eqref{eqn:projM}, the following holds for some generic $\Delta\in\R^{T\times p}$:
	$$\rank(\Delta_{\mathbb{M}})\leq 2\cdot \rank(\Theta^\star).$$
\end{lemma}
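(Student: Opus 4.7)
The plan is to exploit the block structure of $\Delta_{\mathbb{M}}$ and reduce the question to a rank computation on the middle block matrix, using the fact that multiplication by the orthogonal matrices $U^\star$ and $V^\star$ preserves rank.

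\textbf{Step 1: Invariance under orthogonal multiplication.} Since $U^\star \in \R^{T\times T}$ and $V^\star \in \R^{p\times p}$ are orthogonal, we have
\begin{equation*}
\rank(\Delta_{\mathbb{M}}) = \rank(M), \qquad \text{where}\quad M = \begin{bmatrix} \widetilde{\Delta}_{11} & \widetilde{\Delta}_{12} \\ \widetilde{\Delta}_{21} & O \end{bmatrix}.
\end{equation*}
So it suffices to bound $\rank(M)$ by $2K$, where $K = \rank(\Theta^\star)$.

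\textbf{Step 2: Additive decomposition of $M$.} Split $M = M_1 + M_2$ with
\begin{equation*}
M_1 = \begin{bmatrix} \widetilde{\Delta}_{11} & \widetilde{\Delta}_{12} \\ O & O \end{bmatrix}, \qquad M_2 = \begin{bmatrix} O & O \\ \widetilde{\Delta}_{21} & O \end{bmatrix}.
\end{equation*}
The matrix $M_1$ has at most $K$ nonzero rows, so $\rank(M_1) \leq K$. The matrix $M_2$ has at most $K$ nonzero columns, so $\rank(M_2) \leq K$.

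\textbf{Step 3: Subadditivity of rank.} Combining the above with the triangle-type inequality $\rank(A+B) \leq \rank(A) + \rank(B)$ yields
\begin{equation*}
\rank(\Delta_{\mathbb{M}}) = \rank(M) \leq \rank(M_1) + \rank(M_2) \leq K + K = 2\,\rank(\Theta^\star),
\end{equation*}
which is the claimed bound. I do not anticipate any obstacles in this argument; the essential point is simply that the zero block in the lower-right corner forces the row (equivalently, column) space of $M$ to lie inside a $2K$-dimensional subspace, and this is captured cleanly by the two-term decomposition above.
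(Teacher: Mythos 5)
Your proof is correct and follows essentially the same route as the paper: pass to the middle block matrix via orthogonal invariance, split it into two pieces of rank at most $K$ each, and apply subadditivity of rank. In fact your decomposition is slightly cleaner than the paper's, since your $M_1+M_2$ sums exactly to $M$ (the paper's second summand carries an extra $\widetilde{\Delta}_{11}$ block, so its two pieces do not literally add up to $M$, though the rank bound still goes through by the same column-space argument).
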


The proofs of these two lemmas are deferred to Supplement~\ref{appendix:auxLemma}. Based on the above preparatory steps, we present next the proof of Theorem~\ref{thm:fix-bound-convex}.
\begin{proof}[Proof of Theorem~\ref{thm:fix-bound-convex}]
	We prove the bound for $\Delta_B:=\widehat{B}-B^\star$ and $\Delta_\Theta:=\widehat{\Theta}-\Theta^\star$ under the imposed regularity conditions, where  $(\widehat{B},\widehat{\Theta})$ is the solution to the optimization problem~\eqref{eqn:optimization0}. Using the optimality of $(\widehat{B},\widehat{\Theta})$ and the feasibility of $(B^\star,\Theta^\star)$, the following {\em basic inequality} holds on:
	\begin{equation}\label{eqn:basic1}
	\begin{split}
	\frac{1}{2T} \smalliii{\mathbf{X}_{T-1}\Delta_B^\top + \Delta_{\Theta}}_{\F}^2 \leq & \frac{1}{T}\Big(\llangle \Delta_B^\top,\mathbf{X}_{T-1}^\top \mathbf{E}\rrangle + \llangle \Delta_{\Theta},\mathbf{E}\rrangle  \Big)  \\
	& + \lambda_B\Big(\smallii{B^\star}_1 - \smallii{\widehat{B}}_1 \Big) + \lambda_\Theta\Big(\smalliii{\Theta^\star/\sqrt{T}}_* - \smalliii{\widehat{\Theta}/\sqrt{T}}_*\Big). 
	\end{split}
	\end{equation}
	The LHS can be equivalently written as
	\begin{equation*}
	\frac{1}{2T}\smalliii{\mathbf{X}_{T-1}\Delta_B^\top + \Delta_\Theta}_{\F}^2 = \frac{1}{2T}\Big(\smalliii{\mathbf{X}_{T-1}\Delta_B^\top}_{\F}^2 + \smalliii{\Delta_\Theta}_{\F}^2 + 2 \llangle \mathbf{X}_{T-1}\Delta_B^\top,\widehat{\Theta}-\Theta^\star\rrangle \Big),
	\end{equation*}
	and by rearranging,~\eqref{eqn:basic1} becomes
	\begin{align}\label{eqn:basic2}
	\begin{split}
	\frac{1}{2T}\smalliii{\mathbf{X}_{n-1}\Delta_B^\top}_{\F}^2 + \frac{1}{2}\smalliii{\Delta_\Theta/\sqrt{T}}_{\F}^2   \leq \frac{1}{T}\llangle \mathbf{X}_{n-1}\Delta_B^\top,\widehat{\Theta}-\Theta^\star \rrangle + \frac{1}{T} \Big( \llangle \Delta_B^\top,\mathbf{X}_{n-1}^\top\mathbf{E}\rrangle + \llangle \Delta_{\Theta},\mathbf{E} \rrangle\Big)   \\
	+ \lambda_B\Big(\smallii{B^\star}_1 - \smallii{\widehat{B}}_1 \Big) + \lambda_\Theta\Big(\smalliii{\Theta^\star/\sqrt{T}}_* - \smalliii{\widehat{\Theta}/\sqrt{T}}_*\Big). 
	\end{split}
	\end{align}
	Based on~\eqref{eqn:basic2}, the rest of the proof is divided into three parts: in part (i), we provide a lower bound for the LHS primarily using the RSC condition; in part (ii), we provide an upper bound for the RHS with the designated choice of $\lambda_B$ and $\lambda_\Theta$; in part (iii), we align the two sides and obtain the error bound after some rearrangement. 
	\paragraph{Part (i).}  In this part, we obtain a lower bound for the LHS of~\eqref{eqn:basic2}. Using the RSC condition for $\mathbf{X}_{n-1}$, the following lower bound holds for the LHS of~\eqref{eqn:basic2}:
	\begin{equation}\label{eqn:LHSlow1}
	\frac{1}{2T}\smalliii{\mathbf{X}_{T-1}\Delta_B^\top}_{\F}^2 + \frac{1}{2}\smalliii{\Delta_\Theta/\sqrt{T}}_{\F}^2  \geq \frac{\alpha_{\text{RSC}}}{2}\smalliii{\Delta_B}_{\F}^2 + \frac{1}{2}\smalliii{\Delta_{\Theta}/\sqrt{T}}_{\F}^2 - \tau_T\|\Delta_B\|_1^2.
	\end{equation}
	To further lower-bound~\eqref{eqn:LHSlow1}, consider an upper bound for $\smallii{\Delta_B}_1$ with the aid of~\eqref{eqn:basic1}. By H\"{o}lder's inequality, the following inequalities hold for the inner products:
	\begin{equation}\label{eqn:holder1}
	\frac{1}{T}\llangle \Delta_B^\top,\mathbf{X}_{T-1}^\top\mathbf{E} \rrangle \leq \|\Delta_B\|_1\|\mathbf{X}_{T-1}^\top \mathbf{E}/T\|_\infty, 
	\end{equation}
	and
	\begin{equation}\label{eqn:holder2}
	\frac{1}{T}\llangle \Delta_\Theta,\mathbf{E} \rrangle \leq \smalliii{\Delta_\Theta/\sqrt{T}}_*\smalliii{\mathbf{E}/\sqrt{T}}_{op} = \smalliii{\Delta_{\Theta}/\sqrt{T}}_*\Lambda^{1/2}_{\max}(S_{\mathbf{E}}).
	\end{equation}
	By choosing $\lambda_B\geq 2\|\mathbf{X}_{T-1}^\top\mathbf{E}/T\|_\infty$ and $\lambda_\Theta\geq ~\Lambda^{1/2}_{\max}(S_{\mathbf{E}})$, the following inequality can be derived from the non-negativity of the RHS in~\eqref{eqn:basic1}:
	{\small
		\begin{equation*}
		\begin{split}
		0 &\leq \frac{\lambda_B}{2}\|\Delta_B\|_1 + \lambda_\Theta\smalliii{\Delta_{\Theta}/\sqrt{T}}_* + \lambda_B\mathcal{R}(B^\star,\Theta^\star) - \lambda_B\mathcal{R}(\widehat{B},\widehat{\Theta}) \\ 
		& \stackrel{(1)}{\leq} \frac{\lambda_B}{2}\|\Delta_{B|\mathbb{S}}\|_1 + \frac{\lambda_B}{2}\|\Delta_{B|\mathbb{S}^c}\|_1 + \lambda_\Theta \smalliii{\tfrac{\Delta_{\Theta|\mathbb{M}}}{\sqrt{T}}}_* + \lambda_\Theta \smalliii{\tfrac{\Delta_{\Theta|\mathbb{M}^\perp}}{\sqrt{T}}}_* \\
		&\qquad + \lambda_B\Big( \mathcal{R}(\Delta_{B|\mathbb{S}},\Delta_{\Theta|\mathbb{M}}) - \mathcal{R}(\Delta_{B|\mathbb{S}^c},\Delta_{\Theta|\mathbb{M}^\perp}) + 2\mathcal{R}(B^\star_{\mathbb{S}^c}, \Theta^\star_{\mathbb{M}^\perp})  \Big),
		\end{split}
		\end{equation*}
	}%
	\normalsize
	where the first two terms in (1) come from~\eqref{eqn:Delta1}, the next two terms come from \eqref{eqn:Delta2} and the last three terms use Lemma~\ref{auxLemma:triangle}. After writing out $\mathcal{R}(\cdot,\cdot)$ and rearranging, we obtain
	\begin{equation*}
	\begin{split}
	\frac{\lambda_B}{2} \|\Delta_{B|\mathbb{S}^c}\|_1 \leq \frac{3\lambda_B}{2} \|\Delta_{B|\mathbb{S}}\|_1 + 2\lambda_\Theta \smalliii{\tfrac{\Delta_{\Theta|\mathbb{M}}}{\sqrt{T}}}_* + 2\mathcal{R}(B^\star_{|\mathbb{S}^c}, \Theta^\star_{|\mathbb{M}^\perp}) , \\ 
	\frac{\lambda_B}{2} \|\Delta_{B|\mathbb{S}^c}\|_1  + \frac{\lambda_B}{2} \|\Delta_{B|\mathbb{S}}\|_1  \leq 2\lambda_B\|\Delta_{B|\mathbb{S}}\|_1 + 2\lambda_\Theta \smalliii{\tfrac{\Delta_{\Theta|\mathbb{M}}}{\sqrt{T}}}_* + 2\mathcal{R}(B^\star_{\mathbb{S}^c}, \Theta^\star_{\mathbb{M}^\perp}),
	\end{split}
	\end{equation*}
	that is, 
	\begin{equation}\label{eqn:B1up}
	\|\Delta_B\|_1 \leq 4\mathcal{R}(\Delta_{B|\mathbb{S}},\Delta_{\Theta|\mathbb{M}}) + 4\mathcal{R}(B^\star_{|\mathbb{S}^c}, \Theta^\star_{|\mathbb{M}^\perp}).
	\end{equation}
	Note that for $\mathcal{R}(\Delta_{B|\mathbb{S}},\Delta_{\Theta|\mathbb{M}})$, using~\eqref{eqn:BSnorms} and Lemma~\ref{auxLemma:rank},
	\begin{equation}\label{eqn:Rup}
	\begin{split}
	\mathcal{R}(\Delta_{B|\mathbb{S}},\Delta_{\Theta|\mathbb{M}}) & = \|\Delta_{B|\mathbb{S}}\|_1 + \frac{\lambda_\Theta}{\lambda_B} \smalliii{\tfrac{\Delta_{\Theta|\mathbb{M}}}{\sqrt{T}}}_* \leq \sqrt{s}\smalliii{\Delta_{B|\mathbb{S}} }_{\F} + \frac{\lambda_\Theta}{\lambda_B} (\sqrt{2K}) \smalliii{\tfrac{\Delta_{\Theta|\mathbb{M}}}{\sqrt{T}}}_{\F} \\
	& \leq \sqrt{s}\smalliii{\Delta_B}_{\F} + \frac{\lambda_\Theta}{\lambda_B} (\sqrt{2K}) \smalliii{\Delta_{\Theta}/\sqrt{T}}_{\F}.
	\end{split}
	\end{equation}
	Plug~\eqref{eqn:Rup} into~\eqref{eqn:B1up}, and by the Cauchy-Schwartz inequality, we have
	\begin{equation}\label{eqn:DeltaB1}
	\smallii{\Delta_B}^2_1\leq 16\big( s + (2K)(\lambda_\Theta/\lambda_B)^2\big) \Big[\smalliii{\Delta_B}_{\F}^2 + \smalliii{\Delta_\Theta/\sqrt{T}}_{\F}^2\Big] + 16\tau_T\|B^\star_{\mathbb{S}^c}\|^2_1.
	\end{equation}
	Combine~\eqref{eqn:LHSlow1} and~\eqref{eqn:DeltaB1}, a lower bound for the LHS of~\eqref{eqn:basic2} is given by 
	\begin{equation*}
	\Big[\frac{\alpha_{\text{RSC}}}{2} - 16\tau_T\big(s + (2K)(\tfrac{\lambda_\Theta}{\lambda_B})^2 \big)\Big]\smalliii{\Delta_B}_{\F}^2 + \Big[\frac{1}{2}-  16\tau_T\big(s + (2K)(\tfrac{\lambda_\Theta}{\lambda_B})^2 \big) \Big] \smalliii{\Delta_\Theta/\sqrt{T}}_{\F}^2 - 16 \tau_T\|B^\star_{\mathbb{S}^c}\|^2_1.
	\end{equation*}
	With the designated choice of $\tau_T$ satisfying $16\tau_T\big(s + (2K)(\tfrac{\lambda_\Theta}{\lambda_B})^2 \big) \leq \min\{\alpha_{\text{RSC}},1\}/4$, the above bound can be further lower bounded~by 
	\begin{equation}\label{eqn:LHSlow2}
	\frac{\min\{\alpha_{\text{RSC}},1\}}{4} \Big(\smalliii{\Delta_B}_{\F}^2 + \smalliii{\Delta_\Theta/\sqrt{T}}_{\F}^2  \Big) - 16\tau_T \|B^\star_{\mathbb{S}^c}\|^2_1.
	\end{equation}
	
	\paragraph{Part (ii).} Next, we obtain an upper bound for the RHS of~\eqref{eqn:basic2}.  Using the triangle inequality and H\"{o}lder's inequality, the first term satisfies
	\begin{equation}\label{eqn:firstterm}
	\begin{split}
	\frac{1}{T}|\llangle \mathbf{X}_{T-1}\Delta_B^\top,\widehat{\Theta}-\Theta^\star\rrangle| &\leq \frac{1}{T}|\llangle \Delta_{B}^\top,\mathbf{X}_{T-1}^\top\widehat{\Theta}\rrangle| + \frac{1}{T}|\llangle \Delta_{B}^\top,\mathbf{X}_{T-1}^\top\Theta^\star\rrangle| \\
	&\leq  \|\Delta_{B}\|_1 \|\mathbf{X}_{T-1}^\top\widehat{\Theta}/T\|_\infty +  \|\Delta_{B}\|_1  \|\mathbf{X}_{T-1}^\top\Theta^\star/T\|_\infty\\
	& \leq \|\Delta_{B}\|_1 \cdot\smalliii{\mathbf{X}_{T-1}/T}_1\cdot \|\widehat{\Theta}\|_\infty + \|\Delta_{B}\|_1 \cdot\smalliii{\mathbf{X}_{T-1}/T}_1\cdot \|\Theta^\star\|_\infty.
	\end{split}
	\end{equation}
	Using the fact that both $\Theta^\star$ and $\widehat{\Theta}$ are feasible and satisfy the box constraint $\Theta\in\mathbb{B}_\infty(\phi,\mathbf{X}_{T-1})$, the RHS of~\eqref{eqn:firstterm} is upper bounded by $\tfrac{2\phi}{\sqrt{Tp}}\cdot \|\Delta_{B}\|_1$; thus, by choosing $\lambda_B\geq 4\phi/\sqrt{Tp}$,
	we have
	\begin{equation*}
	\frac{1}{T}|\llangle \mathbf{X}_{T-1}\Delta_B^\top,\widehat{\Theta}-\Theta^\star\rrangle| \leq  \frac{\lambda_B}{2}\smallii{\Delta_B}_1.
	\end{equation*}
	With~\eqref{eqn:holder1} and~\eqref{eqn:holder2}, by choosing $\lambda_B \geq 2\|\mathbf{X}_{T-1}^\top \mathbf{E}/T\|_\infty + 4\phi/\sqrt{Tp}$ and $\lambda_\Theta\geq \Lambda^{1/2}_{\max}(S_{\mathbf{E}})$, the following upper bound holds for the RHS of~\eqref{eqn:basic2}:
	{\small
		\begin{align}
		&\lambda_B \|\Delta_B\|_1 + \lambda_\Theta\|\Delta_{\Theta}\|_*  +  \lambda_B\big( \mathcal{R}(\Delta_{B|\mathbb{S}},\Delta_{\Theta|\mathbb{M}}) - \mathcal{R}(\Delta_{B|\mathbb{S}^c},\Delta_{\Theta|\mathbb{M}^\perp}) + 2\mathcal{R}(B^\star_{|\mathbb{S}^c}, \Theta^\star_{|\mathbb{M}^\perp}) \big) \notag \\
		\stackrel{(1)}{\leq} ~ & \lambda_B\big( \|\Delta_{B|\mathbb{S}}\|_1 + \|\Delta_{B|\mathbb{S}^c}\|_1\big) + \lambda_\Theta\big( \smalliii{\tfrac{\Delta_{\Theta|\mathbb{M}}}{\sqrt{T}}}_* +  \smalliii{\tfrac{\Delta_{\Theta|\mathbb{M}^\perp}}{\sqrt{T}}}_*\big) \notag\\
		& + \lambda_B\big( \mathcal{R}(\Delta_{B|\mathbb{S}},\Delta_{\Theta|\mathbb{M}}) - \mathcal{R}(\Delta_{B|\mathbb{S}^c},\Delta_{\Theta|\mathbb{M}^\perp}) \big)+ 2\lambda_B\mathcal{R}(B^\star_{\mathbb{S}^c}, \Theta^\star_{\mathbb{M}^\perp}) \notag \\
		\stackrel{(2)}{\leq} ~ & 2\lambda_B \|\Delta_{B|\mathbb{S}}\|_1 + 2\lambda_\Theta\smalliii{\tfrac{\Delta_{\Theta|\mathbb{M}}}{\sqrt{T}}}_* + 2\lambda_B\mathcal{R}(B^\star_{\mathbb{S}^c}, \Theta^\star_{\mathbb{M}^\perp}) = 2\lambda_B\mathcal{R}(\Delta_{B|\mathbb{S}},\Delta_{\Theta|\mathbb{M}}) + 2\lambda_B\mathcal{R}(B^\star_{\mathbb{S}^c}, \Theta^\star_{\mathbb{M}^\perp})  \notag\\
		\stackrel{(3)}{\leq} ~ &  (2\lambda_B)\sqrt{s}\smalliii{\Delta_B}_{\F} + (2\lambda_\Theta)\sqrt{2K}\smalliii{\Delta_{\Theta}/\sqrt{T}}_{\F} + (2\lambda_B)\|B^\star_{\mathbb{S}^c}\|_1 \label{eqn:RHSup1}
		\end{align}
	}%
	\normalsize
	where (1) uses~\eqref{eqn:Delta1} and~\eqref{eqn:Delta2}; (2) is obtained by writing out $\mathcal{R}(\cdot,\cdot)$ and canceling  terms; and (3) uses~\eqref{eqn:Rup}. 
	\paragraph{Part (iii).} Aligning~\eqref{eqn:LHSlow2} and~\eqref{eqn:RHSup1} then rearranging terms associated with $\Delta_B$ and $\Delta_\Theta$ gives the claimed bound in~\eqref{eqn:eb1}.
\end{proof}

\begin{proof}[Proof of Corollary~\ref{cor:eb}]
	First we note that by the definition of $\mathbb{B}_q(R_q)$, the following holds for the strong support set
	\begin{equation*}\label{eqn:strong-support}
	R_q \geq \sum_{i,j} |B_{ij}|^q \geq \sum_{(i,j)\in S^\star_\eta} |B_{ij}|^q \geq \eta^q s_\eta,
	\end{equation*}
	which then gives $\eta^{-q}R_q$. Further, the following inequality holds for the weak support set:
	\begin{equation*}	\label{eqn:weak-support}
	\sum_{(i,j)\notin S^\star_\eta} |B_{ij}| = \sum_{(i,j)\notin S^\star_\eta} (|B_{ij}|^{q}|B_{ij}|^{1-q}) \leq R_q \eta^{1-q}.
	\end{equation*}
	Setting $\eta = \lambda_B/\alpha'$ and plugging~\eqref{eqn:strong-support} and~\eqref{eqn:weak-support} into~\eqref{eqn:eb1} yields the desired result.
\end{proof}

\subsection{The $\sin\theta$ Error.}
Theorem~\ref{thm:fix-bound-convex} establishes the error bound for the estimated factor hyperplane through the quantity $\Delta_\Theta$. Here we provide an account for the error bound of the space spanned by the estimated latent factors relative to the true underlying one. To that end we derive an error bound for $\sin\theta$ that measures the distance between the estimated factor space and the true factor space. In particular, we focus on analyzing the error between the leading rank-$K$ subspace spanned by $\Theta$ and $\widehat{\Theta}$, although potentially $\widehat{\Theta}$ could span an $r$-dimensional subspace (whenever $r\neq K$) that depends on the value of the selected $\lambda_\Theta$.  

First we note that to examine the $\sin\theta$ error of the estimated factor space is equivalent to examine the $\sin\theta$ distance between $\widehat{U}_K$ and $U^\star_K$, where $\widehat{U}_K$ and $U^\star_K$ are the first $K$ left singular vectors corresponding to $\widehat{\Theta}$ and $\Theta^\star$, respectively. Specifically, the angle between the spaces they span is defined as
\begin{equation}\label{eqn:sintheta}
\theta(\widehat{\mathbf{F}}_K,\mathbf{F}) = \theta(\widehat{U}_K,U^\star_K) := \text{diag}\Big( \cos^{-1} (\bar{\sigma}_1),\cos^{-1}(\bar{\sigma}_2),\dots,\cos^{-1}(\bar{\sigma}_K) \Big), 
\end{equation}
where $\bar{\sigma}_1\geq\bar{\sigma}_2\geq\dots\geq\bar{\sigma}_K\geq 0$ are singular values of $\widehat{U}_K^\top U^\star_K$. The following proposition associates the error of $\sin\theta$ to that of $\Delta_\Theta$.

\begin{proposition}[$\sin\theta$ error of the estimated factor space]\label{prop:space-error} Suppose the estimated factor hyperplane $\widehat{\Theta}\in\R^{n\times p}$ is obtained by solving~\eqref{eqn:optimization0}, whose error is given by $\Delta_{\Theta}=\widehat{\Theta}-\Theta^\star$. Let $\sigma_1$ and $\sigma_K$ be the leading and the smallest nonzero singular values of $\Theta^\star$. The following bound holds for the $\sin\theta$ distance between the estimated and the true factor spaces: 
	\begin{equation}\label{eqn:eb-factorspace}
	\smalliii{\sin\theta(\widehat{\mathbf{F}}_K,\mathbf{F})}_{\F}^2 \leq \frac{2(2\sigma_1 + \smalliii{\Delta_\Theta}_{\op})\min\left\{\sqrt{K}\smalliii{\Delta_\Theta}_{\op},\smalliii{\Delta_\Theta}_{\F}\right\}  }{\sigma_K^2}.
	\end{equation}	
\end{proposition}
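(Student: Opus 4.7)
My approach is to recast the $\sin\theta$ bound as a Davis--Kahan-type spectral perturbation problem for the symmetric Gram matrices $A:=\Theta^\star(\Theta^\star)^\top$ and $\widehat A:=\widehat\Theta\widehat\Theta^\top$, and then expand the Gram perturbation in terms of $\Delta_\Theta$. Because $\Theta^\star$ has rank $K$ with singular values $\sigma_1\ge\cdots\ge\sigma_K>0$, the matrix $U_K^\star$ is exactly the top-$K$ eigenbasis of $A$, whose spectrum is $(\sigma_1^2,\dots,\sigma_K^2,0,\dots,0)$, while $\widehat U_K$ is the top-$K$ eigenbasis of $\widehat A$. The relevant eigen-gap of the \emph{unperturbed} matrix $A$ is therefore $\sigma_K^2-0=\sigma_K^2$; using a form of Davis--Kahan that depends only on this gap avoids the usual $(\sigma_K-\|\Delta_\Theta\|_{\op})^2$ denominator and produces $\sigma_K^2$ directly. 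The convenient starting identity is
\[
\sigma_K^2\,\|\sin\theta(\widehat U_K,U_K^\star)\|_F^2 \;\le\; \mathrm{tr}\bigl((I-\widehat P_K)A\bigr),
\]
which follows from $\|\sin\theta\|_F^2=\mathrm{tr}((I-\widehat P_K)U_K^\star(U_K^\star)^\top)$ together with the PSD inequality $U_K^\star(U_K^\star)^\top\preceq\sigma_K^{-2}A$. Here $\widehat P_K=\widehat U_K\widehat U_K^\top$.

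Next, I would expand the Gram perturbation by writing $\widehat\Theta=\Theta^\star+\Delta_\Theta$:
\[
\widehat A-A \;=\; \Theta^\star\Delta_\Theta^\top+\Delta_\Theta(\Theta^\star)^\top+\Delta_\Theta\Delta_\Theta^\top,
\]
which, together with $\|\Theta^\star\|_{\op}=\sigma_1$ and submultiplicativity, yields
\[
\|\widehat A-A\|_{\op}\le(2\sigma_1+\|\Delta_\Theta\|_{\op})\|\Delta_\Theta\|_{\op},\qquad \|\widehat A-A\|_F\le(2\sigma_1+\|\Delta_\Theta\|_{\op})\|\Delta_\Theta\|_F.
\]

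To bound $\mathrm{tr}((I-\widehat P_K)A)$, I would decompose it as $\mathrm{tr}((I-\widehat P_K)\widehat A)-\mathrm{tr}((I-\widehat P_K)(\widehat A-A))$. The first piece equals $\sum_{i>K}\widehat\sigma_i^2$, which Mirsky's inequality controls by $\|\Delta_\Theta\|_F^2$ (using that the trailing singular values of $\Theta^\star$ are zero). The second piece is a trace pairing that I would bound in two complementary ways: Cauchy--Schwarz gives a factor of $\|\Delta_\Theta\|_F$, while exploiting that $\Theta^\star\Delta_\Theta^\top$ and $\Delta_\Theta(\Theta^\star)^\top$ have rank at most $K$ and invoking nuclear/operator-norm duality gives a factor of $\sqrt{K}\,\|\Delta_\Theta\|_{\op}$; taking the smaller of the two produces the $\min\{\sqrt{K}\|\Delta_\Theta\|_{\op},\|\Delta_\Theta\|_F\}$ factor, with an overall scale $(2\sigma_1+\|\Delta_\Theta\|_{\op})$ inherited from the Gram-perturbation estimate. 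Dividing through by $\sigma_K^2$ then delivers the claimed bound.

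The main obstacle is this final step. To recover a \emph{linear} (rather than quadratic) dependence on $\|\Delta_\Theta\|$ in the numerator one must resolve an implicit quadratic inequality in $\|(I-\widehat P_K)\Theta^\star\|_F$ (this quantity appears on both sides once the trace pieces are bounded tightly through the cross terms of $\widehat A-A$). Separately, to secure the $\sqrt{K}\|\Delta_\Theta\|_{\op}$ alternative inside the minimum rather than a looser $K\|\Delta_\Theta\|_{\op}$, one must carefully exploit the rank-$K$ structure of $\Theta^\star\Delta_\Theta^\top$ and $\Delta_\Theta(\Theta^\star)^\top$ when choosing the H\"older pairing. Once these accounting steps are carried out, combining them with the trace identity of the first step completes the proof.
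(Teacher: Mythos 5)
Your overall strategy---pass to the Gram matrices $A=\Theta^\star(\Theta^\star)^\top$ and $\widehat A=\widehat\Theta\widehat\Theta^\top$, use the \emph{unperturbed} eigen-gap $\sigma_K^2$, and bound $\smalliii{\widehat A-A}$ by $(2\sigma_1+\smalliii{\Delta_\Theta}_{\op})\smalliii{\Delta_\Theta}$---is exactly the mechanism behind the result. The paper, however, does none of this work itself: its proof is a one-line invocation of Theorem~3 of Yu, Wang and Samworth (2015), applied to $\widehat\Theta=\Theta^\star+\Delta_\Theta$ with the gap $\sigma_K^2-\sigma_{K+1}^2=\sigma_K^2$. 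So what you are really attempting is a re-derivation of that cited theorem, and the sketch does not close.

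The gap is in your final step, and it is not merely an accounting issue. In the decomposition $\mathrm{tr}((I-\widehat P_K)A)=\mathrm{tr}((I-\widehat P_K)\widehat A)-\mathrm{tr}((I-\widehat P_K)(\widehat A-A))$, the second trace cannot be controlled by Cauchy--Schwarz or a nuclear/operator H\"older pairing as you describe, because $I-\widehat P_K$ has rank $T-K$ and Frobenius norm $\sqrt{T-K}$; the rank-$K$ structure of the cross terms $\Theta^\star\Delta_\Theta^\top$ only bounds their nuclear norm, and pairing that against $\smalliii{I-\widehat P_K}_{\op}=1$ loses the $\smalliii{\sin\theta}_{\F}$ factor needed to divide out one power and obtain a bound \emph{linear} in $\smalliii{\Delta_\Theta}$. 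The device that makes the argument work replaces $I-\widehat P_K$ by $\widehat P_K-P_K$: using $\mathrm{tr}(\widehat P_K\widehat A)\ge\mathrm{tr}(P_K\widehat A)$ (variational optimality of $\widehat P_K$ for $\widehat A$) together with $\mathrm{tr}((I-P_K)A)=0$ (exact rank $K$ of $\Theta^\star$) yields
\begin{equation*}
\sigma_K^2\,\smalliii{\sin\theta}_{\F}^2 \;\le\; \mathrm{tr}\big((I-\widehat P_K)A\big)\;\le\;\mathrm{tr}\big((\widehat P_K-P_K)(\widehat A-A)\big),
\end{equation*}
and since $\widehat P_K-P_K$ has rank at most $2K$ and Frobenius norm $\sqrt{2}\smalliii{\sin\theta}_{\F}$, H\"older gives $2\smalliii{\sin\theta}_{\F}\min\{\sqrt{K}\smalliii{\widehat A-A}_{\op},\smalliii{\widehat A-A}_{\F}\}$; dividing by $\smalliii{\sin\theta}_{\F}$ produces the claimed right-hand side. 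By contrast, the ``implicit quadratic inequality in $\smallii{(I-\widehat P_K)\Theta^\star}_{\F}$'' that you flag, if actually resolved, leads to the Wedin-type bound $\smalliii{\sin\theta}_{\F}\lesssim\smalliii{\Delta_\Theta}_{\F}/\sigma_K$ --- a different (quadratic-in-$\Delta_\Theta$, $\sigma_1$-free) inequality, not the one stated. As written, the hardest step is identified but not carried out, so the proof is incomplete.
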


The bound in~\eqref{eqn:eb-factorspace} is obtained by considering $\widehat{\Theta}$ as a $\Delta_\Theta$-perturbation of $\Theta^\star$, and the size of the perturbation is upper bounded in Frobenius norm given by $\smalliii{\Delta_{\Theta}/\sqrt{n}}_{\F}$ given in Theorem~\ref{thm:fix-bound-convex}. The stronger the minimum signal is for the true space (i.e., $\sigma_K$), the tighter the $\sin\theta$ error bound will be. Note that for the true space spanned by $\mathbf{F}$, although it is not observable, it can nevertheless be interpreted as a random (but fixed for this specific part of the analysis) realization drawn from the specified VAR model driving the dynamics of $F_t$, which in turn directly influences the evolution of the observable $X_t$ process.

\begin{proof}[Proof of Proposition~\ref{prop:space-error}]
	First, we note that for any given $\widehat{\Theta}=\Theta^\star + \Delta_{\Theta}$, it can be viewed as a $\Delta_\Theta$-perturbation with respect to the true $\Theta^\star$. As mentioned in the main text, as invertible linear transformations preserve the subspace, so does scaling (with a non-zero scale factor), it is equivalent to examining the $\sin\theta$ distance between the first $K$ singular vectors of $\widehat{\Theta}$ and $\Theta^\star$ (denoted by $\widehat{U}$ and $U^\star$, resp.). The rest follows seamlessly from the perturbation theory of singular vectors. Specifically, by applying \citet[Theorem 3]{yu2015useful} and assuming the singular values of $\Theta^\star$ are given by $\sigma_1>\sigma_2>\dots>\sigma_{K}>\sigma_{K+1}=\dots=\sigma_{\min\{T,p\}}=0$, the following bound holds for $\smalliii{\sin(\widehat{U},U^\star)}$:
	\begin{equation*}
	\smalliii{\sin\theta(\widehat{U},U^\star)}_{\F}^2 \leq \frac{2(2\sigma_1 + \smalliii{\Delta_\Theta}_{\op})\min\left\{\sqrt{K}\smalliii{\Delta_\Theta}_{\op},\smalliii{\Delta_\Theta}_{\F}\right\}  }{\sigma_K^2}.
	\end{equation*}
	Note that the same bound holds for the $\sin\theta$ distance between the factor spaces. 
\end{proof}

\section{Proofs for Lemmas. }\label{appendix:proof-of-theory-lemma}

\begin{proof}[Proof of Lemma~\ref{lemma:RSC}]
	First, suppose we have
	\begin{equation}\label{eqn:RSC-stronger}
	\frac{1}{2}v' S_{\mathbf{X}} v = \frac{1}{2}v'\Big( \frac{\mathbf{X}'\mathbf{X}}{T} \Big)v\geq \frac{\alpha_{\text{RSC}}}{2} \|v\|_2^2 - \tau_T\|v\|_1^2,\quad \forall~v\in\R^p;
	\end{equation}
	then, for all $\Delta\in\R^{p\times p}$, and letting $\Delta_j$ denote its $j$th column, the RSC condition automatically holds since
	\begin{equation*}
	\tfrac{1}{2T}\smalliii{\mathbf{X}\Delta}_{\F}^2 = \frac{1}{2}\sum_{j=1}^q \Delta_j'\big( \tfrac{\mathbf{X}'\mathbf{X}}{T} \big) \Delta_j \geq \frac{\alpha_{\text{RSC}}}{2} \sum_{j=1}^q \|\Delta_j\|_2^2 - \tau_T \sum_{j=1}^q \|\Delta_j\|_1^2 \geq \frac{\alpha_{\text{RSC}}}{2}\smalliii{\Delta}_{\F}^2 - \tau_T \|\Delta\|_1^2.
	\end{equation*}
	Therefore, it suffices to verify that~\eqref{eqn:RSC-stronger} holds. In \citet[Proposition 4.2]{basu2015estimation}, the authors prove a similar result under the assumption that $X_t$ is a $\text{VAR}(d)$ process. Here, we adopt the same proof strategy and state the result for a {\em more general process} $X_t$. 
	
	Specifically, by \citet[Proposition 2.4(a)]{basu2015estimation}, $\forall v\in\R^p,\|v\|\leq 1$ and $\eta >0$,
	\begin{equation*}
	\PP \Big[ \big| v'\big(S_{\mathbf{X}}-\Gamma_X(h)\big) v \big| >  2\pi\mathcal{M}(g_X)\eta \Big]\leq 2\eta \exp \Big(-cT\min\{\eta^2,\eta\} \Big). 
	\end{equation*}
	Applying the discretization in \citet[Lemma F.2]{basu2015estimation} and taking the union bound, define $\mathbb{K}(2s):=\{v\in\R^p, \|v\|\leq 1, \|v\|_0\leq 2k\}$, and the following inequality holds:
	\begin{equation*}
	\PP \Big[ \sup\limits_{v\in\mathbb{K}(2k)}\big| v'\big(S_{\mathbf{X}}-\Gamma_X(h)\big) v \big| >  2\pi\mathcal{M}(g_X)\eta \Big]\leq 2\exp\Big( -cT\min\{\eta,\eta^2\} + 2k \min\{\log p, \log (21 ep/2k) \} \Big). 
	\end{equation*}
	With the specified $\gamma=54\mathcal{M}(g_X)/\mathfrak{m}(g_X)$, set $\eta=\gamma^{-1}$, then apply results from \citet[Lemma 12]{loh2012high} with $\Gamma=S_{\mathbf{X}}-\Gamma_X(0)$ and $\delta=\pi\mathfrak{m}(g_X)/27$, so that the following holds
	\begin{equation*}
	\frac{1}{2}v' S_{\mathbf{X}} v \geq \frac{\alpha_{\text{RSC}}}{2}\|v\|^2 - \frac{\alpha_{\text{RSC}}}{2k}\|v\|_1^2,
	\end{equation*}
	with probability at least $1-2\exp\big(-cT\min\{\gamma^{-2},1\} + 2k\log p \big)$ and note that $\min\{\gamma^{-2},1\}=\gamma^{-2}$ since $\gamma>1$. Finally, let $k= \min\{cT\gamma^{-2}/(c'\log p),1\}$ for some $c'>2$, and conclude that with probability at least $1-c_1\exp(-c_2T)$, the inequality in~\eqref{eqn:RSC-stronger} holds with 
	\begin{equation*}
	\alpha_{\text{RSC}} = \pi\mathfrak{m}(g_X), \qquad \tau_T = \alpha_{\text{RSC}}\gamma^2\frac{\log p}{2T},
	\end{equation*}
	and so does also the RSC condition. 
\end{proof}

\begin{proof}[Proof of Lemma~\ref{lemma:boundXE}]
	We note that
	\begin{equation*}
	\frac{1}{T}\smallii{\mathbf{X}_{T-1}^\top \mathbf{E}}_\infty = \max\limits_{1\leq i,j\leq p} \big|e_i^\top \big(\mathbf{X}_{T-1}^\top \mathbf{E}/n\big)e_j\big|,
	\end{equation*}
	where $e_i$ is the $p$-dimensional standard basis with the $i$th entry being 1. Applying \citet[Proposition 2.4(b)]{basu2015estimation}, for an arbitrary pair of $(i,j)$, the following inequality holds:
	\begin{equation*}
	\PP\Big[ \big|e_i^\top \big(\mathbf{X}_{T-1}^\top \mathbf{E}/T\big)e_j \big| > 2\pi \big(\mathcal{M}(g_X) + \mathcal{M}(g_\epsilon) + \mathcal{M}(g_{X,\widetilde{\epsilon}}) \big)\eta\Big] \leq 6\exp\Big( -cT \min\{\eta^2,\eta\}\Big). 
	\end{equation*}	
	Take the union bound over all $1\leq i,j \leq p$, and the following bound holds:
	\begin{equation*}
	\PP \Big[ \max\limits_{1\leq i,j\leq p} \big|e_i^\top \big(\mathbf{X}_{T-1}^\top \mathbf{E}/T\big)e_j\big| > 2\pi \big(\mathcal{M}(g_X) + \mathcal{M}(g_\epsilon) + \mathcal{M}(g_{X,\widetilde{\epsilon}}) \big)\eta\Big] \leq 6\exp \Big( -cn \min\{\eta^2,\eta\} + 2\log p\Big).
	\end{equation*}
	Set $\eta=c'\sqrt{\log p/T}$ for $c'>(2/c)$ and with the choice of $T\succsim \log p$, $\min\{\eta^2,\eta\} = \eta^2$, then with probability at least $1-c_1\exp(-c_2\log p)$, the following bound holds:
	\begin{equation*}
	\frac{1}{T}\vertii{\mathbf{X}_{T-1}^\top \mathbf{E}}_\infty\leq c_0 \big(\mathcal{M}(g_X) + \mathcal{M}(g_\epsilon) + \mathcal{M}(g_{X,\widetilde{\epsilon}}) \big)\sqrt{\frac{\log p}{T}}.
	\end{equation*}
\end{proof}

\begin{proof}[Proof of Lemma~\ref{lemma:Emax}]
	For $\mathbf{E}$ whose rows are iid realizations of a sub-Gaussian random vector $\epsilon_t$, by \citet[Lemma 9]{wainwright2009sharp}, the following bound holds:
	\begin{equation*}
	\mathbb{P}\Big[ \vertiii{S_{\mathbf{E}} - \Sigma_\epsilon }_{op}\geq \Lambda_{\max}(\Sigma_\epsilon) \delta(T,p,\eta)\Big]\leq 2\exp(-T\eta^2/2), 
	\end{equation*}
	where $\delta(T,p,\eta):=2\big(\sqrt{\frac{p}{T}}+\eta\big)+\big(\sqrt{\frac{p}{T}}+\eta\big)^2$. In particular, by triangle inequality, with probability at least $1-2\exp(-T\eta^2/2)$, 
	\begin{equation*}
	\smalliii{S_{\mathbf{E}}}_{op} \leq \smalliii{\Sigma_\epsilon}_{op} + \smalliii{S_{\mathbf{E}}-\Sigma_\epsilon}_{op} \leq \Lambda_{\max}(\Sigma_\epsilon) + \Lambda_{\max}(\Sigma_\epsilon) \delta(T,p,t).
	\end{equation*}
	So for $T\geq p$, by setting $\eta=1$, which yields $\delta(T,p,\eta)\leq 8$ so that with probability at least $1-2\exp(-T/2)$, the following bound holds:
	\begin{equation*}
	\Lambda_{\max}(S_{\mathbf{E}}) \leq 9 \Lambda_{\max}(\Sigma_\epsilon).
	\end{equation*}
\end{proof}

\section{Proofs of Auxiliary Lemmas.}\label{appendix:auxLemma}

In this section, proofs of auxiliary lemmas~\ref{auxLemma:triangle}, \ref{auxLemma:rank} are provided. Variations of these Lemmas have been proved in \citet{negahban2012unified} and \citet{lin2017regularized}; nevertheless, we provide them also here for the sake of completeness.

\begin{proof}[Proof of Lemma~\ref{auxLemma:triangle}]
	With the definition of~\eqref{eqn:projS} and~\eqref{eqn:projM} and that the $\ell_1$ norm and nuclear norm are decomposable, the following bound holds: 
	\begin{equation*}
	\mathcal{R}(B^\star,\Theta^\star) = \smallii{B^\star_{\mathbb{S}} + B^\star_{\mathbb{S}^c}}_1 + \frac{\lambda_\Theta}{\lambda_B}\smalliii{\Theta^\star_{\mathbb{M}} + \Theta^\star_{\mathbb{M}^\perp}}_* = 
	\mathcal{R}(B^\star_\mathbb{S}, \Theta^\star_{\mathbb{M}}) + \mathcal{R}(B^\star_{\mathbb{S}^c}, \Theta^\star_{\mathbb{M}^\perp}).
	\end{equation*}
	It then follows that 
	\begin{equation*}
	\begin{split}
	\mathcal{R}(\widehat{B},\widehat{\Theta}) &= \mathcal{R}(B^\star + \Delta_B,\Theta^\star + \Delta_\Theta) \\
	& = \smallii{B^\star_{\mathbb{S}} + B^\star_{\mathbb{S}^c} + \Delta_{B|\mathbb{S}} + \Delta_{B|\mathbb{S}^c}}_1 + \frac{\lambda_\Theta}{\lambda_B}\smalliii{\Theta^\star_{\mathbb{M}} + \Theta^\star_{\mathbb{M}^\perp} + \Delta_{\Theta|\mathbb{M}} + \Delta_{\Theta|\mathbb{M}^\perp} }_* \\
	& \geq \smallii{B^\star_{\mathbb{S}} + \Delta_{B|\mathbb{S}^c}}_1 - \smallii{\Delta_{B|\mathbb{S}}}_1 + \frac{\lambda_\Theta}{\lambda_B}\Big( \smalliii{\Theta^\star_{\mathbb{M}}+ \Delta_{\Theta|\mathbb{M}^\perp}}_* - \smalliii{\Delta_{\Theta|\mathbb{M}}}_* \Big)  - \mathcal{R}(B^\star_{\mathbb{S}^c}, \Theta^\star_{\mathbb{M}^\perp}) \\
	& \stackrel{(i)}{\geq} \smallii{B^\star_{\mathbb{S}}}_1 + \smallii{\Delta_{B|\mathbb{S}^c}}_1 - \smallii{\Delta_{B|\mathbb{S}}}_1 + \frac{\lambda_\Theta}{\lambda_B}\Big( \smalliii{\Theta^\star_{\mathbb{M}}}_*+ \smalliii{\Delta_{\Theta|\mathbb{M}^\perp}}_* - \smalliii{\Delta_{\Theta|\mathbb{M}}}_* \Big)  - \mathcal{R}(B^\star_{\mathbb{S}^c}, \Theta^\star_{\mathbb{M}^\perp})\\
	& \geq \mathcal{R}(B^\star_{\mathbb{S}}, \Theta^\star_{\mathbb{M}}) + \mathcal{R}(\Delta_{B|\mathbb{S}^c}, \Delta_{\Theta|\mathbb{M}^\perp}) - \mathcal{R}(\Delta_{B|\mathbb{S}}, \Delta_{\Theta|\mathbb{M}}) - \mathcal{R}(B^\star_{\mathbb{S}^c}, \Theta^\star_{\mathbb{M}^\perp}) \\
	& = \mathcal{R}(B^\star, \Theta^\star) + \mathcal{R}(\Delta_{B|\mathbb{S}^c}, \Delta_{\Theta|\mathbb{M}^\perp}) - \mathcal{R}(\Delta_{B|\mathbb{S}}, \Delta_{\Theta|\mathbb{M}}) - 2\mathcal{R}(B^\star_{\mathbb{S}^c}, \Theta^\star_{\mathbb{M}^\perp}).
	\end{split}
	\end{equation*}
	where (i) uses the property of decomposable regularizers. By rearranging, we obtain the desired inequality. 
\end{proof}

\begin{proof}[Proof of Lemma~\ref{auxLemma:rank}]
	Let the SVD of $\Theta^\star$ be given by $\Theta^\star = (U^\star) D (V^\star)^\top$, where both $U^\star$ and $V^\star$ are orthogonal matrices. Assume $\text{rank}(\Theta^\star) = K$. For $\Delta\in\R^{T\times p}$, define $\widetilde{\Delta}$ as below and it is partitioned as:
	\begin{equation*}
	\widetilde{\Delta} := (U^\star)^\top \Delta (V^\star) = \begin{bmatrix}
	\widetilde{\Delta}_{11} & \widetilde{\Delta}_{12} \\ \widetilde{\Delta}_{21} & \widetilde{\Delta}_{22}
	\end{bmatrix}, \qquad \text{where}~~\widetilde{\Delta}_{11}\in\R^{K\times K}.
	\end{equation*}
	Then by further defining 
	\begin{equation*}
	\Delta_{\mathbb{M}} := U^\star \begin{bmatrix}
	\widetilde{\Delta}_{11} & \widetilde{\Delta}_{12} \\ \widetilde{\Delta}_{21} & O
	\end{bmatrix}(V^\star)^\top \qquad \text{and} \qquad \Delta_{\mathbb{M}^\perp} := U^\star \begin{bmatrix}
	O & O \\ O & \widetilde{\Delta}_{22}
	\end{bmatrix}(V^\star)^\top,
	\end{equation*}
	it is straightforward to see that $\Delta_{\mathbb{M}} + \Delta_{\mathbb{M}^\perp} = \Delta$. Moreover, 
	\begin{equation*}
	\text{rank}(\Delta_{\mathbb{M}}) \leq \text{rank}\left(U^\star \begin{bmatrix}
	\widetilde{\Delta}_{11} & \widetilde{\Delta}_{12} \\ O & O
	\end{bmatrix}(V^\star)^\top \right) + \text{rank}\left(U \begin{bmatrix}
	\widetilde{\Delta}_{11} & O \\ \widetilde{\Delta}_{21} & O
	\end{bmatrix}(V^\star)^\top \right) \leq 2K.
	\end{equation*}
\end{proof}

\section{Model identifiability considerations.}\label{appendix:modelID}

In this section, we provided a detailed account of the model identifiability issue, due to the factor space being latent. 

Consider the full identification of the given model $X_t = \Lambda F_t + BX_{t-1} + \epsilon_t$. 
Similar to the analysis in \citet{bai2016estimation} for the informational series of a factor-augmented VAR model, there exist invertible matrices $M_{11}\in\R^{K\times K}$ and $M_{12}\in\R^{K\times p}$ such that 
\begin{equation}\label{eqn:id}
X_t = \Lambda F_t + BX_{t-1} + \epsilon_t = \underbrace{(\Lambda M_{11})}_{\bar{\Lambda}} \underbrace{ (M_{11}^{-1}F_t - M_{11}^{-1}M_{12}X_{t-1})}_{\bar{F}_t} + \underbrace{(B + M_{12})}_{\bar{B}}X_{t-1} + \epsilon_t,
\end{equation}
which are observationally equivalent to the original model. So for the model to be fully identifiable (including the factors), a total number of $K^2+Kp$ restrictions is required. If exact identification of the factors is not required, then $Kp$ restrictions are required to separate the space spanned by $F_t$ from that by $X_{t-1}$. In low dimensional settings with a different model setup, an estimation procedure based on~\eqref{eqn:id} that takes into consideration these $Kp$ restrictions can be carried out in a two step procedure involving estimation followed by rotation, with the aid of $(\mathbf{X}_{T-1}^\top \mathbf{X}_{T-1})^{-1}$ and the associated orthogonal projection operator \citep[see][for details]{bai2016estimation}. In the high-dimensional setting, however, neither auxiliary quantity is properly defined, and hence the above strategy can not be operationalized. As oppose to imposing additional model assumptions that would be stringent and only made for the sake of mathematical convenience, we incorporate these $Kp$ restrictions implicitly and approximately, through the assumption that the amount of interaction between the latent factor space and the lag space is controlled, which manifests itself in the technical developments as the product of the total signal present in these two spaces. In the formulated optimization problem, with properly selected tuning parameters, the global minimizer of the convex program exhibits good statistical behavior in terms  of its error that does {\em not grow} with $p$ or $n$, so that there is adequate control over the performance of the estimator, even though this upper bound of the error does not vanish asymptotically. This represents the price to be paid for handling strongly correlated idiosyncratic components in approximate factor models, under {\em minimal} identifiability restrictions.

\begin{remark}[Illustration on the additional box constraint]\label{remark:boxconstraint}
	In the same spirit as \citet{agarwal2012noisy}, to distinguish the low rank hyperplane $\Theta$ from the lagged space from a theoretical standpoint , we restrict $\Theta$ to be in the constrained set $\varphi_{\mathcal{R}}(\Theta)\leq \phi$, where
	\begin{equation*}
	\varphi_{\mathcal{R}}(\Theta) := \kappa(\mathcal{R}^*)\mathcal{R}^*(\Theta) \smalliii{\mathbf{X}/\sqrt{T}}_{\text{op}}, \qquad \kappa(\mathcal{R}^*) := \sup\limits_{\Theta\neq 0} \big(\frac{\smalliii{\Theta}_\F}{\mathcal{R}^*(\Theta)}\big).
	\end{equation*}
	$\mathcal{R}^*$ is the dual norm of some regularizer $R$. Note that the product $\kappa(\mathcal{R}^*)\mathcal{R}^*(\Theta)$ measures the spikiness of $\Theta$ w.r.t. $\mathcal{R}$; in the setup of interest in this paper, it corresponds to the $\ell_1$-norm regularizer associated with the sparse component; hence $\mathcal{R}^*(\Theta) = \|\Theta\|_\infty$ and $\kappa(\mathcal{R}^*) = \sqrt{Tp}$. This constrained set leads to the box constraint given in~\eqref{eqn:optimization0}.
\end{remark}

\begin{remark}\label{remark:random} 
	For approximate factor models, a large panel size (large $p$) is helpful, since the estimated factors are obtained through cross-sectional aggregation. In particular, as discussed in \citet{chamberlain1983arbitrage} and subsequent work, by assuming that the leading $K$ eigenvalues of $\Sigma_X$ diverge, whereas all eigenvalues of $\Sigma_u$ are bounded, separation between the common factors and the idiosyncratic components is achieved as the panel size $p$ goes to infinity. On the other hand, the Stock-Watson formulation \citep{stock2002forecasting} adopted in our work which accounts explicitly for strong correlations amongst the coordinates of the idiosyncratic component, leads to a high-dimensional sparse regression modeling framework. Hence, the estimates for the time-lags of the $X_t$ process suffer from the curse of dimensionality, if we do not compensate appropriately by an increase in the sample size. Hence, we need to strike a balance between these two competing
	forces. Specifically, when updating the estimate of the factor hyperplane by aggregating cross-sectional information and compress it to a subspace with reduced dimension through the SVD, a larger panel $p$ is helpful. On the other hand, when updating the estimate of the sparse transition matrix, a very
	high $p$ is detrimental, unless appropriately compensated by a larger sample size $n$. In addition, the temporal dependence of the coordinates of the	$X_t$ process along with the presence of the latent factors add further complications. Thus, careful balancing of these competing issues is needed to obtain estimates of the model parameters with adequate error control. 
\end{remark}

\mbox{}
\vfill
\begin{center}
{\large\bf SUPPLEMENTARY MATERIAL}
\end{center}
\begin{description}
\item[Supplement to $d$-lag dependence.] Generalization to $\VAR(d)$ dependence.
\end{description}

\titleformat{\section}{\large\normalfont\centering\bfseries}{}{0.5em}
\clearpage
\setcounter{page}{1}
\pagenumbering{arabic}
\section{Generalization to {$\VAR(d)$} Dependence.}\label{appendix:VARd}

The proposed modeling framework and estimation procedure are easily generalizable to cases where the idiosyncratic error $u_t$ exhibits further into the past temporal dependence, and come with similar theoretical guarantees. Specifically, we use a sparse $\VAR(d)$ model to account for such dependency, that is,
\begin{equation*}
\mathcal{B}_d(L)u_t = \epsilon_t, \qquad \text{where}\qquad \mathcal{B}_d(L) = \I_p - B_1L - \dots - B_dL^d, \quad \epsilon_t\stackrel{\text{i.i.d}}{\sim}\mathcal{N}(0,\sigma_\epsilon^2\I_p).
\end{equation*}
with $B_k$'s assumed sparse. By stacking the lagged values of the factors and the corresponding loading matrices, the dynamic of the observable process $X_t$ can be written in the following form, in terms of the latent static factor $F_t\in\R^K$:
\begin{equation}\label{model:ours-VARd}
X_t = \Lambda F_t + B_1 X_{t-1} + B_2 X_{t-2} + \cdots + B_dX_{t-d} + \epsilon_t.
\end{equation}
Similar to the $\VAR(1)$ case, the condition required for stationarity is the same as cases where $X_t$ were a $\VAR(d)$ process, that is, all roots of $\det(\mathcal{B}_d(z))$ should lie outside the unit circle: $\det(\mathcal{B}_d(z))\neq0$ for all $|z|\leq1$. Note that with the model specification in~\eqref{model:ours-VARd}, the spectral density of $X_t$ takes the following form:
\begin{equation*}
g_X(\omega) = \big[ \mathcal{B}_d^{-1}(e^{i\omega})\big]\Big(\Lambda g_F(\omega)\Lambda^\top + g_\epsilon(\omega) + g_{\epsilon,F}(\omega)\Lambda^\top + \Lambda g_{F,\epsilon}(\omega)\Big)\big[ \mathcal{B}_d^{-1}(e^{i\omega})\big]^*.
\end{equation*} 

\paragraph{Estimation and theoretical guarantees.} Given a snapshot of the realizations from $\{X_t\}$, denoted by $\{x_0,\dots,x_T\}$, we can estimate $\{B_k,k=1,\dots,d\}$ and the factor hyperplane in an analogous way. Specifically, let the contemporaneous response and the lagged predictor matrices be $\mathbf{X}^d_{T}\in \R^{T_d\times p}$ and $\mathbf{X}^d_{T-1}\in \R^{T_d\times dp}$ by stacking the observations in their rows with $T_d = T-d+1$ being the sample size. $\mathbf{E}^d_n$ is similarly defined to $\mathbf{X}^d_{n}$. Further, letting $B:=[B_1,B_2,\dots,B_d]\in\R^{p\times dp}$, then with $\mathbf{F}$ and $\Theta$ identically defined to those in Section~\ref{sec:estimation}, we can write 
\begin{equation*}
\mathbf{X}_T^d = \mathbf{F}\Lambda^\top + \mathbf{X}_{T-1}^d B^\top + \mathbf{E}_n^d.
\end{equation*}
$\widehat{B}$ and $\widehat{\Theta}$ can be obtained by solving an analogously formulated optimization, that is 
\begin{align}\label{opt:VARd}
\begin{split}
(\widehat{B},\widehat{\Theta}):=\argmin\limits_{B\in\R^{p\times dp},\Theta\in \R^{T_d\times K}} \Big\{ \tfrac{1}{2T_d} \smalliii{ \mathbf{X}_T^d - \mathbf{X}^d_{T-1} B^\top - \Theta}_{\F}^2 + \lambda_B\|B\|_1  + \lambda_\Theta\smalliii{\Theta/\sqrt{T_d}}_* \Big\},\\
\text{subject to}~~~~\Theta\in\mathbb{B}_{n_d}(\phi,\mathbf{X}_{T-1}^d).
\end{split}
\end{align}
Empirically at each iteration, $\Theta$ is updated by SVT with hard-thresholding and each row of $B$ is updated via Lasso regression. 

With deterministic realizations based on which we solve the optimization problem, we can obtain essentially the same error bound, with the conditions imposed on the corresponding augmented quantities. Formally, the error bound is given in the next corollary, with a superscript $\star$ associated with the true value of the parameters, $\bunderline{s}:=\sum_{k=1}^d\|B_k^\star\|_0$ being analogously defined as the cardinality of the thresholded support set $\bunderline{S}^\star_\eta := \{(i,j)\,|\,|B_{ij}|>\eta\}$, and $S_\mathbf{X}$ being the sample covariance matrix corresponding to $\mathbf{X}_{T-1}^d$. 
\begin{corollary}[Error bound under $\VAR(d)$ dependence]\label{cor:VARd}
	Suppose the observations stacked in $\mathbf{X}^d_{T-1}$ are deterministic realizations from $\{X_t\}$ process with dynamic given in~\eqref{model:ours-VARd}, and $\mathbf{X}^d_{T-1}$ satisfies the RSC condition with curvature $\alpha_{\text{RSC}}>0$ and a tolerance $\tau_{T_d}$ such that $
	\tau_{T_d}\Big(\bunderline{s} + (2K)\big( \frac{\lambda_{\Theta}}{\lambda_B}\big)^2 \Big) < \min\{\alpha_{\text{RSC}},1\}/64$. 
	Then for any matrix pair $(B^\star,\Theta^\star)$ that drives the dynamic of $X_t$, for estimators $(\widehat{B},\widehat{\Theta})$ obtained by solving~\eqref{opt:VARd} with $\lambda_B$ and $\lambda_\Theta$ chosen such that
	\begin{equation*}
	\lambda_B \geq 2\|(\mathbf{X}^d_{T-1})^\top\mathbf{E}^d_T/T_d\|_\infty+4\phi/\sqrt{T_d(dp)} \quad \text{and} \quad  \lambda_\Theta \geq \Lambda^{1/2}_{\max}(S_{\mathbf{E}}),
	\end{equation*}
	the following error bound holds for some constants $C_1,C_2$ and $C_3$, with $\alpha^\prime:=\min\{\alpha_{\text{RSC}},1\}$:
	{\small
		\begin{equation}
		\smalliii{\Delta_B}_{\F}^2 + \smalliii{\Delta_{\Theta}/\sqrt{T_d}}_{\F}^2 \leq C_1\Big(\frac{\lambda_B}{\alpha^\prime}\Big)^2 \Big\{ \bunderline{s} + \frac{\alpha^\prime}{\lambda_B} \sum_{(i,j)\notin \bunderline{S}^\star_\eta}  |B^\star_{ij}| \Big\} + C_2 \Big(\frac{\lambda_\Theta}{\alpha^\prime}\Big)^2 K + C_3\Big(\frac{\tau_{T_d}}{\alpha^\prime}\Big)(\sum_{(i,j)\notin \bunderline{S}^\star_\eta}  |B^\star_{ij}|)^2.
		\end{equation}}%
\end{corollary}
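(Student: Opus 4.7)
The plan is to mimic the proof of Theorem~\ref{thm:fix-bound-convex} verbatim, reading $\mathbf{X}^d_{T-1}\in\R^{T_d\times dp}$ as the new design matrix, $B=[B_1,\ldots,B_d]\in\R^{p\times dp}$ as the new sparse parameter whose thresholded support $\bunderline{S}^\star_\eta$ has cardinality $\bunderline{s}$, and $T_d$ as the effective sample size. The error quantities are $\Delta_B=\widehat{B}-B^\star\in\R^{p\times dp}$ and $\Delta_\Theta=\widehat{\Theta}-\Theta^\star\in\R^{T_d\times p}$, with the projection subspaces $\mathbb{S},\mathbb{S}^c$ defined relative to $\bunderline{S}^\star_\eta$ and $\mathbb{M},\mathbb{M}^\perp$ defined from the top-$K$ singular subspaces of $\Theta^\star$ exactly as before. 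Auxiliary Lemmas~\ref{auxLemma:triangle} and~\ref{auxLemma:rank} apply verbatim since only the ambient dimension of $B$ has changed and the regularizers $\|\cdot\|_1$ and $\smalliii{\cdot}_*$ remain decomposable.

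First, I would obtain the basic inequality from the optimality of $(\widehat{B},\widehat{\Theta})$ against the feasibility of $(B^\star,\Theta^\star)$, yielding
\begin{equation*}
\tfrac{1}{2T_d}\smalliii{\mathbf{X}^d_{T-1}\Delta_B^\top+\Delta_\Theta}_{\F}^2 \leq \tfrac{1}{T_d}\bigl(\llangle \Delta_B^\top,(\mathbf{X}^d_{T-1})^\top\mathbf{E}^d_T\rrangle + \llangle \Delta_\Theta,\mathbf{E}^d_T\rrangle\bigr) + \lambda_B(\|B^\star\|_1-\|\widehat{B}\|_1) + \lambda_\Theta(\smalliii{\Theta^\star/\sqrt{T_d}}_*-\smalliii{\widehat{\Theta}/\sqrt{T_d}}_*).
\end{equation*}
Expanding the LHS produces the cross term $\tfrac{1}{T_d}\llangle \mathbf{X}^d_{T-1}\Delta_B^\top,\widehat{\Theta}-\Theta^\star\rrangle$, which I would bound via H\"older's inequality by $\|\Delta_B\|_1\cdot\smalliii{\mathbf{X}^d_{T-1}/T_d}_1\cdot(\|\widehat{\Theta}\|_\infty+\|\Theta^\star\|_\infty)$. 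The box constraint $\mathbb{B}_\infty(\phi,\mathbf{X}^d_{T-1})$ bounds this by $2\phi\|\Delta_B\|_1/\sqrt{T_d(dp)}$ — note the $dp$ in the denominator, which is exactly why the stated $\lambda_B$ is chosen with $4\phi/\sqrt{T_d(dp)}$. The noise inner products are handled as in Theorem~\ref{thm:fix-bound-convex}: the first is absorbed via $\lambda_B\geq 2\|(\mathbf{X}^d_{T-1})^\top\mathbf{E}^d_T/T_d\|_\infty+4\phi/\sqrt{T_d(dp)}$, and the second via $\lambda_\Theta\geq\Lambda^{1/2}_{\max}(S_{\mathbf{E}})$ using the duality between nuclear and operator norms.

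Next, I would reproduce the Part~(i)/Part~(ii)/Part~(iii) structure. Using Lemma~\ref{auxLemma:triangle}, the penalty gap is controlled by $\mathcal{R}(\Delta_{B|\mathbb{S}},\Delta_{\Theta|\mathbb{M}})-\mathcal{R}(\Delta_{B|\mathbb{S}^c},\Delta_{\Theta|\mathbb{M}^\perp})+2\mathcal{R}(B^\star_{\mathbb{S}^c},\Theta^\star_{\mathbb{M}^\perp})$, and combining with non-negativity of the basic inequality gives the cone-type bound $\|\Delta_B\|_1\leq 4\mathcal{R}(\Delta_{B|\mathbb{S}},\Delta_{\Theta|\mathbb{M}})+4\mathcal{R}(B^\star_{\mathbb{S}^c},\Theta^\star_{\mathbb{M}^\perp})$. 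Applying $\|\Delta_{B|\mathbb{S}}\|_1\leq\sqrt{\bunderline{s}}\smalliii{\Delta_B}_{\F}$ and Lemma~\ref{auxLemma:rank}, I obtain the analogue of~(\ref{eqn:DeltaB1}), namely
\begin{equation*}
\|\Delta_B\|_1^2 \leq 16\bigl(\bunderline{s}+(2K)(\lambda_\Theta/\lambda_B)^2\bigr)\bigl(\smalliii{\Delta_B}_{\F}^2+\smalliii{\Delta_\Theta/\sqrt{T_d}}_{\F}^2\bigr)+16\bigl(\sum_{(i,j)\notin\bunderline{S}^\star_\eta}|B^\star_{ij}|\bigr)^2.
\end{equation*}
The RSC condition on $\mathbf{X}^d_{T-1}$ together with the assumed smallness of $\tau_{T_d}(\bunderline{s}+(2K)(\lambda_\Theta/\lambda_B)^2)<\min\{\alpha_{\text{RSC}},1\}/64$ then turns the LHS into a curvature term $\tfrac{\alpha'}{4}(\smalliii{\Delta_B}_\F^2+\smalliii{\Delta_\Theta/\sqrt{T_d}}_\F^2)$ minus a $\tau_{T_d}$-weighted weak-support approximation term. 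Aligning with the upper bound $2\lambda_B\sqrt{\bunderline{s}}\smalliii{\Delta_B}_\F+2\lambda_\Theta\sqrt{2K}\smalliii{\Delta_\Theta/\sqrt{T_d}}_\F+2\lambda_B\sum_{(i,j)\notin \bunderline{S}^\star_\eta}|B^\star_{ij}|$ and applying the Cauchy–Schwarz trick $ab\leq a^2/2+b^2/2$ yields the claimed three-term bound.

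The only place where one must be genuinely careful — and the step I expect to be the main obstacle to verify line-by-line — is the bookkeeping around the augmented dimension $dp$. Specifically, the box constraint must scale as $1/\sqrt{T_d(dp)}$ rather than $1/\sqrt{T_dp}$ in order for the identity $\smalliii{\mathbf{X}^d_{T-1}/T_d}_1\cdot\|\Theta\|_\infty\leq\phi/\sqrt{T_d(dp)}\cdot\sqrt{T_d}\smalliii{\mathbf{X}^d_{T-1}/\sqrt{T_d}}_{\op}$ to close properly after passing from $\smalliii{\cdot}_1$ to $\smalliii{\cdot}_{\op}$ via $\smalliii{A}_1\leq\sqrt{m}\smalliii{A}_{\op}$ applied to an $m=dp$-column matrix. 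With this scaling in place, the designated $\lambda_B$ dominates the interaction correctly, and the remainder of the argument goes through unchanged, producing the bound with $s_\eta$ replaced by $\bunderline{s}$ and $T$ replaced by $T_d$.
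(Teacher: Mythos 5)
Your proposal is correct and follows exactly the route the paper intends for this corollary, namely rerunning the proof of Theorem~\ref{thm:fix-bound-convex} verbatim with the augmented design $\mathbf{X}^d_{T-1}\in\R^{T_d\times dp}$, the stacked $B\in\R^{p\times dp}$, support cardinality $\bunderline{s}$, and effective sample size $T_d$. One small correction to your flagged ``obstacle'': the passage from $\smalliii{\mathbf{X}^d_{T-1}/T_d}_1$ to the operator norm uses $\smalliii{A}_1\le\sqrt{m}\smalliii{A}_{\op}$ with $m$ the number of \emph{rows} (i.e.\ $\sqrt{T_d}$, not $\sqrt{dp}$), so the factor $\sqrt{T_d(dp)}$ does not arise from that step but is simply built into the definition of the box constraint $\mathbb{B}_{n_d}(\phi,\mathbf{X}^d_{T-1})$ on the augmented problem, from which it propagates directly into the stated lower bound on $\lambda_B$ and the argument closes as you describe.
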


Next, we verify that the required regularity conditions are satisfied with high probability, and provide high probability bounds for relevant quantities when the data are random realizations from the distribution. Compared with the earlier analyses, a major difference lies in the fact that $X_t$ now has lag-$d$ dependence. However, we note that by considering the stacked transition matrix similar to \citet{basu2015estimation}, each row of $\mathbf{X}_{T-d}^d$ can be viewed as a realization from a $dp$-dimensional process $\bunderline{X}_t$, whose dynamic resembles the previous considered model in Section~\ref{sec:problem}. Specifically, by letting 
\begin{align*}
\bunderline{X}_t &:= \begin{bmatrix}X_t \\ X_{t-1} \\ \vdots  \\ X_{t-d+1}\end{bmatrix}\in\R^{dp}, \qquad 
\quad \bunderline{F}_t : = \begin{bmatrix} F_t \\ 0 \\ \vdots \\ 0 \end{bmatrix}\in\R^{dK}, \qquad 
\bunderline{\epsilon}_t:= \begin{bmatrix} \epsilon_t \\ 0 \\ \vdots \\ 0\end{bmatrix}\in\R^{dp} \\
\end{align*}
and
\begin{align*}
\bunderline{\Lambda} & : = \begin{bmatrix} \Lambda & \mathrm{O} & \dots & \mathrm{O} & \mathrm{O} \\  \mathrm{O} \\ \vdots &  & \mathrm{O}  \\ \mathrm{O} \\ \mathrm{O}
\end{bmatrix}\in\R^{dp\times dK}, \qquad \bunderline{B}:=\begin{bmatrix}
B_1 & B_2 & \cdots & B_{d-1} & B_d \\ \I_p & \mathrm{O} & \cdots & \mathrm{O} & \mathrm{O} \\ \mathrm{O} & \I_p & \cdots & \mathrm{O}  & \mathrm{O} \\ \vdots & \vdots & \ddots & \vdots & \vdots \\ \mathrm{O}& \mathrm{O} & \cdots & \I_p & \mathrm{O} 
\end{bmatrix}\in\R^{dp\times dp},
\end{align*} 
an alternative representation for~\eqref{model:ours-VARd} is given by 
\begin{equation}\label{model:VARd1}
\bunderline{X}_t = \bunderline{\Lambda}\bunderline{F}_t + \bunderline{B}\bunderline{X}_{t-1} + \bunderline{\epsilon}_t.
\end{equation}
Thus, it suffices to verify the RSC condition in an identical way to that in Lemma~\ref{lemma:RSC}, however with the underlying process substituted by $\bunderline{X}_t$; for other quantities such as deviation or the extreme of the eigen-spectrum, the high probability bound should be given based upon $\bunderline{X}_t$ as well. 

\begin{lemma}\label{lemma:RSCVARd} For $\mathbf{X}^d\in\R^{T_d\times dp}$ whose rows are random realizations $\{x_0,\dots,x_{T-1}\}$ of the stable $\{\bunderline{X}_t\}$ process with dynamic given in~\eqref{model:VARd1}. Then there exist positive constants $c_i~(i=0,1,2,3)$ such that with probability at least $1-c_1\exp(-c_2T)$, the RSC condition holds for $\mathbf{X}^d$ with curvature $\alpha_{\text{RSC}}$ and tolerance~$\tau_{T_d}$ satisfying
	\begin{equation*}
	\alpha_{\text{RSC}} = \pi\mathfrak{m}(g_{\bunderline{X}}), \qquad \tau_{T_d} = \alpha_{\text{RSC}}\gamma^2 \frac{\log dp}{2T_d},
	\end{equation*}
	where $\gamma:=54\mathcal{M}(g_{\bunderline{X}})/\mathfrak{m}(g_{\bunderline{X}})$, provided that $n_d\succsim \log (dp)$. 
\end{lemma}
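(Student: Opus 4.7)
The plan is to reduce the $\VAR(d)$ case to the $\VAR(1)$ analysis already carried out in Lemma~\ref{lemma:RSC} by exploiting the companion-form representation in~\eqref{model:VARd1}. Since $\bunderline{X}_t\in\R^{dp}$ evolves according to a stable $\VAR(1)$ recursion with transition operator $\bunderline{B}$ and stacked innovation $\bunderline{\epsilon}_t$, the process is jointly stationary whenever the original $\det(\mathcal{B}_d(z))\neq 0$ for $|z|\leq 1$, and its spectral density $g_{\bunderline{X}}(\omega)$ is well-defined with $\mathfrak{m}(g_{\bunderline{X}})>0$ and $\mathcal{M}(g_{\bunderline{X}})<\infty$. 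Each row of $\mathbf{X}^d$ is a realization of $\bunderline{X}_t$, so the sample Gram matrix $S_{\mathbf{X}^d}=(\mathbf{X}^d)^\top\mathbf{X}^d/T_d$ is exactly the sample covariance of the stable $dp$-dimensional process to which the machinery of Lemma~\ref{lemma:RSC} applies.

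First, I would reduce the matrix RSC inequality to the vector version: exactly as in the proof of Lemma~\ref{lemma:RSC}, the bound $\tfrac{1}{2T_d}\smalliii{\mathbf{X}^d\Delta}_{\F}^2\geq \tfrac{\alpha_{\text{RSC}}}{2}\smalliii{\Delta}_{\F}^2-\tau_{T_d}\|\Delta\|_1^2$ follows column by column from the quadratic-form inequality
\begin{equation*}
\tfrac{1}{2}v^\top S_{\mathbf{X}^d}v \geq \tfrac{\alpha_{\text{RSC}}}{2}\|v\|_2^2-\tau_{T_d}\|v\|_1^2,\qquad \forall\,v\in\R^{dp}.
\end{equation*}
Next, I would invoke Proposition~2.4(a) of Basu--Michailidis applied to $\bunderline{X}_t$, yielding for any unit vector $v$ and $\eta>0$ the deviation bound $\PP[|v^\top(S_{\mathbf{X}^d}-\Gamma_{\bunderline{X}}(0))v|>2\pi\mathcal{M}(g_{\bunderline{X}})\eta]\leq 2\exp(-cT_d\min\{\eta^2,\eta\})$. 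A standard discretization of the sparse sphere $\mathbb{K}(2k)=\{v\in\R^{dp}:\|v\|_2\leq 1,\|v\|_0\leq 2k\}$ (as in Lemma~F.2 of Basu--Michailidis) followed by a union bound over the $\binom{dp}{2k}$ sparse support patterns gives
\begin{equation*}
\PP\Big[\sup_{v\in\mathbb{K}(2k)}|v^\top(S_{\mathbf{X}^d}-\Gamma_{\bunderline{X}}(0))v|>2\pi\mathcal{M}(g_{\bunderline{X}})\eta\Big]\leq 2\exp\!\big(-cT_d\min\{\eta^2,\eta\}+2k\log(dp)\big).
\end{equation*}

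Finally, I would set $\eta=\gamma^{-1}$ with $\gamma=54\mathcal{M}(g_{\bunderline{X}})/\mathfrak{m}(g_{\bunderline{X}})$, apply Lemma~12 of Loh--Wainwright with $\delta=\pi\mathfrak{m}(g_{\bunderline{X}})/27$ to convert the sparse-set deviation bound into the desired RSC inequality with $\alpha_{\text{RSC}}=\pi\mathfrak{m}(g_{\bunderline{X}})$ and tolerance $\tau_{T_d}=\alpha_{\text{RSC}}/(2k)$, and choose $k\asymp T_d\gamma^{-2}/\log(dp)$, which under the assumption $T_d\succsim \log(dp)$ yields $\tau_{T_d}=\alpha_{\text{RSC}}\gamma^2\log(dp)/(2T_d)$ with failure probability at most $c_1\exp(-c_2T_d)$. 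The only substantive thing to check is that the companion-form innovation $\bunderline{\epsilon}_t$ being rank-deficient does not void the argument: this is handled entirely through $g_{\bunderline{X}}$, which inherits boundedness from the stability of $\mathcal{B}_d$ and the positivity of $\sigma_\epsilon^2$, so $\mathfrak{m}(g_{\bunderline{X}})>0$ along the directions seen by the observed process --- I expect this spectral-gap bookkeeping to be the only delicate point, while all other steps mirror the $\VAR(1)$ proof verbatim with $p$ replaced by $dp$.
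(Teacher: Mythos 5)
Your overall route is exactly the one the paper takes: the paper gives no separate argument for Lemma~\ref{lemma:RSCVARd}, asserting only that the verification is ``identical'' to Lemma~\ref{lemma:RSC} with the process replaced by the companion-form $\bunderline{X}_t$, and your column-wise reduction, the appeal to Proposition~2.4(a) and the discretization lemma of Basu--Michailidis, and Lemma~12 of Loh--Wainwright with $\eta=\gamma^{-1}$ reproduce the proof of Lemma~\ref{lemma:RSC} verbatim with $p$ replaced by $dp$.

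The substantive problem is precisely the point you flag in your last sentence and then dismiss. For $d\geq 2$ the lag-stacked process has spectral density
\begin{equation*}
g_{\bunderline{X}}(\omega)=\bigl(b(\omega)b(\omega)^{*}\bigr)\otimes g_X(\omega),\qquad b(\omega):=(1,e^{-i\omega},\dots,e^{-i(d-1)\omega})^\top,
\end{equation*}
which has rank at most $p<dp$ at \emph{every} frequency. Hence $\mathfrak{m}(g_{\bunderline{X}})=0$ identically, the advertised curvature $\alpha_{\text{RSC}}=\pi\mathfrak{m}(g_{\bunderline{X}})$ is vacuous, $\gamma$ is infinite, and your claim that $\mathfrak{m}(g_{\bunderline{X}})>0$ ``along the directions seen by the observed process'' cannot rescue the argument, since the RSC inequality must hold for every $v\in\R^{dp}$. (The paper inherits the same defect: its stated lower bound $\mathfrak{m}(g_{\bunderline{X}})\geq\Lambda_{\min}(\Sigma_{\bunderline{v}})/(2\pi\mu_{\max}(\mathcal{B}_d))$ is also vacuous because $\Sigma_{\bunderline{v}}$ is singular.) The correct repair, which is how Basu and Michailidis actually treat the companion form, is to bound the population Gram matrix directly rather than pointwise in frequency: writing $v=(v_0^\top,\dots,v_{d-1}^\top)^\top$ and $w(\omega)=\sum_k e^{ik\omega}v_k$, Parseval gives $v^\top\Gamma_{\bunderline{X}}(0)v=\int_{-\pi}^{\pi}w(\omega)^{*}g_X(\omega)w(\omega)\,d\omega\geq \mathfrak{m}(g_X)\cdot 2\pi\|v\|^2$, so the curvature should be $\pi\mathfrak{m}(g_X)$ in terms of the \emph{original} $p$-dimensional process, while the upper extreme needed for the deviation step satisfies $\mathcal{M}(g_{\bunderline{X}})=d\,\mathcal{M}(g_X)<\infty$ and poses no difficulty. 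With $\mathfrak{m}(g_{\bunderline{X}})$ replaced by $\mathfrak{m}(g_X)$ in $\alpha_{\text{RSC}}$ and in $\gamma$, the remainder of your argument goes through unchanged.
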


\begin{lemma}\label{lemma:boundXEVARd} There exist positive constants $c_i~(i=0,1,2)$ such that for sample size $T_d\succsim \log (dp)$, with probability at least $1-c_1\exp(-c_2\log (dp))$, the following bound holds:
	\begin{equation*}
	\|(\mathbf{X}^d_{T-1})^\top\mathbf{E}^d_T/T_d\|_\infty \leq c_0\Big(\mathcal{M}(g_{\bunderline{X}}) + \mathcal{M}(g_{\epsilon}) + \mathcal{M}(g_{\bunderline{X},\epsilon}) \Big)\sqrt{\frac{\log (dp)}{T}}.
	\end{equation*}
\end{lemma}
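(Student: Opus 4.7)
The plan is to follow the same strategy as the VAR(1) case in Lemma~\ref{lemma:boundXE}, letting the lag-stacking device introduced in Lemma~\ref{lemma:RSCVARd} do essentially all the work of passing from VAR($d$) to VAR(1). First I would rewrite
$$\|(\mathbf{X}^d_{T-1})^\top \mathbf{E}^d_T / T_d\|_\infty = \max_{1\le i\le dp,\ 1\le j\le p} \big|e_i^\top\big((\mathbf{X}^d_{T-1})^\top \mathbf{E}^d_T / T_d\big)e_j\big|,$$
and view the rows of $\mathbf{X}^d_{T-1}$ as realizations of the $dp$-dimensional stable companion process $\bunderline{X}_t$ defined in~\eqref{model:VARd1}, while the rows of $\mathbf{E}^d_T$ remain realizations of the $p$-dimensional Gaussian white noise $\epsilon_t$. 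Under this identification, $e_i^\top \bunderline{X}_t$ and $e_j^\top \epsilon_t$ are two jointly stationary centered Gaussian scalar processes whose spectra and cross-spectrum are upper controlled, coordinate-wise, by $\mathcal{M}(g_{\bunderline{X}})$, $\mathcal{M}(g_\epsilon)$, and $\mathcal{M}(g_{\bunderline{X},\epsilon})$, respectively.

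Next I would apply Proposition 2.4(b) of \citet{basu2015estimation} to each fixed pair $(i,j)$ to obtain
$$\mathbb{P}\Big[\,\big|e_i^\top\big((\mathbf{X}^d_{T-1})^\top \mathbf{E}^d_T / T_d\big)e_j\big| > 2\pi\big(\mathcal{M}(g_{\bunderline{X}}) + \mathcal{M}(g_\epsilon) + \mathcal{M}(g_{\bunderline{X},\epsilon})\big)\eta\Big] \le 6\exp\big(-c\,T_d \min\{\eta^2,\eta\}\big),$$
and then take a union bound over the at most $dp\cdot p\le (dp)^2$ index pairs. Setting $\eta = c'\sqrt{\log(dp)/T_d}$ with $c'$ sufficiently large that $c(c')^2 > 4$, and using $T_d \succsim \log(dp)$ so that $\min\{\eta^2,\eta\} = \eta^2$, yields a tail probability of order $\exp(-c_2\log(dp))$, which matches the claimed bound.

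The only non-routine point is verifying that the three spectral quantities in the stacked formulation line up with those stated in the lemma. For $\mathcal{M}(g_{\bunderline{X}})$ this is immediate since $\bunderline{X}_t$ is stable by the same root condition as $\{X_t\}$; for the cross-spectral term, because the $\epsilon_t$ coordinates only enter the first $p\times p$ block of $\bunderline{X}_t$, and $\mathbf{E}^d_T$ is aligned with $\mathbf{X}^d_{T-1}$ (lagged by one), the relevant cross-spectral density is $e^{-i\omega}g_{\bunderline{X},\epsilon}(\omega)$, whose maximum singular value extremum equals $\mathcal{M}(g_{\bunderline{X},\epsilon})$, exactly as in the remark following Lemma~\ref{lemma:boundXE}. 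No substantive new obstacle arises beyond the dimension in the union bound growing from $p$ to $dp$, which is absorbed into the $\log(dp)$ factor.
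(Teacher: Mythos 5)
Your proposal is correct and follows essentially the same route the paper takes: the paper itself only remarks that the bound "should be given based upon $\bunderline{X}_t$," i.e., one reruns the proof of Lemma~\ref{lemma:boundXE} verbatim with the companion process $\bunderline{X}_t$ in place of $X_t$, exactly as you do, with the union bound now over $dp\cdot p$ pairs and $\eta \asymp \sqrt{\log(dp)/T_d}$. Your handling of the cross-spectral term via the one-step shift matches the remark following Lemma~\ref{lemma:boundXE}, so no gap remains.
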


Note that with the definition of $\bunderline{\epsilon}_t$, $\mathcal{M}(g_{\bunderline{X},\bunderline{\epsilon}}) = \mathcal{M}(g_{\bunderline{X},\epsilon})$. Moreover, let $\bunderline{v}_t := \bunderline{\Lambda}\bunderline{F}_t + \bunderline{\epsilon}_t$, then $\bunderline{X}_t = \bunderline{B} \bunderline{X}_t + \bunderline{v}_t $. The following bounds hold for $\mathcal{M}(g_{\bunderline{X}})$ and $\mathfrak{m}(g_{\bunderline{X}})$ \citep{basu2015estimation}:
\begin{equation*}
\mathcal{M}(g_{\bunderline{X}}) \leq \frac{1}{2\pi}\frac{\Lambda_{\max}(\Sigma_{\bunderline{v}})}{\mu_{\min}(\bunderline{\mathcal{B}})}, \qquad \text{and} \qquad \mathfrak{m}(g_{\bunderline{X}})\geq \frac{1}{2\pi}\frac{\Lambda_{\min}(\Sigma_{\bunderline{v}}) }{\mu_{\max}(\mathcal{B}_d)},
\end{equation*}
where we define $\mu_{\max}(\mathcal{B}_d):= \max_{|z|=1} \Lambda_{\max}\Big( \big(\mathcal{B}_d(z)\big)^*\big(\mathcal{B}_d(z)\big) \Big)$, $\mu_{\min}(\bunderline{\mathcal{B}}):=\min\limits_{|z|=1} \Lambda_{\min}\Big( \bunderline{\mathcal{B}}(z)^*\bunderline{\mathcal{B}}(z)  \Big)$, with $\bunderline{\mathcal{B}}(z) = \I_{dp}-\bunderline{B}z$.
\end{document}